\documentclass[a4paper,UKenglish,cleveref, autoref, thm-restate]{lipics-v2021}

\usepackage{thmtools} 
\usepackage{thm-restate}
\usepackage{amsfonts,amssymb,thmtools}
\usepackage{bbm}
\usepackage{stmaryrd}
\usepackage{bbding}
\usepackage{pifont}
\usepackage{dsfont}

\usepackage{xcolor, soul}

\usepackage{multicol}

\usepackage{graphicx}

\usepackage{paralist}
\usepackage{enumitem}
\usepackage{wrapfig}
\usepackage{xspace}

\usepackage{tikz}
\usetikzlibrary{arrows,positioning,shapes,patterns,decorations.pathreplacing,automata,backgrounds,petri,fit,calc,shapes.multipart}

\usepackage{hyperref}
\usepackage{cleveref}

\nolinenumbers
\numberwithin{theorem}{section}

\counterwithin{theorem}{section}
\counterwithin{lemma}{section}
\counterwithin{corollary}{section}
\counterwithin{proposition}{section}
\counterwithin{exercise}{section}
\counterwithin{definition}{section}
\counterwithin{conjecture}{section}
\counterwithin{remark}{section}
\counterwithin{note}{section}
\counterwithin{claim}{section}
\counterwithin{figure}{section}
\counterwithin{example}{section}

\theoremstyle{plain}

\theoremstyle{definition}
\newtheorem{requirement}{Requirement}
\newtheorem{mydefinition}[theorem]{Definition}

\theoremstyle{remark} 
\newtheorem{myexample}[theorem]{Example}



  \DeclareFontFamily{U}{dutchcal}{\skewchar \font =45}
  \DeclareFontShape{U}{dutchcal}{m}{n}{
    <-> dutchcal-r}{}
  \DeclareFontShape{U}{dutchcal}{b}{n}{
    <-> dutchcal-b}{}
  \DeclareMathAlphabet{\mdutchcal}{U}{dutchcal}{m}{n}
  \SetMathAlphabet{\mdutchcal}{bold}{U}{dutchcal}{b}{n}
  \DeclareMathAlphabet{\mdutchbcal} {U}{dutchcal}{b}{n}

\newcommand{\commentout}[1]{}

\newcommand{\e}{\varepsilon}
\newcommand{\norm}[1]{\lVert{#1}\rVert}
\newcommand{\bb}[1]{\mathbb{#1}}

\newcommand{\cost}{\mathrm{cost}}

\newcommand{\defeq}{\stackrel{\text{def}}{=}}
\newcommand{\nonnegR}{\mathbb{R}_{\ge 0}}
\newcommand{\powerset}{\ensuremath{\mdutchcal{P}}}

\newcommand{\diam}{\mathrm{diam}}

\renewcommand{\vec}[1]{\mathbf{#1}}
\newcommand{\edgraph}[1]{\ensuremath{{G}_{{#1}}}}
\newcommand{\cyclicedgraph}[1]{\ensuremath{{G}^{\circlearrowleft}_{#1}}}

\newcommand{\dir}{\mathrm{dir}}
\newcommand{\LP}{\mathrm{LP}}
\newcommand{\OPT}{\mathrm{OPT}}
\newcommand{\Cone}{\mathrm{Cone}}

\newcommand{\setX}{\textsc{x}}
\newcommand{\setY}{\textsc{y}}
\newcommand{\ep}{p}
\newcommand{\indicator}{\mathbbm{1}}

\newcommand{\Parikh}{\textsl{Parikh}}

\newcommand{\ve}[1]{\textbf{#1}}
\newcommand{\vecInd}[2]{\textbf{#1}_{#2}}
\newcommand{\vecIndEnt}[3]{\textbf{#1}_{#2}[{#3}]}

\newcommand{\langop}[1]{\mdutchcal{#1}}

\newcommand{\essence}{\langop{E}}

\newcommand{\essenceFa}{\essence_{\textsc{fa}}}
\newcommand{\essenceRegex}{\essence_{\textsc{reg}}}

\newcommand{\equivD}{\equiv_{\metric{D}}}

\newcommand{\elementmetric}[1]{\textsf{#1}}

\newcommand{\ned}{\elementmetric{ned}}
\newcommand{\ed}{\elementmetric{ed}}
\newcommand{\ged}{\elementmetric{ged}}
\newcommand{\ced}{\elementmetric{ced}}

\newcommand{\prf}{\elementmetric{prf}}

\newcommand{\hamming}{\elementmetric{hamming}}
\newcommand{\lcs}{\elementmetric{lcs}}

\newcommand{\wordlang}[1]{\textbf{\textsf{#1}}} 
\newcommand{\metric}[1]{\mathcal{#1}} 
\newcommand{\lift}[1]{\mathbb{#1}} 
\newcommand{\rlift}[1]{\mathbb{#1}^{\,\vartriangleright}} 

\newcommand{\sema}[1]{\llbracket #1 \rrbracket}

\newcommand{\dr}{\ensuremath{\rlift{H}_{\elementmetric{d}}}}
\newcommand{\dar}{\ensuremath{\rlift{AH}_{\elementmetric{d}}}}
\newcommand{\dAH}{\ensuremath{\lift{AH}_{\elementmetric{d}}}}

\newcommand{\edH}{\ensuremath{\lift{H}_{\ed}}}

\newcommand{\nedar}{\ensuremath{\rlift{AH}_{\ned}}}
\newcommand{\nedH}{\ensuremath{\lift{H}_{\ned}}}
\newcommand{\nedAH}{\ensuremath{\lift{AH}_{\ned}}}

\newcommand{\gedAH}{\ensuremath{\lift{AH}_{\ged}}}
\newcommand{\cedAH}{\ensuremath{\lift{AH}_{\ced}}}

\newcommand{\iined}{\ensuremath{\lift{I}_{\ned}}}
\newcommand{\iid}{\ensuremath{\lift{I}_{\elementmetric{d}}}}
\newcommand{\ssned}{\ensuremath{\lift{S}_{\ned}}}
\newcommand{\ssd}{\ensuremath{\lift{S}_{\elementmetric{d}}}}

\newcommand{\hned}{\ensuremath{\lift{H}_{\ned}}}
\newcommand{\hed}{\ensuremath{\lift{H}_{\ed}}}
\newcommand{\ahned}{\ensuremath{\lift{AH}_{\ned}}}

\newcommand{\hprf}{\ensuremath{\lift{H}_{\prf}}}

\newcommand{\jac}{\ensuremath{\metric{J}}}
\newcommand{\jaccard}{{\jac}}
\newcommand{\jaccardind}{{\bar{\jac}}}
\newcommand{\acost}{\ensuremath{\metric{AC}^{\vartriangleright}}}
\newcommand{\acostsym}{\ensuremath{\metric{AC}}}

\newcommand{\cesjac}{\ensuremath{\metric{J}_\mathcal{C}}}
\newcommand{\cj}{\cesjac}
\newcommand{\dsj}{\ensuremath{\metric{J}_\mathcal{DS}}}
\newcommand{\ir}{\ensuremath{\metric{I}}}

\newcommand{\predicate}{\varphi}

\newcommand{\dgamma}{d}
\newcommand{\swap}[2]{\ensuremath{\left[\begin{smallmatrix} #1 \\ #2 \end{smallmatrix}\right]}}
\newcommand{\wgt}{\textsf{wgt}}
\newcommand{\len}{\textsf{len}}

\newcommand{\pspace}{\textsc{Pspace}\xspace}
\newcommand{\coNExp}{\textsc{coNExp}\xspace}
\newcommand{\Exp}{\textsc{Exp}\xspace}

\newcommand{\gedar}{\ensuremath{\rlift{AH}_{\ged}}}%
\newcommand{\cedar}{\ensuremath{\rlift{AH}_{\ced}}}%

\newcommand{\xonecol}{black}
\newcommand{\xtwocol}{black}
\newcommand{\yonecol}{black}
\newcommand{\ytwocol}{black}
\newcommand{\ythreecol}{black}

\newcommand{\wordSlice}[2]{#1_{\textsc{#2}}}
\newcommand{\wPref}[1]{\wordSlice{#1}{pref}}
\newcommand{\wSuff}[1]{\wordSlice{#1}{suff}}

\title{Asymptotic Hausdorff and Language Similarity}

\titlerunning{Asymptotic Hausdorff and Language Similarity} 

\author{Dana Fisman}{Institute for the Theory of Computing, Stein Faculty of Computer and Information Science, Ben Gurion University, Israel}{dana@bgu.ac.il}{https://orcid.org/0000-0002-6015-4170}{}

\author{Gal Meirom}{Institute for the Theory of Computing, Stein Faculty of Computer and Information Science, Ben Gurion University, Israel}{galmeirom@gmail.com}{https://orcid.org/0009-0009-4984-4179}{Supported by ISF grant 2507/21 and Frankel Center for Computer Science, BGU}

\authorrunning{Dana Fisman and Gal Meirom}

\ccsdesc{Theory of computation~Formal languages and automata theory}
\ccsdesc{Theory of computation~Design and analysis of algorithms}

\keywords{Automata theory, formal Languages, Metric Spaces, Language similarity, Edit Distance, asymptotic Analysis}

\funding{}

\acknowledgements{We thank Dror Fried, Guy Ofek, Omer Shachar and Gera Weiss for helpful comments on an early draft of this paper.}

\Copyright{Dana Fisman, Gal Meirom}

\EventEditors{Sayan Bhattacharya, Danupon Nanongkai, Michael Benedikt, and Gabriele Puppis}
\EventNoEds{4}
\EventLongTitle{53rd International Colloquium on Automata, Languages, and Programming (ICALP 2026)}
\EventShortTitle{ICALP 2026}
\EventAcronym{ICALP}
\EventYear{2026}
\EventDate{July 7--10, 2026}
\EventLocation{Royal Holloway, University of London, Egham, United Kingdom}
\EventLogo{}
\SeriesVolume{374}
\ArticleNo{188}

\begin{document}

\maketitle

\begin{abstract}
We introduce the \emph{Asymptotic Hausdorff} lifting, denoted $\lift{AH}_{d}$, a general method for lifting an element-level metric $d$ to a (pseudo-) metric on sets, that captures asymptotic similarity in infinite domains equipped with a notion of size.
The construction is designed to be insensitive to finite deviations and to avoid the limitations of classical Hausdorff-based approaches, which are often overly sensitive to outliers and fail to reflect asymptotic behavior.

Formal languages provide a central motivating instance of this framework, where elements are words and sets are languages.
When applied to normalized edit distances, the Asymptotic Hausdorff lifting yields metric-valued distances between languages that reflect asymptotic edit behavior while preserving metric structure.
We study the equivalence classes of regular languages induced by $\lift{AH}_{d}$ for normalized edit distances $d$, and characterize their asymptotic essence.
Focusing in particular on the normalized edit distance of Marzal and Vidal, $\ned$, we investigate the computation of $\lift{AH}_{\ned}$ for regular languages and for bounded context-free languages.
\end{abstract}

\section{Introduction}
Various applications in formal methods call for a notion of similarity between languages. Such a need arises, for example, in applications of program repair, robustness quantification, and grammatical inference.
In all these settings, similarity between two languages $X$ and $Y$ is grounded in a notion of edit operations required to transform words $x\in X$ into words $y\in Y$. More generally, edit distance itself has been extensively studied in areas such as error-correcting codes, parsing theory, speech recognition, and molecular biology, highlighting its broad relevance.

In the context of robustness, Filliot et al.~\cite{FiliotMRST20} study the computation of 
$\inf_{x\in X}\inf_{y\in Y}\elementmetric{d}(x,y)$, where $\elementmetric{d}(x,y)$ is a cost function for editing $x$ into $y$, implemented by a given weighted transducer.
Similarity functions of this form are also studied by Mohri~\cite{Mohri02}, Samanta et al.~\cite{Samanta13}, and Henzinger et al.~\cite{HenzingerOS14}. The latter also considers the dual quantity 
$\sup_{x\in X}\sup_{y\in Y}\elementmetric{d}(x,y)$.

In repair applications, Benedikt et al.~\cite{BenediktPR11-lics,BenediktPR14} consider the standard edit operations of insertion, substitution, and deletion (with uniform costs) required to transform a string in $X$ into a string in $Y$.
In~\cite{BenediktPR11-lics}, they ask whether one can transform any word in $X$ into a word in $Y$ using a bounded number of edits.
For example, if $X=a^*c^*$ and $Y=a^*bc^*$, 
then at most one edit operation (inserting a $b$) is required to transform any string in $X$ into a string in $Y$.
In contrast, if $X=a^*$ and $Y=(ab)^*$, then there is no bound on the number of edits required to transform a word in $X$ into a word in $Y$.
They note that requiring a uniform bound on the number of edits is a strong restriction, and in subsequent work~\cite{BenediktPR14} they therefore study the \emph{percentage} of letters that need to be edited.
In the latter example, the expected value is $\frac{1}{2}$, since words in $X$ are of the form $a^n$ and such a word requires $\frac{n}{2}$ edit operations to transform into a closest word in $Y$.

In general, one expects such similarity notions to induce a metric or a pseudo-metric.
Some applications, such as repair, are inherently asymmetric; in such cases, one may only require adherence to the triangle inequality
$\metric{D}(X,Z)\leq \metric{D}(X,Y)+\metric{D}(Y,Z)$.
The triangle inequality is nevertheless central, as it ensures alignment with user intuitions, enables compositional reasoning (inferring distances between $X$ and $Z$ from distances between $X,Y$ and $Y,Z$), and is computationally beneficial in optimization and learning tasks by enabling pruning, efficient indexing, and incremental updates.

It turns out that none of the similarity notions used in the works mentioned above constitute a metric.
In this work, we ask whether it is possible to obtain a metric or a pseudo-metric between languages when the underlying similarity between words is based on the rationale of~\cite{BenediktPR14} and captures the \emph{percentage} of edit operations required to transform words in $X$ to words in $Y$ in the limit.
Since we are interested in percentages, we restrict attention to functions returning values in $[0,1]$.
That is, we seek a function {$\metric{D}:\powerset(\Sigma^*)\times\powerset(\Sigma^*)\to[0,1]$} that is a metric (or pseudo-metric) and captures the intuitions underlying~\cite{BenediktPR14}.

We next present several examples that formalize the desired intuitions and requirements  emerging from~\cite{BenediktPR14}.

\begin{requirement}[Percentage]\label{req:percentage}
As discussed above, one may expect $\metric{D}(a^*,(ab)^*)$ to be $\frac{1}{2}$.
Rather than insisting on a specific numerical value, we require a monotonicity property:
\[
\metric{D}(a^*,(a^jb)^*) < \metric{D}(a^*,(a^ib)^*) \quad \text{for all } j>i .
\]
{Intuitively, this reflects that as the percentage of symbols requiring editing increases, so does the induced distance.}
\end{requirement}

\begin{requirement}[Outlier-insensitive, finite-subset indifferent]
Consider repairing $X=b^*$ to $Y=a^*$. Since any word $x\in X$ is of the form $b^n$ and such a word requires at least $n$ edits (i.e. one edit per character), we expect $\metric{D}(b^*,a^*)=1$. 
Consider now repairing $X'=a^*\cup b$ to $Y$.
The percentage of edit operations for words of the form $a^n$ is zero, while the percentage for the word $b$ is $1$. 
Consequently, the supremum of the edit-percentage from words in $X'$
 to 
$Y$ is $1$. This is undesired as it gives the impression that $a^*\cup b$ is as farthest as possible from $a^*$, while we expect a value reflecting they are quite similar. Similarly, we expect repairing $a^*\cup b\cup bb$ to $(acc)^*$ to return $\frac{2}{3}$ and not $1$, since for all but finitely many words, the percentage is $\frac{2}{3}$. That is, we expect $\metric{D}$ to be outlier-insensitive, which we formulate as follows: 
\(
\metric{D}(X,Y)=\metric{D}(X{\cup}F,Y)
\)
for all infinite languages $X,Y$ and finite languages $F$.
Note that this requirement necessarily violates identity of indiscernibles on the full powerset, as it entails $\metric{D}(X{\cup}F,X)=\metric{D}(X,X)=0$. Thus, we seek for a pseudo-metric rather than a metric.\footnote{The \emph{identity of indiscernibles} prescribes that $d(x,y)=0$ iff $x=y$ and a \emph{pseudo-metric} relaxes this condition to require only $d(x,x)=0$.}
\end{requirement}

\begin{requirement}[Bounded-edits insensitivity]
Consider now repairing $X''=a^*ba^*$ to $Y=a^*$.
In contrast to the previous example, \emph{every} word $x\in X''$ requires at least one edit operation to reach a closest word in $Y$ (in fact, exactly one).
However, to capture the percentage nature, we note that the percentage of the number of edits required to transform {$a^nba^\ell$ to $a^{n+\ell}$ diminishes as $n$ or $\ell$} grows. 
Note that this is true also if we consider repairing $a^*b^ka^*$ to $a^*$.
While now every word requires $k$ edits, still as $n$ {or $\ell$} grows to infinity, the $k$ edits required are negligible compared to the length of the word. 
Thus, the following formal requirement emerges from~\cite{BenediktPR14}:
If there exists a bound $k$ such that every $x\in X$ can be transformed into a word in $Y$ with at most $k$ edits {and vice versa}, then
$\metric{D}(X,Y)=0$.
\end{requirement}

\begin{remark}
One may argue that it is desirable to distinguish the language $a^*$ from $a^*\cup b$ or from $a^*ba^*$.
Indeed, there exist language similarity measures that make such distinctions, in particular~\cite{BenediktPR11-icalp}.
In applications where such sensitivity is required, one may combine such a measure with the distance developed here (for example, via a product construction).
In this work, however, we deliberately impose invariance under finite subsets of words and a finite number of edits, as these capture the asymptotic notion of similarity that we and \cite{BenediktPR14} aim to capture.
\end{remark}

To summarize, we seek a language similarity function with a {bounded codomain},
specifically {$\metric{D}:\powerset(\Sigma^*)\times\powerset(\Sigma^*)\to[0,1]$}, that is a \textbf{pseudo-metric}  and satisfies the following  \textbf{requirements}:
\begin{enumerate}
\item \emph{Percentage nature}:
$\metric{D}(a^*,(a^jb)^*) < \metric{D}(a^*,(a^ib)^*)$ for all $j>i.$
\item \emph{Outlier insensitivity}: 
$\metric{D}(X,Y)=\metric{D}(X{\cup}F,Y)$ for all infinite $X,Y$ and finite $F$.

\item \emph{Bounded-edits insensitivity}:
If there exists a bound $k$ such that every $x\in X$ can be transformed into a word in $Y$ with at most $k$ edits {and vice versa}, then
$\metric{D}(X,Y)=0$.
\end{enumerate}
\vspace{1mm}

Since languages are sets of words, and we seek a similarity notion induced by edit distance between words, a natural question is whether there exists a general method to lift a metric $\elementmetric{d}$ on a universe $M$ to a metric $\metric{D}$ on subsets of $M$ that meets these requirements.
A classical lifting scheme is given by the Hausdorff distance.
Let us first recall the common way to define a distance between an element $x$ and a set $Y$. This measure, $\wordlang{d}:M\times \powerset(M)\to\nonnegR$ is defined as $\wordlang{d}(x,Y)=\inf_{y\in Y}\elementmetric{d}(x,y)$, i.e. it measures the distance of $x$ to the closest element $y$ in $Y$.\footnote{Its use in the context of formal languages goes back to~\cite{Wagner74} where $\elementmetric{d}$ is the Levenshtein Edit Distance~\cite{Levenshtein66}.} Next, the {directional (asymmetric)} distance from set $X$ to set $Y$ is defined as $\dr(X,Y)=\sup_{x\in X}\wordlang{d}(x,Y)$, namely it is the distance of the farthest element in $X$ to $Y$. Finally, 
given $\elementmetric{d}:M\times M\to\nonnegR$, the Hausdorff distance with respect to $\elementmetric{d}$ is the function
$\lift{H}_{\elementmetric{d}}:\powerset(M)\times\powerset(M)\to\nonnegR$ defined by 
\  $\lift{H}_{\elementmetric{d}}(X,Y)
=\max\left\{ \dr(X,Y), \dr(Y,X) \right\}$ \ 

i.e. it takes the maximum of the {directional distance from $X$ to $Y$ and in the other direction. 

While the Hausdorff lifting yields a metric and can be used to lift edit distances between words to distances between languages, applying it to the standard Levenshtein edit distance $\ed$~\cite{Levenshtein66} fails to produce a similarity measure with a percentage nature, and the resulting values are unbounded.
For example, $\hed(a^*,a^*\cup bb)=2$ and $\hed(a^*,(ab)^*)=\infty$.

Replacing $\ed$ by a normalized edit distance such as $\ned$~\cite{MarzalV93}, $\ged$~\cite{LiL07}, or $\ced$~\cite{HigueraM08} restores the percentage nature, but does not address a more fundamental limitation:
the Hausdorff distance is inherently sensitive to outliers.\footnote{Formal definitions of $\ed$, $\ned$, $\ged$, and $\ced$ appear in \autoref{subsec:word-metrics}. At a high level, $\ed$ denotes the minimum number of edit operations, whereas $\ned$, $\ged$, and $\ced$ correspond to different normalized variants.}
Indeed, while $\hned(a^*,(ab)^*)=\tfrac{1}{2}$, we have $\hned(a^*,a^*\cup bb)=1$.
This shortcoming is not specific to edit distance, but arises from the supremum-based definition of the Hausdorff construction itself.

The Hausdorff distance is well suited to lifting distances on finite universes or universes where elements are conceived as having the same size.
In contrast, it is size-oblivious and ill suited to infinite universes whose elements admit an unbounded notion of size, as is the case in formal languages, where languages of interest are necessarily infinite and contain words of unbounded length.
Similar limitations of Hausdorff-type constructions have been observed in prior work on language similarity measures, leading to the proposal of various alternative notions of language similarity; we review these in~\autoref{sec:related-work}.

Our main contribution is a new lifting scheme
$\lift{AH}_{\elementmetric{d}}:\powerset(M)\times\powerset(M)\to\nonnegR$, which we term \emph{Asymptotic Hausdorff}.
This construction lifts an element-level (pseudo) metric $\elementmetric{d}$ to a set-level pseudo-metric while remaining insensitive to finite outliers and respecting unbounded growth in element size.
Unlike the classical Hausdorff distance, it is specifically suited to infinite universes equipped with a natural size notion and yields metric-valued distances whose behavior is asymptotically aligned with the underlying element-level distance.

The lifting is defined for element-level metrics satisfying a property we call the \emph{asymptotic separation property}.
We show that normalized edit distances such as $\ned$, $\ged$, and $\ced$ satisfy this property, and that the resulting language distance meets all of our stated requirements.
We further show that this property is satisfied by other common distance functions such as the Euclidean distance, the $L_p$ metric, and in fact  every norm-induced metric. 
This generality suggests applications beyond the formal-languages setting, particularly in domains that employ finite representations or generators of infinitely many objects, equipped with a natural notion of size.

One such application arises in the study of graph spanner constructions.
A spanner is a subgraph of a given weighted graph and thus shares its vertex set with the input.
Spanner constructions are thus naturally viewed as set of objects indexed by graph size.
Any metric $\elementmetric{dist}$ comparing spanners over the same graph—for example, based on stretch, distortion, or sparsity—can be lifted via $\lift{AH}_{\elementmetric{dist}}$ to obtain an asymptotic comparison between spanner constructions, focusing on large-scale behavior while abstracting away finite-size effects.

Another example comes from procedural texture generators in computer graphics.
Such generators are commonly modeled as functions $f : \mathbb{R}^k \to [0,1]$, for $k \in \{2,3\}$, producing a continuous scalar field that is mapped to concrete values such as colors, materials, or block types.
Well-known instances include Perlin noise, simplex noise, Worley noise, and their fractal variants.
Similarity between generators is typically assessed by comparing the finite structures they induce over bounded spatial regions (often called \emph{patches}),
independently of spatial location.
By equipping such patches with an appropriate distance measure that captures their similarity, and using their spatial extent as the size parameter, the lifting $\lift{AH}$ naturally captures asymptotic similarity between texture generators.

After presenting the abstract framework of the Asymptotic Hausdorff lifting, we return to edit-operation-based language similarity measures, {study their properties}, and establish complexity results for computing our primary instance, $\lift{AH}_{\ned}$.
For regular languages, we prove \pspace-hardness and give an approximation algorithm in \coNExp.
For bounded context-free languages (BCFLs), we present an exact algorithm running in \Exp.

All proofs are deferred to the appendix.

\section{Language Similarity Notions in the Literature}\label{sec:related-work}

A variety of notions for measuring similarity between formal languages have been explored in the literature, arising from different motivations.
In the following, we survey these measures through the lens of the requirements identified in the introduction, and highlight limitations that motivate our Asymptotic Hausdorff metric.

\paragraph*{Complexity-based similarity measures}

Since our focus is on formal languages, where (possibly infinite) languages are finitely represented by some computational model, one natural approach is to define a distance measure between languages via a distance between their representations. To ensure that such a measure is insensitive to the particular choice of representation, one may appeal to a canonical representation, when one exists for the class of languages under consideration.  

\subparagraph*{Kolmogorov and automata-size based measures}
In this spirit,~\cite{Kudlek08} proposes using Kolmogorov complexity and defines the Kolmogorov distance between languages $X$ and $Y$ as $\metric{K}(X,Y)=|K(X)-K(Y)|$, where $K(L)$ denotes the Kolmogorov complexity of the language $L$. For regular languages, the minimal DFA can serve as a canonical representation, allowing $K$ to be replaced by the number of DFA states. Alternatively,~\cite{Kudlek08} suggests using the size of a minimal NFA, defined as the sum of its states, initial and final states, and transitions.

It is straightforward to see that these notions induce a pseudo-metric. (Indeed, any function of the form $\metric{S}(X,Y)=|S(X)-S(Y)|$, where $S(L)$ maps languages to $\nonnegR$, induces a pseudo-metric on the space of languages.)
However, $\metric{K}$ and its variants measure differences in the complexity of languages rather than differences between the languages themselves. For example, $\metric{K}(a^*,b^*)=0$, since the two languages are equally simple, even though 
every word in $a^*$ needs a complete rewrite to transform into a word in  $b^*$. In contrast, our goal is to define a pseudo-metric under which $a^*$ and $b^*$ are as far apart as possible.

\paragraph*{Set-theoretic similarity measures}

As mentioned in the introduction, since languages are sets, any metric on sets can be used to induce a metric on languages. 

\subparagraph*{Jaccard}
One of the earliest notions of set similarity, dating back to the 19th century, is the \emph{Jaccard index}, $\jaccardind$, along with its dual notion, the \emph{Jaccard distance} $\jaccard$, which serves as a dissimilarity measure. These are defined as follows:
\begin{equation}
\jaccardind(X,Y)=\tfrac{|X\cap Y|}{|X \cup Y|}
\qquad\qquad
\jaccard(X,Y)=\tfrac{|X\triangle Y|}{|X \cup Y|}
\end{equation}
where $\triangle$ denotes symmetric set difference.
The Jaccard index and distance are undefined when $|X \cup Y|$ is $0$ or $\infty$, and thus are  inapplicable for infinite languages. 

\subparagraph*{Ces\'aro-Jaccard}
For infinite languages, one approach is to consider one of the limits 
\begin{align}\label{eq:non-conv-jaccard-n}
\lim_{n\to\infty}\jaccard(X^{(n)}, Y^{(n)}) & \qquad\text{ or } \qquad
    \lim_{n\to\infty} \jaccard(X^{(\leq n)}, Y^{(\leq n)})
\end{align}
where $L^{(n)}$ 
(resp. $L^{(\leq n)}$) denotes the set of words in $L$ of length $n$ (resp. at most $n$).
However, as shown in~\cite{ParkerYY16}, these limits need not exist. For example, considering the left limit, if $X = a^*$ and $Y = (aa)^*$, then the fraction evaluates to $0$ for even $n$ and to $1$ for odd $n$.

To address this issue,~\cite{ParkerYY16} propose smoothing the sequence using the Ces\`aro average, yielding the following distance measure between languages:
\begin{equation*}\label{eq:cessaro-jaccard}
    \cj(X,Y)=\lim_{n\to\infty}\tfrac{1}{n}\sum_{i=1}^n \jaccard(X^{(\leq i)},Y^{(\leq i)})
    =\lim_{n\to\infty}\tfrac{1}{n}\sum_{i=1}^n \tfrac{|(X\triangle Y)^{(\leq i)}|}{|(X\cup Y)^{(\leq i)}|}
\end{equation*}
They show that $\cj$ is a pseudo-metric.

However, while the Ces\`aro-Jaccard distance successfully addresses the fact that formal-language theory is primarily concerned with infinite languages, and is finite-subset indifferent, it does not incorporate any notion of similarity between individual words.
As a result, both $\cj(ca^*,ba^*)=1$ and $\cj(c^*,b^*)=1$, despite the fact that in the former pair the edit cost per word is uniformly bounded (one edit per word), while in the latter it grows unboundedly with word length.
Our goal, by contrast, is a metric that reflects this asymptotic discrepancy, identifying the first pair as close and the second as far.

\subparagraph*{Discounted-sum Jaccard}
Another approach to addressing the potential non-convergence of the limits in \autoref{eq:non-conv-jaccard-n}
is to employ a discounted-sum construction. This idea was recently proposed in~\cite{BruseHL22}. In particular, they show that 
\begin{equation*}\label{eq:disc-sum-jaccard}
    \dsj^\lambda(X,Y)=(1-\lambda)\sum_{n=0}^{\infty}\lambda^n\jaccard(X^{(n)},Y^{(n)})
    =(1-\lambda)\sum_{n=0}^{\infty}\lambda^n \tfrac{|(X\triangle Y)^{(n)}|}{|(X\cup Y)^{(n)}|}
\end{equation*}
is a pseudo-metric for every $\lambda\in(0,1)$.

However, this construction places greater weight on discrepancies at smaller word lengths, which limits its ability to capture asymptotic percentage behavior. In particular, differences among short words dominate the value of the distance, even when they become negligible relative to word length.
For example, for $\lambda=\tfrac{1}{2}$ we obtain {$\dsj^\lambda(a^{> 2},a^*)= \tfrac{7}{8}$} and $\dsj^\lambda(a^{\leq 2}\cup b^{>2},a^*)=\tfrac{1}{8}$, despite the fact that the former pair differs only on finitely many short words, while the latter exhibits a persistent asymptotic discrepancy.

\subparagraph*{Shortlex vector approach}
Another approach explored in~\cite{Kudlek08} begins by ordering all words over the alphabet using the shortlex order (first by length and then lexicographically). A language $L$ is then represented by an infinite binary vector $v_L \in \{0,1\}^{\omega}$, where $v_L(i)=1$ if and only if the $i$-th word belongs to $L$. The distance between two languages $X$ and $Y$ is defined by applying a chosen distance measure between binary vectors to $v_X$ and $v_Y$.

Since $v_X(i)=v_Y(i)$ precisely when the $i$-th word either belongs to both $X$ and $Y$ or to neither, this construction effectively accounts for words in the symmetric difference (and intersection) of the two languages. The precise behavior of the resulting language distance depends on the specific choice of vector distance. Nevertheless, this approach does not incorporate any notion of similarity between words themselves, making it difficult to see how it could {determine} $ba^*$ and $ca^*$ as being closer than $b^*$ and $c^*$.

\paragraph*{Word-level lifted similarity measures}
\subparagraph*{Predicate-lifted Jaccard}
In~\cite{CeweiZTI13}, it is observed that language similarity notions can benefit from enriching set-based similarity measures, such as the Jaccard distance, with an explicit notion of distance between individual words. 
Rather than using the strict symmetric difference
the approach of~\cite{CeweiZTI13} introduces a predicate $\predicate$ that captures when two words are considered sufficiently close. The comparison between $X$ and $Y$ 
considers only pairs of words deemed sufficiently close according to $\predicate$.
For example, the predicate $\predicate(x,y)$ may be defined as $\hamming(x,y)\leq c_0$, where $\hamming$ denotes the Hamming distance between words and $c_0 \in \mathbb{N}$ is a fixed constant. Another example from~\cite{CeweiZTI13} is $\predicate(x,y)=\lcs(x,y)\leq c_0$, where $\lcs$ denotes the length of the \emph{longest common subsequence}.

Given such a predicate $\predicate$, they define 
\  
\(
    \jaccardind^\predicate(X,Y)=\tfrac{| \predicate(X,Y)|}{|X \cup Y|}
\)
\  
where 
    $\predicate(X,Y) = \{ x \in X \mid \exists y \in Y,\ \predicate(x,y) \} \ \cup \ \{ y \in Y \mid \exists x \in X,\ \predicate(x,y) \}$.
This construction essentially replaces the strict intersection in the standard Jaccard index with the set of all words that are sufficiently close according to the predicate $\predicate$.

\subparagraph*{Predicate-lifted information-rate}
The focus in~\cite{CeweiZTI13} is on extending the notion of information rate introduced by Shannon and Weaver~\cite{Shannon1949} and applied to formal languages by Chomsky and Miller~\cite{ChomskyM58}. 
The information rate of a language $L$ is defined as 
\ 
\(
\ir(L)=\lim_{n\to\infty} \tfrac{\log |L^{(n)}|}{n}.
\) \ 
This notion pertains to a single language rather than a pair of languages and is intended to capture the \emph{density} of a language. 
In~\cite{CeweiZTI13}, an extension for two languages, which additionally incorporates a predicate, is suggested:\ 
\(
\ir^\predicate(X,Y)= \tfrac{\ir(\predicate(X,Y))}{\ir(X \cup Y)}.
\) \ 

 Since the information rate is not a (pseudo-)metric (see \cref{clm:ir-fails-triangle}), it does not serve as a candidate for our purposes.

\subparagraph*{Predicate-lifted Cesàro-Jaccard}
While not suggested in the literature, we note that the use of a predicate can also be applied to the Ces\`aro--Jaccard distance. For example, one can define $\cj^\predicate(X,Y)$ analogously to $\cj(X,Y)$, by replacing $(X \triangle Y)^{(\leq n)}$ with
{$(\overline{\predicate}(X,Y))^{(\leq n)}$ where $\overline{\predicate}(X,Y)=\{(x,y)\mid (x,y)\notin \predicate(X,Y)\}$}. 
Taking the predicate $\predicate$ to be 
{$\ed(x,y) \leq 1$}, we obtain $\cj^\predicate(ca^*,ba^*)=0$, which is desirable and resolves the issue that $\cj(ca^*,ba^*)=1$. Similarly, defining $\predicate$ as 
{$\ned(x,y) \leq \frac{1}{2}$} gives $\cj^\predicate(a^*,(ba)^*)=0$ while $\cj^\predicate(a^*,(bba)^*)=1$.\footnote{The formal definition of $\ed$ and $\ned$ are deferred to \autoref{subsec:word-metrics}. Intuitively, $\ed$ counts minimal number of edit operations, and $\ned$ minimal percentage of edits. In particular, for every $k\in\bb{N}$ we have $\ed(ca^k,ba^k)=1$, $\ned(a^{2k},(ab)^k)=\frac{1}{2}$ and $\ned(a^{3k},(bba)^k)=\frac{2}{3}$. }

However, the use of a predicate necessitates choosing a fixed threshold, which prevents distances from degrading gradually and makes it impossible to satisfy the monotonicity property described in \autoref{req:percentage}.

\vspace{2mm}
We now turn to language similarity notions that lift a similarity measure between words.

\subparagraph*{Infinitum and Supremum based}
As mentioned in the introduction, the measure 
\  
$\iid(X,Y) = \inf_{x \in X} \inf_{y \in Y} \elementmetric{d}(x,y)$, \ 
where $\elementmetric{d}$ assigns a cost to string transformations, has been used in the literature~\cite{Mohri02,Samanta13,FiliotMRST20}. However, this measure is not a metric. For example, taking $\elementmetric{d}$ to be the normalized edit distance $\ned$, we have 
$\iined(a^+,b^*) = 1 > 0 + 0 = \iined(a^+,a^*) + \iined(a^*,b^*)$.  
Moreover, $\iined(a^*,b^*)=0$, illustrating the outlier sensitivity of $\iid$ (the outlier being $\varepsilon$).
By similar reasoning, the dual notion
\ $\ssd(X,Y) = \sup_{x \in X} \sup_{y \in Y} \elementmetric{d}(x,y)$ \
is also not a metric and suffers from outlier sensitivity; for instance, $\ssned(a^* \cup bb, a^*) = 1$.

\subparagraph*{Prefix-distance based}
Considering variations in word-level similarity, several works examine the prefix distance between words. The prefix distance between two strings $x$ and $y$, denoted $\elementmetric{prf}(x,y)$, is defined as the number of characters in $x$ and $y$ that do not belong to their longest common prefix. The Hausdorff lifting of $\elementmetric{prf}$, denoted $\hprf$, has been studied in various works~\cite{NgRS17}.  
We note that $\hprf$, being a Hausdorff lifting, is sensitive to outliers. In addition, it is unbounded: for example, $\hprf(b^*,a^*) = \infty$, and it is also sensitive to bounded edits, as seen from $\hprf(ba^*,a^*) = \infty$.

\subparagraph*{$\acost$, Benedikt et al.'s measure}
As mentioned in the introduction, Benedikt et al.~\cite{BenediktPR13} studied the number of edits required to repair a word in $X$ into a word in $Y$. Since this quantity is often unbounded, their subsequent work~\cite{BenediktPR14} focused on capturing the \emph{percentage} of edits required in the limit. Their work forms the basis for our approach.  
The exact formula they use is  
\ 
\(    
\acost(X,Y) = \displaystyle \lim_{n \to \infty} \sup_{\substack{x \in X \\ |x|\geq n}} \inf_{y \in Y} \tfrac{\ed(x,y)}{|x|}.
\) \\
It is straightforward to see that $\acost(X,Y) \in [0,1]$ for any pair of languages $X,Y$.  
We note that $\acost$ is asymmetric by nature. For applications requiring symmetry, one can define $
\acostsym(X,Y) = \max(\acost(X,Y), \acost(Y,X))$.
However, the critical issue is that $\acost$ (and thus $\acostsym$) violates the triangle inequality.

\begin{restatable}[]{claim}{claimacostisnotametric}\label{claim: acost is not a metric}
$\acost$ does not satisfy the triangle inequality.
\end{restatable}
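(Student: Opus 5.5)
The plan is to give an explicit counterexample: three regular languages $X,Y,Z$ with $\acost(X,Z) > \acost(X,Y)+\acost(Y,Z)$. The asymmetry to exploit is that $\acost$ normalises $\ed(x,y)$ by the length of the \emph{source} word only. I will work with the pointwise quantity $r(x,Y)=\inf_{y\in Y}\ed(x,y)/|x|$, so that $\acost(X,Y)=\limsup_{|x|\to\infty} r(x,Y)$.

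Two upper bounds come from the triangle inequality of $\ed$. Fix $x\in X$, let $y\in Y$ attain $r(x,Y)$, and let $z$ be optimal for $y$. Since $\ed(x,y)\ge\bigl||x|-|y|\bigr|$, we have $|y|\le(1+\acost(X,Y)+o(1))|x|$. Writing $\varepsilon=\acost(X,Y)$ and $\delta=\acost(Y,Z)$, this gives $\acost(X,Z)\le \varepsilon+\delta(1+\varepsilon)$. The obvious route $x\to y\to z$ therefore never produces more than the extra term $\varepsilon\delta$. So a violation must come from the $x\to y$ step \emph{lengthening} the word, and that is the room I will look for.

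A second constraint follows from this. The $x\to y$ step must be forced to use insertions rather than substitutions. In simple examples such as $X=a^*$, $Y=\{a^nb^{kn}\}$, $Z=b^*$, a length-preserving substitution is always at least as cheap, and the three values come out exactly additive ($\tfrac{k}{k+1}+\tfrac1{k+1}=1$). A related pitfall is choosing $Z$ so that $\acost(X,Z)$ is already the maximal value $1$, which caps the left-hand side; I will avoid that.

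The concrete design I will try is as follows. Take $X=a^*$. Let $Y$ consist of words that \emph{contain} $a^n$ as a scattered subword, padded with a second letter $b$, for instance $Y=(ab)^*$ or $a^*b^*$-shaped variants. Let $Z$ consist of words that still keep many $a$'s, for instance $Z=((ab)^kc)^*$-style interleavings, so that $\acost(X,Z)<1$. The intended structure of the computation is:
\begin{itemize}
\item the $b$-padding makes $y$ about $(1+\varepsilon)|x|$ long, and the $\delta|y|$ edits needed for $y\to z$ then cost $\delta(1+\varepsilon)|x|$ when measured from $x$;
\item no shortcut from $x$ directly to $Z$ avoids paying for both layers.
\end{itemize}
For each candidate I will compute the three values via optimal alignments. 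Upper bounds come from explicit edit scripts. Lower bounds come from letter counting: each occurrence of a letter absent from $x$ costs one operation, and each surplus or deficit in length costs one. The positive $\varepsilon\delta$ gap then yields the violation.

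The main obstacle is the \emph{lower bound} on $\acost(X,Z)$. I must show that no clever direct alignment from $a^N$ into $Z$ beats the two-stage route, in particular that length-preserving substitutions cannot undercut the insertions. My first attempt at this bound is a double-counting argument on letter multiplicities combined with length difference. If that bound turns out to match the sum exactly, as it did in the additive examples above, I will instead design $Y$ so that it contains, for each length, only words reachable cheaply from $a^N$ by insertions but far from $a^N$ by substitutions. The fallback is to exploit the fact that the infimum defining $\acost(X,Y)$ ranges over \emph{all} $y$, while the supremum defining $\acost(Y,Z)$ ranges only over long $y$.
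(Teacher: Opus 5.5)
Your diagnosis of the mechanism is sound. The triangle inequality for $\ed$ gives $\acost(X,Z)\le\varepsilon+\delta(1+\varepsilon)$, so a violation requires every near-optimal target $y$ to be longer than $x$ by a constant factor. The gap is twofold. First, the proposal never exhibits languages realising this; it stops at candidates and fallbacks, and an existence claim needs a concrete counterexample. Second, the design you commit to provably cannot work.

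\begin{itemize}
\item For $x=a^N$ and any word $y$ one has $\ed(a^N,y)=\max(N,|y|)-\min(N,|y|_a)$, which depends only on $|y|$ and $|y|_a$.
\item If $Y$ is regular, its Parikh image is semilinear. Take any $y\in Y$ with $|y|>N$, scale the period coefficients of its Parikh vector by roughly $N/|y|$, and round down. This gives $y'\in Y$ with $|y'|\le N$ and $\ed(a^N,y')\le\ed(a^N,y)+O(1)$; checking the cases $|y|_a\le N$ and $|y|_a>N$ separately confirms this.
\item So for $X\subseteq a^*$ and regular $Y$, cheap targets can always be taken no longer than $x$. Your own estimate then yields $\acost(X,Z)\le\acost(X,Y)+\acost(Y,Z)$ for every $Z$.
\end{itemize}

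The lower-bound difficulty you anticipate is therefore not a technicality. Your fallback of a $Y$ whose cheap targets are reachable only by insertions is impossible in this setting. The remark about the $\inf$ over all $y$ versus the $\sup$ over long $y$ does not help either, since near-optimal targets are long whenever $\acost(X,Y)<1$.

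The missing idea is that the claim concerns $\acost$ on arbitrary languages. Regularity is a restriction you imposed yourself, and whether any regular counterexample exists is a separate question you do not need to settle. Lengthening is easiest to force through \emph{sparse lengths} rather than word structure. Over a unary alphabet $\ed(a^m,a^n)=|m-n|$, so everything reduces to arithmetic.

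The paper takes $X=\{a^{18\cdot 10^k}\}$, $Y=\{a^{19\cdot 10^k}\}$ and $Z=\{a^{20\cdot 10^k}\}$. Words at scales $k\pm 1$ are a constant factor farther away, so nearest neighbours lie at the same scale. This gives $\acost(X,Y)=\tfrac{1}{18}$, $\acost(Y,Z)=\tfrac{1}{19}$ and $\acost(X,Z)=\tfrac{1}{9}>\tfrac{1}{18}+\tfrac{1}{19}$. This example attains your bound with equality, since $\tfrac{1}{18}+\tfrac{1}{19}\cdot\tfrac{19}{18}=\tfrac{1}{9}$. Your mechanism is exactly the right one; only the choice of languages fails.
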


In summary, while a variety of language similarity measures have been proposed, existing approaches face key limitations. Complexity-based measures capture overall representation size but ignore word-level structure; set- and vector-based measures fail to account for word similarity or are sensitive to infinite languages; and word-level or edit-distance–based liftings can be outlier-sensitive or fail to satisfy fundamental metric properties such as the triangle inequality. These observations motivate the need for a pseudo-metric that simultaneously accounts for word-level similarity, is robust to outliers, and preserves essential metric properties. In the next section, we introduce such a construction: the \emph{Asymptotic Hausdorff} metric, $\lift{AH}_{\elementmetric{d}}$, which generalizes the Hausdorff lifting to capture asymptotic behavior at the language level. We first present it at an abstract level, and later specialize to $\ahned$, the lifting of the normalized edit distance $\ned$~\cite{MarzalV93}.

\section{Asymptotic Hausdorff}\label{sec:AH}

We turn to define the central notion of the paper, the \emph{Asymptotic Hausdorff lifting}. In what follows we assume $M$ is a domain (set) and $d:M \times M \to \nonnegR$ is a metric or a pseudo-metric. That is, we assume $d$ satisfies the three pseudo-metric requirements:
\begin{enumerate}
    \item \emph{Reflexivity:} $d(x,x)=0$ for all $x\in M$. 
    \item \emph{Symmetry:} $d(x,y)=d(y,x)$
 for all $x,y\in M$.
    \item \emph{Triangle inequality:} 
    $d(x,z){\,\leq\,} d(x,y){+}d(y,z)$ for all $x,y,z{\in} M$.
 \end{enumerate}

\subsection{Defining the Asymptotic Hausdorff Lifting}

As discussed in the introduction, we are interested in domains whose elements are equipped with a natural notion of size.
This allows us to distinguish between bounded and unbounded behavior and to focus on asymptotic phenomena.
\begin{mydefinition}[Size notion]
    A \emph{size notion} for $M$ is a map $s:M\to \nonnegR$.
\end{mydefinition}

Henceforth, we assume $M$ is equipped with such a size notion $s$.
We are interested in sets with increasing size of elements. To capture this we introduce the following definition, which considers infinite sequences of elements (rather than sets).

\begin{mydefinition}[$s$-bounded sequence]
    A sequence $(x_i)^\infty_{i=1}$ is called \emph{$s$-bounded} if there exists some $N \in \bb{N}$ such that 
    \(
        \limsup_{i \to  \infty} s(x_i) \le N.
    \)
     Otherwise we say it is \emph{$s$-unbounded}.
\end{mydefinition}

We are interested in element-level distances $d$ that separate $s$-bounded sequences from $s$-unbounded sequences, in the sense that the asymptotic distance is as large as possible. 

\begin{mydefinition}[Asymptotic separation property]\label{def:asymptotic separation property}
    We say $d$ has the \emph{asymptotic separation property} if for every $s$-unbounded sequence $(x_k)^\infty_{k=1}$, and $s$-bounded sequence $ (y_k)^\infty_{k=1}$, 
    $d$ satisfies 
    \(\lim_{k \to \infty}d(x_k, y_k) = \sup d \)
    where $\sup d$ is the supremum of the image of $d$.
\end{mydefinition}

We now have all the ingredients needed to define the Asymptotic Hausdorff lifting. As in the Hausdorff lifting, we {first} define {a} directional distance, and then take the maximum of going from $X$ to $Y$ and in the other direction.

\begin{mydefinition}[Asymptotic Hausdorff lifting]\label{def:asymptotic relative distance}
    Given a pseudo-metric $d$ on $M$, the two functions $\dar$ and $\dAH$ of type
    $\mathcal{P}(M) \times \mathcal{P}(M) \to \nonnegR$ are defined as follows
    \begin{align*}\label{eq:Relative-AH}
        \dar(X,Y) \defeq \lim_{k \to \infty} \sup_{\substack{x \in X \\ s(x)\geq k}}\inf_{y \in Y} d(x,y) &  \qquad  \dAH(X,Y) = \max\left\{ \dar(X,Y), \dar(Y,X) \right\}
    \end{align*}
    We refer to $\dar(X,Y)$ as the \emph{asymptotic directional distance} from $X$ to $Y$ and to $\dAH$ as the \emph{asymptotic Hausdorff lifting} of $d$.
\end{mydefinition}

\subsection{Properties of the Asymptotic Hausdorff Lifting}
We first establish that the asymptotic directional distance is well defined.

\begin{restatable}[]{claim}{claimasymptoticrelativedistancewelldefined}\label{claim: asymptotic relative distance well defined}
    $\dar(X,Y)$ is well defined for all $X,Y \subseteq M$.
\end{restatable}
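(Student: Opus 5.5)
The plan is to show that the limit in the definition of $\dar(X,Y)$ always exists in the extended non-negative reals, by exhibiting it as the limit of a monotone sequence. For each $k\in\bb{N}$ define
\[
a_k \defeq \sup_{\substack{x \in X \\ s(x)\geq k}}\ \inf_{y \in Y} d(x,y),
\]
where $\wordlang{d}(x,Y)=\inf_{y\in Y}d(x,y)\in\nonnegR\cup\{\infty\}$ (with $\inf\emptyset=\infty$). We use the convention $\sup\emptyset=0$; this is the natural choice since distances are non-negative, and $0$ is also the value the lifting should assign when $X$ has no elements of large size. With these conventions each $a_k$ is a well-defined element of $[0,\infty]$.

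The key step is monotonicity. For $k\le k'$ we have $\{x\in X\mid s(x)\ge k'\}\subseteq\{x\in X\mid s(x)\ge k\}$. A supremum over a subset is at most the supremum over the superset, and this remains true when the smaller set is empty, because the value is then $0$, which lies below any supremum of non-negative quantities. Hence $a_{k'}\le a_k$, so $(a_k)_k$ is non-increasing and bounded below by $0$.

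A non-increasing sequence in $[0,\infty]$ converges to $\inf_k a_k$. If some $a_k$ is finite, the tail is a bounded monotone real sequence and converges by monotone convergence. If all $a_k=\infty$, the limit is $\infty$. The degenerate cases fit into the same argument: $Y=\emptyset$ (the inner infimum is $\infty$), $X=\emptyset$ (all $a_k=0$), and $X$ being $s$-bounded (eventually $a_k=0$). In every case $\lim_{k\to\infty}a_k$ exists and is unique, so $\dar(X,Y)$ is well defined.

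The only real obstacle is fixing conventions so that the limit always lands in the intended codomain. The statement as typed is $\nonnegR$, so I would note that for bounded $d$ (such as the normalized edit distances, which take values in $[0,1]$) with $Y\neq\emptyset$ the value is finite. Otherwise the value is taken in the extended reals $\nonnegR\cup\{\infty\}$.
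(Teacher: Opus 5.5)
Your proof is correct and follows the same route as the paper, which sets $b_k=\sup_{x\in X,\ s(x)\ge k}\inf_{y\in Y}d(x,y)$, notes $b_{k+1}\le b_k$, and concludes by monotone convergence. You are slightly more careful than the paper, which bounds $b_k$ by $\sup d$ and so tacitly assumes $d$ is bounded and $Y\neq\emptyset$. You instead work in $[0,\infty]$, which is what is actually needed for the unbounded metrics $\ed$ and $\prf$ that the paper also lifts.
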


Equivalently, the asymptotic directional distance $\dar(X,Y)$ can be defined as the supremum over all sequences
$(x_n)_{n\geq 1} \subseteq X$ with $s(x_n)\to\infty$, of the limit
{$\limsup_{n\to\infty} \inf_{y\in Y} d(x_n,y)$}.
That is, as the worst-case asymptotic distance to $Y$ attained along sequences
of elements of $X$ whose size {grows unboundedly}.

\begin{restatable}[Equivalent definition to asymptotic directional distance]{claim}{claimEquivalentdefinitiontoasymptoticrelativedistance}\label{claim: Equivalent definition to asymptotic relative distance}
    Let $X$ and $Y$ be sets. Then
    \(
        \dar(X,Y) = \sup_{\substack{(x_n)^\infty_{n=1}\subseteq X \\ s(x_n) \ge n ,\ \forall n \in \bb{N}}} \limsup_{n \to \infty} \inf_{y\in Y} d(x_n,y)
    \)
\end{restatable}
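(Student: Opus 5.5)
Throughout, write $f(x) \defeq \inf_{y\in Y} d(x,y)$ and $a_k \defeq \sup_{x\in X,\, s(x)\geq k} f(x)$, so that $\dar(X,Y)=\lim_{k\to\infty} a_k$. The proof is a two-inequality argument relating the monotone sequence $(a_k)$ to $\limsup$'s along sequences.

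\textbf{Setup.} As in the proof of \cref{claim: asymptotic relative distance well defined}, the sets $\{x\in X : s(x)\geq k\}$ shrink as $k$ grows. Hence $(a_k)$ is nonincreasing in $[0,\infty]$ and its limit $L$ exists.

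\textbf{Degenerate case.} First I dispose of the case where $X$ has no elements of arbitrarily large size. Then, for all large $k$, the set $\{x\in X: s(x)\geq k\}$ is empty. Likewise, no sequence $(x_n)\subseteq X$ with $s(x_n)\geq n$ exists. Both sides are then suprema over empty sets. Under the same convention used for well-definedness (sup of the empty set taken as $0$), both sides equal $0$. From now on, assume every set $\{x\in X : s(x)\geq k\}$ is nonempty.

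\textbf{Inequality ($\geq$).} Fix any sequence $(x_n)\subseteq X$ with $s(x_n)\geq n$. Then $x_n$ lies in the set over which $a_n$ is a supremum, so $f(x_n)\le a_n$ for every $n$. Taking $\limsup$ gives $\limsup_n f(x_n)\le \lim_n a_n = L$. Taking the supremum over all such sequences yields that the right-hand side is at most $\dar(X,Y)$.

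\textbf{Inequality ($\leq$).} This is the step requiring a construction. For each $n$, pick $x_n\in X$ with $s(x_n)\geq n$ that nearly attains $a_n$:
\begin{itemize}
\item if $a_n<\infty$, choose $f(x_n)\geq a_n-\tfrac1n$;
\item if $a_n=\infty$, choose $f(x_n)\geq n$.
\end{itemize}
This is possible by the definition of supremum and the nonemptiness assumption. The resulting sequence is admissible. In the finite case, $a_n\to L$ gives $\liminf_n f(x_n)\geq L$, hence $\limsup_n f(x_n)\geq L$. If $a_n=\infty$ for all $n$, then $f(x_n)\to\infty=L$. (If $a_n$ is eventually finite, only the eventual behaviour matters.) Thus the supremum over sequences is at least $L$.

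\textbf{Main obstacle.} The only delicate points are bookkeeping rather than substance. One is handling possibly infinite values of $a_n$, since $d$ need not be bounded. The other is the empty-set convention. The near-optimal selection in the ($\leq$) direction is the heart of the argument.
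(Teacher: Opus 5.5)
Your proof is correct and follows the same two-inequality argument as the paper. In one direction you use $f(x_n)\le a_n$ for any admissible sequence; in the other you select near-maximizers $x_n$ with $s(x_n)\ge n$. The only differences are small gains in rigor, not a different route: your slack $\tfrac1n$ tends to zero, whereas the paper fixes $\e$ and lets $\e\to 0$ only implicitly, and you also treat the empty-set and infinite-value cases explicitly.
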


We next show that the asymptotic separation property is sufficient to lift the triangle inequality to the asymptotic setting.

\begin{restatable}[$\dar$ satisfies the triangle inequality]{theorem}{thmasymptoticrelativedistancesatisfiesstriangleinequality}\label{thm:asymptotic relative distance satisfiess triangle inequality}
    Let $d$ be a pseudo-metric that has the asymptotic separation property. Then $\dar$ satisfies the triangle inequality. 
\end{restatable}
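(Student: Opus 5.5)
We must show $\dar(X,Z)\le \dar(X,Y)+\dar(Y,Z)$. If the right-hand side is at least $\sup d$ there is nothing to prove, since every quantity $\inf_{z}d(x,z)$ is bounded by $\sup d$ and so $\dar(X,Z)\le\sup d$. We may therefore assume $\dar(X,Y)+\dar(Y,Z)<\sup d$. The plan is to use the sequence characterization of \cref{claim: Equivalent definition to asymptotic relative distance}. Fix a sequence $(x_n)\subseteq X$ with $s(x_n)\ge n$, and fix $\e>0$. It suffices to show
\[
\limsup_n \inf_{z\in Z} d(x_n,z)\le \dar(X,Y)+\dar(Y,Z)+3\e .
\]

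For each $n$, choose $y_n\in Y$ with $d(x_n,y_n)\le \inf_{y\in Y}d(x_n,y)+\e$. By the triangle inequality for $d$,
\[
\inf_z d(x_n,z)\le d(x_n,y_n)+\inf_z d(y_n,z).
\]
Along the sequence $(x_n)$, the quantity $\limsup_n d(x_n,y_n)$ is at most $\dar(X,Y)+\e$, by \cref{claim: Equivalent definition to asymptotic relative distance}. It then remains to bound $\limsup_n \inf_z d(y_n,z)$ by $\dar(Y,Z)$ plus small terms. The natural tool is the definition of $\dar$: for large $k$, every $y\in Y$ with $s(y)\ge k$ has $\inf_z d(y,z)\le \dar(Y,Z)+\e$. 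This bound, however, only applies to $y_n$ of large size.

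The main obstacle is that $(y_n)$ need not be $s$-unbounded. This is exactly where the asymptotic separation property enters. Pick a threshold $K$ such that $\sup_{y\in Y,\,s(y)\ge K}\inf_z d(y,z)\le \dar(Y,Z)+\e$. We claim only finitely many $n$ have $s(y_n)<K$. Suppose instead that infinitely many do; the corresponding subsequence $(y_{n_j})$ is $s$-bounded, while $(x_{n_j})$ is $s$-unbounded, since $s(x_{n_j})\ge n_j$. The asymptotic separation property then gives $d(x_{n_j},y_{n_j})\to\sup d$, hence $\limsup_n d(x_n,y_n)=\sup d$ and so $\dar(X,Y)+\e\ge\sup d$. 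Choosing $\e$ small enough that $\dar(X,Y)+\e<\sup d$, which is possible since $\dar(X,Y)<\sup d$ by the reduction above, yields a contradiction. When $\sup d=\infty$, the same argument shows directly that $\dar(X,Y)=\infty$, and that case is already trivial.

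Hence $s(y_n)\ge K$ for all large $n$, and so $\inf_z d(y_n,z)\le\dar(Y,Z)+\e$ eventually. Combining the two bounds gives
\[
\limsup_n \inf_z d(x_n,z)\le \dar(X,Y)+\dar(Y,Z)+2\e .
\]
Taking the supremum over sequences $(x_n)$ and letting $\e\to0$ gives the triangle inequality. A minor technicality is that the infima in the choice of $y_n$ need not be attained and $Y$ could be empty; both are handled by the $\e$-approximation and by the convention $\inf\emptyset=\infty$, respectively.
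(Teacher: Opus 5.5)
Your proof is correct and follows essentially the same route as the paper: you choose near-optimal witnesses $y_n\in Y$, use the asymptotic separation property to force them to be eventually large unless $\dar(X,Y)=\sup d$, and then combine the pointwise triangle inequality with the threshold for $\dar(Y,Z)$ and the sequence characterization of $\dar$. The only difference is organizational: you first reduce to the case $\dar(X,Y)+\dar(Y,Z)<\sup d$ and argue by contradiction, where the paper splits into Case~1/Case~2. If anything, your version is slightly cleaner, because it applies the separation property to an explicit pair of sequences $(x_{n_j}),(y_{n_j})$ with $s(x_{n_j})\ge n_j$.
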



The proof makes use of the following claim.

\begin{restatable}[Point-wise triangle inequality]{claim}{claimpointwisetrinagleinequality}\label{claim:pointwise trinagle inequality}
    Let $M$ be a set, $d$ a pseudo-metric on $M$ and $Z$ a subset of $M$.
    Then for every $x,y \in M$
    \(
        \inf_{z\in Z}d(x,z) \le d(x,y) + \inf_{z \in Z}d(y,z)
    \)
\end{restatable}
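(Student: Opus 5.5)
The plan is to derive the claim directly from the triangle inequality of $d$, followed by taking an infimum over $Z$. We fix $x,y\in M$ and an arbitrary $z\in Z$. The triangle inequality for the pseudo-metric $d$ gives $d(x,z)\le d(x,y)+d(y,z)$. The left-hand side is bounded below by $\inf_{z'\in Z}d(x,z')$, so for every $z\in Z$ we obtain
\[
\inf_{z'\in Z}d(x,z')\;\le\; d(x,y)+d(y,z).
\]

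Next we observe that the left-hand side does not depend on $z$. It is therefore a lower bound on the set $\{d(x,y)+d(y,z) : z\in Z\}$. Passing to the infimum over $z\in Z$ on the right, and using that $\inf_{z}(c+f(z)) = c+\inf_z f(z)$ for a constant $c=d(x,y)$, yields exactly $\inf_{z\in Z}d(x,z)\le d(x,y)+\inf_{z\in Z}d(y,z)$.

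The only point needing care is the degenerate case $Z=\emptyset$. There both infima equal $+\infty$ under the usual convention, so the inequality $\infty\le d(x,y)+\infty$ holds trivially. If one prefers to regard the infimum over the empty set as undefined, we simply restrict to nonempty $Z$. Apart from this convention there is no real obstacle: the argument is the standard proof that the distance-to-a-set function is $1$-Lipschitz.
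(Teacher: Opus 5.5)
Your proposal is correct and matches the paper's proof: both apply the triangle inequality $d(x,z)\le d(x,y)+d(y,z)$ for each $z\in Z$, take the infimum over $Z$, and pull the constant $d(x,y)$ out of the infimum. Your remark on the convention for $Z=\emptyset$ is a harmless addition that the paper leaves implicit.
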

   \begin{proof}
        Let $Z \subseteq M$ and $x,y \in M$. As $d$ is a pseudo-metric 
        $d(x,z) \le d(x,y) +d(y,z)$
        for every $z \in Z$. Thus
        $\inf_{z\in Z}d(x,z) \le \inf_{z \in Z} \left\{d(x,y) + d(y,z) \right\}
                                 = d(x,y) + \inf_{z \in Z}d(y,z).$
    \end{proof}

    \begin{proof}[Proof of \autoref{thm:asymptotic relative distance satisfiess triangle inequality}]
        Let $X,Y,Z$ be subsets of $M$ and let $\e > 0$. Let $N \in \bb{N}$ such that 
        \begin{equation}\label{eq:YZ-sup}
            \sup_{\substack{y \in Y\,  s(y)\ge N}}\ \inf_{z \in Z} d(y,z)
            \le \dar(Y,Z) + \tfrac{\e}{2}.
        \end{equation}

        \noindent
        If all infinite sequences $(x_k)^\infty_{k=1} \subseteq X$ are $s$-bounded, then $X$ is an $s$-bounded set and thus
           \[ \dar(X,Z) = \lim_{k \to \infty} \sup_{\substack{x \in X \\
           s(x)\geq k}}\inf_{z \in Z} d(x,z)
                     = \lim_{k \to \infty} \sup_{x \in \emptyset}\inf_{z \in Z} d(x,z)\\
                     = 0 \le \dar(X,Y) + \dar(Y,Z). 
        \]
        
        \noindent
        Otherwise, let $(x_k)^\infty_{k=1} \subseteq X$ be some arbitrary sequence that has an infinite subsequence $(x_{k_i})^\infty_{i=1}$ such that $(s(x_{k_i}))^\infty_{i=1}$ is a non decreasing sequence that tends to $\infty$. We consider two cases:
         
        \smallskip \noindent
        \textbf{Case 1:}  There exists a subsequence $(k_j)^\infty_{j=1}$ such that for every $y \in Y $ that satisfies $ s(y) \ge N$, we have that $d(x_{k_j}, y) > \inf_{y \in Y} d(x_{k_j}, y) + \tfrac{\e}{2}$.  

        Let $Y_{x_{k_j}}$ be the set of elements that satisfy $d(x_{k_j},y) \le \inf_{y \in Y} d(x_{k_j}, y) + \tfrac{\e}{2}$. 
        We get that $Y_{x_{k_j}} \subseteq Y^{\le N}$ where $Y^{\le N} = \{y \in Y: s(y)\le N\}$. Therefore as $d$ has the asymptotic separation property
        \begin{align*}
            \dar(X,Y) & = \lim_{k \to \infty} \sup_{\substack{x \in X \\ s(x)\geq k}}\inf_{y \in Y} d(x,y)
                    \ge \lim_{j \to \infty} \inf_{y \in Y} d(x_{k_j},y)\\
                     &= \lim_{j \to \infty} \inf_{y \in Y^{\le N}} d(x_{k_j},y)
                     =^{\textrm{(\autoref{def:asymptotic separation property})}} \sup d.
        \end{align*}
        \noindent
        Since  $\sup d$ bounds $\dar(A,B)$ for any $A,B$, together we get that $\dar(X,Y)=\sup d$ and
        \[
            \dar(X,Z)  \le \sup d = \dar(X,Y) \le \dar(X,Y) + \dar(Y,Z).  
        \]

        \smallskip \noindent
        \textbf{Case 2:}  There exists $k'$ such that for every $k \ge k'$ there exists $y_k \in Y$ that satisfies $s(y_k) \ge N$ and also satisfies
        \begin{equation}\label{eq:yk-choice}
            d(x_k, y_k) \le \inf_{y \in Y} d(x_k, y) + \tfrac{\e}{2}
        \end{equation}
        \noindent
        As $d$ is a pseudo-metric by Claim~\ref{claim:pointwise trinagle inequality} we know that
        \begin{equation}\label{eq:pointwise-triangle}
        \inf_{z\in Z}d(x_k,z)\le  d(x_k,y_k) + \inf_{z \in Z}d(y_k,z) 
        \end{equation}
        \noindent
        And because $s(y_k) \ge N$ and $a \leq \sup A$ for any $A \supseteq \{a\}$  we get
        \begin{equation}\label{eq:inf_z y_k to dar(Y,Z)}
            \inf_{z \in Z}d(y_k, z) \le \sup_{\substack{y \in Y \\ s(y)\geq N}} \inf_{z \in Z}d(y, z) \le^\eqref{eq:YZ-sup} \dar(Y,Z) + \tfrac{\e}{2}
        \end{equation}
        \noindent
        Putting it all together we get
        \begin{equation}\label{eq:d(x_k,z) to inf y}
            \begin{aligned}
            \inf_{z\in Z}d(x_k,z)
            &\le^\eqref{eq:pointwise-triangle} d(x_k,y_k) + \inf_{z \in Z}d(y_k,z)\\
            &\le^\eqref{eq:yk-choice} \inf_{y \in Y}d(x_k, y) + \tfrac{\e}{2} + \inf_{z \in Z}d(y_k,z)\\
            &\le^\eqref{eq:inf_z y_k to dar(Y,Z)} \inf_{y \in Y}d(x_k, y) + \dar (Y,Z) +\e.\\
            \end{aligned}
        \end{equation}
        Note that this inequality is satisfied for an arbitrary sequence in $X$ with size that tends to $\infty$. 
        Using Claim~\ref{claim: Equivalent definition to asymptotic relative distance} there exists $(p_k)^\infty_{k=1}$ a sequence that satisfies
        \begin{equation}\label{eq:p_k to dar(X,Z)}
            \e + \limsup_{k\to \infty}\inf_{z \in Z}d(p_k,z) \ge \dar(X,Z).  
        \end{equation}
        Thus, we finally get
        \begin{align*}
            \dar(X,Z) &\le^\eqref{eq:p_k to dar(X,Z)} \limsup_{k\to \infty}\inf_{z \in Z}d(p_k,z) +\e\\ 
                     &\le^\eqref{eq:d(x_k,z) to inf y} \limsup_{k\to \infty} \left\{\inf_{y \in Y}d(p_k, y) + \dar(Y,Z) + 2\e \right\}\\
                     &= \limsup_{k\to \infty} \left\{\inf_{y \in Y}d(p_k, y)\right\} + \dar(Y,Z) + 2\e \\
                     &\le \sup_{\substack{(x_k)^\infty_{k=1}\subseteq X \\ s(x_k) \ge k ,\ \forall k \in \bb{N}}} \limsup_{k\to \infty} \left\{\inf_{y \in Y}d(x_k, y)\right\} + \dar(Y,Z) + 2\e \\
                     &=^{\autoref{claim: Equivalent definition to asymptotic relative distance}} \nedar(X,Y) + \dar(Y,Z) + 2\e. 
        \end{align*}
        \noindent
        As $\e$ is arbitrary small we get that $  \dar(X,Z) \le \dar(X,Y) + \dar(Y,Z)$.   
    \end{proof}

Since $\dAH$ obviously satisfies reflexivity and symmetry, an immediate corollary of \autoref{thm:asymptotic relative distance satisfiess triangle inequality} is that $\dAH$ is a pseudo-metric.

\begin{theorem}[Asymptotic Hausdorff is a pseudo-metric]\label{theorem:Asymptotic Hausdorff is metric}
    $\dAH$ is a pseudo-metric when $d$ has the asymptotic separation property.
\end{theorem}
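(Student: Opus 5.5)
The plan is to check the three pseudo-metric axioms for $\dAH$ one at a time. The triangle inequality is the only one with real content, and it is already available for $\dar$ through \autoref{thm:asymptotic relative distance satisfiess triangle inequality}. Before the axioms, I will confirm that $\dAH(X,Y)$ is a well-defined element of $\nonnegR$. \autoref{claim: asymptotic relative distance well defined} shows that both $\dar(X,Y)$ and $\dar(Y,X)$ exist. Each is a limit of suprema of infima of values of $d\ge 0$, so each is nonnegative. The value may be $+\infty$ when $d$ is unbounded, but the asymptotic separation property concerns $\sup d$. If $\sup d=\infty$, I will treat the codomain as the extended nonnegative reals, where all the inequalities below still make sense. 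The maximum of two such values is then well defined.

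\emph{Reflexivity.} For any $X$ and any $x\in X$, taking $y=x$ gives $\inf_{y\in X} d(x,y)\le d(x,x)=0$. So every inner infimum vanishes, every supremum over $\{x\in X: s(x)\ge k\}$ is $0$, and the supremum over the empty set is also $0$ by the convention used in the proof of \autoref{thm:asymptotic relative distance satisfiess triangle inequality}. Hence $\dar(X,X)=0$ and $\dAH(X,X)=\max\{0,0\}=0$. \emph{Symmetry.} This holds by construction, since $\dAH(X,Y)=\max\{\dar(X,Y),\dar(Y,X)\}$ is invariant under swapping $X$ and $Y$.

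\emph{Triangle inequality.} Fix $X,Y,Z\subseteq M$. \autoref{thm:asymptotic relative distance satisfiess triangle inequality} applies because $d$ has the asymptotic separation property, and it gives
\[
\dar(X,Z)\le \dar(X,Y)+\dar(Y,Z)\le \dAH(X,Y)+\dAH(Y,Z).
\]
Applying the same theorem to the triple $(Z,Y,X)$ gives
\[
\dar(Z,X)\le \dar(Z,Y)+\dar(Y,X)\le \dAH(Y,Z)+\dAH(X,Y).
\]
Taking the maximum of the two left-hand sides yields $\dAH(X,Z)\le \dAH(X,Y)+\dAH(Y,Z)$.

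There is no real obstacle here, since the substantive work was done in \autoref{thm:asymptotic relative distance satisfiess triangle inequality}. The one point that needs care is applying that theorem in both directions: $\dar$ is asymmetric, so the reverse triple $(Z,Y,X)$ has to be handled explicitly rather than by appeal to symmetry. The only other delicate point is the degenerate case of sets with no large elements, and there the empty-supremum convention gives $0$, which is consistent with the argument.
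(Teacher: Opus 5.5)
Your proposal is correct and matches the paper's argument. The paper treats the statement as an immediate corollary of \autoref{thm:asymptotic relative distance satisfiess triangle inequality}: reflexivity and symmetry are immediate, and the triangle inequality for $\dAH$ comes from applying the directional inequality to $(X,Y,Z)$ and $(Z,Y,X)$ and taking the maximum. Your explicit treatment of the reverse triple, the empty-supremum convention, and possibly infinite values (e.g.\ when $\sup d=\infty$) only spells out details the paper leaves implicit.
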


{
Returning to our primary motivating setting of words and language, in the next section, we show that many word similarity measures satisfy the asymptotic separation property, and therefore can be lifted to pseudo-metrics between languages.
We expect that similar results hold for metrics on trees and graphs, but we do not pursue this direction here. 
}

More broadly, this naturally raises a stronger question: how restrictive is the asymptotic separation property in general?
We therefore turn to identifying broad classes of distance functions for which this property holds.
The following theorem shows that the asymptotic separation property is in fact quite common: every metric induced by a norm satisfies it.

\begin{restatable}[Norm-induced metrics satisfy asymptotic separation]{theorem}{thmnormedinducedmetricshavetheasymptoticseparationproperty}\label{thm: normed induced metrics have the asymptotic separation property}
Let $(M,\norm{\cdot})$ be a normed space with $d(x,y) = \norm{x-y}$ and $s(x)=\norm{x}$. 
Then $d$ has the asymptotic separation property. 
\end{restatable}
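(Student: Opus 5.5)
Since $M$ carries a norm, I expect $\sup d=\infty$ whenever $M\neq\{0\}$: pick $v\neq 0$, and then $d(tv,0)=|t|\,\norm{v}$ is unbounded in $t$. If $M=\{0\}$ there are no $s$-unbounded sequences at all, so the property holds vacuously. The task therefore reduces to showing that $d(x_k,y_k)\to\infty$ whenever $(x_k)$ is $s$-unbounded and $(y_k)$ is $s$-bounded.

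The key step is the reverse triangle inequality for norms:
\[
d(x_k,y_k)=\norm{x_k-y_k}\ \ge\ \norm{x_k}-\norm{y_k}=s(x_k)-s(y_k).
\]
Since $(y_k)$ is $s$-bounded, there is $N$ with $\limsup_k s(y_k)\le N$. Hence $s(y_k)\le N+1$ for all sufficiently large $k$, and so $d(x_k,y_k)\ge s(x_k)-N-1$ eventually. If $s(x_k)\to\infty$, this lower bound tends to $\infty$, and therefore $\lim_k d(x_k,y_k)=\infty=\sup d$.

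The main obstacle is a subtlety in the definition, not an estimate. Literally, ``$s$-unbounded'' only says $\limsup_k s(x_k)=\infty$, which allows oscillating sequences. For example, take $x_k=0$ for odd $k$, $x_k=kv$ for even $k$, and $y_k=0$; then $d(x_k,y_k)$ has no limit, so the displayed limit cannot hold for such sequences. I would resolve this by reading the definition the way it is actually used in the proof of \autoref{thm:asymptotic relative distance satisfiess triangle inequality}. There the asymptotic separation property is applied only along a subsequence $(x_{k_j})$ with $s(x_{k_j})\to\infty$ (non-decreasing and tending to $\infty$). Under that reading the argument above gives the full limit.

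For completeness I would also record the $\limsup$ form, which holds verbatim for arbitrary $s$-unbounded sequences: $\limsup_k d(x_k,y_k)\ge \limsup_k s(x_k)-N-1=\infty$. Either way the property holds in the sense needed for \autoref{theorem:Asymptotic Hausdorff is metric}.
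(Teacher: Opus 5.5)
Your proof is correct and is essentially the paper's argument: both rest on the reverse triangle inequality $\norm{x_k-y_k}\ge \norm{x_k}-\norm{y_k}$ together with a bound on $\norm{y_k}$, and the paper leaves implicit that $\sup d=\infty$. Your caveat about oscillating $s$-unbounded sequences is well taken. The paper's proof silently assumes $\norm{x_k}\to\infty$ when it asserts $\norm{x_k}\ge B$ for all large $k$, so the reading you adopt (or your $\limsup$ version) is exactly what the argument needs.
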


\begin{corollary}
     Let $(M,\norm{\cdot})$ be a normed space and $d(x,y) = \norm{x-y}$. Then $\dAH$ is a pseudo-metric on $\mathcal{P}(M)$.
\end{corollary}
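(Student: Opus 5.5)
The statement to prove is the unnumbered corollary: for a normed space $(M,\norm{\cdot})$ with $d(x,y)=\norm{x-y}$, the lifting $\dAH$ is a pseudo-metric on $\mathcal{P}(M)$. The plan is to chain the two preceding results. Throughout we use the size notion $s(x)=\norm{x}$, which is the setting of \autoref{thm: normed induced metrics have the asymptotic separation property}.

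\textbf{Step 1: $d$ is a metric.} We check that $d(x,y)=\norm{x-y}$ satisfies the three axioms.
\begin{itemize}
\item Reflexivity: $\norm{0}=0$.
\item Symmetry: $\norm{x-y}=\norm{-(x-y)}=\norm{y-x}$, by absolute homogeneity with scalar $-1$.
\item Triangle inequality: $\norm{x-z}\le\norm{x-y}+\norm{y-z}$, by subadditivity of the norm applied to $(x-y)+(y-z)$.
\end{itemize}
So $d$ is a metric, and in particular a pseudo-metric as required by \autoref{def:asymptotic relative distance}.

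\textbf{Step 2: asymptotic separation.} By \autoref{thm: normed induced metrics have the asymptotic separation property}, $d$ has the asymptotic separation property with respect to $s=\norm{\cdot}$. We take that theorem as given.

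\textbf{Step 3: conclude.} With Steps 1 and 2 in hand, \autoref{theorem:Asymptotic Hausdorff is metric} applies and yields that $\dAH$ is a pseudo-metric. Unfolding that argument for completeness:
\begin{itemize}
\item Reflexivity: $\dar(X,X)=0$, since each $x\in X$ satisfies $\inf_{y\in X}d(x,y)\le d(x,x)=0$.
\item Symmetry: this holds because $\dAH$ is defined as a maximum over both directions.
\item Triangle inequality: by \autoref{thm:asymptotic relative distance satisfiess triangle inequality}, $\dar$ satisfies it. Taking the maximum of the two directional inequalities,
\[
\dar(X,Z)\le\dar(X,Y)+\dar(Y,Z)\le\dAH(X,Y)+\dAH(Y,Z),
\]
and likewise for $\dar(Z,X)$.
\end{itemize}

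\textbf{Main obstacle: values in $\nonnegR$.} In a nontrivial normed space $d$ is unbounded, so $\sup d=\infty$. Then $\dar$ may take the value $\infty$; for instance, the distance from an $s$-unbounded set to a bounded set is infinite. This is outside the nominal codomain $\nonnegR$. The plan is to read $\dAH$ as an extended pseudo-metric into $[0,\infty]$, where the triangle inequality remains valid under the usual conventions for $\infty$.

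We need to confirm that the proof of \autoref{thm:asymptotic relative distance satisfiess triangle inequality} survives $\sup d=\infty$. In Case 1 it gives $\dar(X,Y)=\infty$, so the inequality is trivial. In Case 2, equation \eqref{eq:YZ-sup} needs $\dar(Y,Z)<\infty$; if instead $\dar(Y,Z)=\infty$, the right-hand side is infinite and there is nothing to prove. So the only real content is this bookkeeping of infinite values.
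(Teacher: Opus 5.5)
Your proposal is correct and follows the paper's route: the corollary is immediate from \autoref{thm: normed induced metrics have the asymptotic separation property} combined with \autoref{theorem:Asymptotic Hausdorff is metric}. You also observe that $\sup d=\infty$ in a nontrivial normed space, so $\dAH$ can take the value $\infty$ and must be read as an extended pseudo-metric, with the triangle inequality holding trivially in the infinite cases; this is correct and fixes a point the paper glosses over when it declares the codomain to be $\nonnegR$.
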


As an immediate consequence of the corollary, the Asymptotic Hausdorff lifting applies to all metrics induced by norms.
This includes, in particular, the Euclidean distance and, more generally, $L_p$ distances for any $p \ge 1$.
Hence, $\lift{AH}_d$ can be meaningfully applied in continuous settings alongside the discrete ones considered earlier.

While we arrived at this notion from the perspective of formal languages, as discussed in the introduction, we believe that it is applicable in a much broader range of settings.
In particular, it is well suited to contexts in which infinite sets arise from finite representations or generators, are equipped with a natural notion of size, and contain elements whose size is unbounded.
This situation commonly occurs, for example, when considering functions defined over infinite domains, where each input $x$ induces an element $f(x)$ in the set generated by the function.

\section{Formal Languages and Distances}\label{sec:FLdistances}

We now focus on language similarity notions obtained via the Asymptotic Hausdorff lifting.
This construction applies to a variety of word-level metrics, including both classical and normalized edit distances.
While all are suitable for lifting, only normalized metrics are compatible with the percentage-based requirements discussed in the introduction.
Among these, the normalized edit distance $\ned$ will play a central role and serve as our main point of reference.
We begin by briefly reviewing standard distances between words.

\vspace{-2mm}
\subsection{Metrics on words (Preliminaries)}\label{subsec:word-metrics}
\vspace{-1mm}
Let $\Sigma$ be a finite alphabet. For a word $x \in \Sigma^*$, we write $|x|$ for its length and $x[i]$ for its $i$-th letter. The empty word is denoted by $\varepsilon$.

One of the oldest metrics on words is the \emph{Hamming distance}. It measures the distance between $x$ and $y$ as the number of letters on which they differ plus the difference between their lengths. Formally,
    if $\ell_x=|x|$ and $\ell_y=|y|$ then $\hamming(x,y)=\big| \{ i \colon x[i]\neq y[i], i \leq \min(\ell_x,\ell_y)\}\big|+\big|\ell_x-\ell_y\big|$.
Another common simple metric between words is the \emph{prefix distance}~\cite{NgRS17}. It measures the number of letters in $x$ and $y$ that are not in the longest common prefix of $x$ and $y$. Formally, $\prf(x,y)=|x|+|y|-2\cdot\max\{|z|\colon x,y\in z\cdot\Sigma^*\}$. Beyond these position-based notions, many widely used distances between words are defined in terms of edit operations, which we review next.

\subparagraph*{Edit operations and edit paths.}
We work with the standard edit operations: insertion, deletion, substitution, and no-op.
Let $\hat{\Gamma} = (\Sigma \cup \{\varepsilon\})^2$, and write $\swap{a}{b}$ for the pair $(a,b)$.
The set of \emph{edit operations} is $\Gamma = \hat{\Gamma} \setminus \{\swap{\varepsilon}{\varepsilon}\}$, where
$\swap{a}{b}$ denotes substitution of $a$ by $b$,
$\swap{a}{a}$ a no-op,
$\swap{a}{\varepsilon}$ deletion of $a$,
and $\swap{\varepsilon}{a}$ insertion of $a$.

An \emph{edit path} from $x$ to $y$ is a finite sequence $\ep$ over $\Gamma$ such that, writing
$\ep=\swap{a_1}{b_1}\swap{a_2}{b_2}\cdots\swap{a_n}{b_n}$, we have $a_1\cdots a_n = x$ and $b_1\cdots b_n = y$.
We denote by $|\ep|=n$ the length of $\ep$.
Let $\dgamma : \Gamma \to [0,1]$ assign weights to edit operations.
The \emph{weight} of an edit path $\ep$ is
$\wgt(\ep)=\sum_{i=1}^n \dgamma(a_i,b_i)$.
Its \emph{cost} is defined as {$\cost(\ep)=\wgt(\ep)/|\ep|$}. 
We write $p:x \leadsto y$ to denote that $p$ is an edit path from $x$ to $y$.

\subparagraph*{Edit-distance notions.}
Several notions of edit distance have been proposed in the literature, differing mainly in how edit operations are aggregated and normalized.
The most basic notion is the \emph{Levenshtein (edit) distance}~\cite{Levenshtein66}, denoted $\ed$, which returns the minimum total weight of an edit path transforming $x$ into $y$.

For applications involving words of substantially different lengths, normalization becomes essential. When words have equal length, normalizing by the word length is straightforward; however, for unequal lengths, naïve normalizations—such as dividing by the maximum, minimum, or sum of the lengths—generally fail to preserve the metric properties (cf.~\cite{LiL07}).
To address this issue, several normalized variants of edit distance have been proposed, including the \emph{normalized edit distance} $\ned$~\cite{MarzalV93}, the \emph{generalized edit distance} $\ged$~\cite{LiL07}, and the \emph{contextual edit distance} $\ced$~\cite{HigueraM08}.
Despite differing in their normalization schemes, all these notions yield bounded distances that satisfy the metric axioms, and are therefore well suited for comparing words of varying lengths.

A common choice of weights is the \emph{uniform weight}, in which no-op operations have cost $0$ and all other edit operations have {weight} $1$. 
All notions but $\ced$ allow non-uniform weights.
When non-uniform weights are considered, they must satisfy additional conditions to ensure that the induced distances $\ed_d$, $\ned_d$, and $\ged_d$ are metrics~\cite{LiL07,FismanGMW22,FismanT24}.
Unless stated otherwise, we henceforth assume the uniform weight.

\begin{mydefinition}[Edit-distance notions]
Let $x,y \in \Sigma^*$.
\begin{itemize}[nosep]
    \item \emph{Levenshtein (edit) distance}~\cite{Levenshtein66}:
    $\ed_d$ minimizes the weight of an edit path:\\[1mm]
    \(
     \phantom{------} \ed_\dgamma(x,y)
    = \min \{ \wgt(\ep) \mid \ep \text{ is an edit path from } x \text{ to } y \}.
    \)
    \vspace{2mm}
    \item \emph{Normalized edit distance}~\cite{MarzalV93,FismanGMW22,FismanT24}:
    $\ned_d$ minimizes the average cost per operation, by dividing by the edit path length:\\[1mm]
    \(
    \phantom{------}  \ned_\dgamma(x,y)
    = \min \{ \cost(\ep) \mid \ep \text{ is an edit path from } x \text{ to } y \}.
    \)
    \vspace{2mm}
    \item \emph{Generalized edit distance}~\cite{LiL07}:
    $\ged$ is another way to obtain an averaged cost:\\[1mm]
    \( \phantom{------}
    \ged_d(x,y)
    = \tfrac{2 \cdot \ed_d(x,y)}{|x| + |y| + \ed_d(x,y)}.
    \)
    \vspace{2mm}    
    \item \emph{Contextual edit distance}~\cite{HigueraM08}:
    Last, $\ced$ provides an averaged cost by considering the context of the edits. Formally, for strings $s,s'$ for which $\ed(s,s')=1$, one defines
    $\ced(s,s') = 1/\max(|s|,|s'|)$.
    For a sequence $\rho=(s_0,\ldots,s_k)$ satisfying $\ed(s_i,s_{i+1})=1$ for every $i<k$,
    let $\ced(\rho)=\smash\sum_{i=1}^k \ced(s_{i-1},s_i)$. 
    Then \\[1mm]
    \( \phantom{------} \ced(x,y)
    = \min \{ \ced(\rho) \mid \rho{=}(s_0,\ldots,s_k),\ s_0{=}x,\ s_k{=}y \}.
    \)
\end{itemize}
\end{mydefinition}

 \begin{myexample}
Consider $x=aab$ and $y=abac$.
One edit path from $x$ to $y$ is
$\ep_1=\swap{a}{a}\swap{a}{b}\swap{b}{a}\swap{\e}{c}$.
Another edit path is
$\ep_2=\swap{a}{\e}\swap{a}{a}\swap{b}{b}\swap{\e}{a}\swap{\e}{c}$.
We have $\wgt(\ep_1)=\wgt(\ep_2)=3$.
Since no edit path has smaller weight, it follows that $\ed(x,y)=3$.
Applying this value in the definition of $\ged$, we obtain
$\ged(x,y)=\frac{2\cdot 3}{3+4+3}=\frac{3}{5}$.
Since $|\ep_1|=4$ and $|\ep_2|=5$, we have
$\cost(\ep_1)=\frac{3}{4}$ and $\cost(\ep_2)=\frac{3}{5}$.
As no edit path has smaller cost, we conclude that $\ned(x,y)=\frac{3}{5}$.
For $\ced$, consider the sequence of strings
$s_0=aab$, $s_1=abb$, $s_2=aba$, and $s_3=abac$.
Note that $\ed(s_{i-1},s_i)=1$ for all $1\le i\le 3$.
Therefore,
$\ced(s_0,s_1,s_2,s_3)=\frac{1}{3}+\frac{1}{3}+\frac{1}{4}=\frac{11}{12}$.
However, a different sequence yields a smaller value.
In particular,
$\ced(aab,aabc,abbc,abac)=\frac{1}{4}+\frac{1}{4}+\frac{1}{4}=\frac{3}{4}$.
Thus, $\ced(x,y)\le \frac{3}{4}$.
\end{myexample}

The values of $\ed$ are clearly unbounded.
In contrast, the values of $\ned$ and $\ged$ are bounded by~$1$ and may attain this bound.
The values of $\ced$ are unbounded; however, they can be made bounded by considering the variant $\ced\,'(x,y)=\max\{1,\ced(x,y)\}$~\cite{FismanGMW22,HigueraM08}.
Finally, we note that $\ged(x,y)\leq \ned(x,y)$ for all $x,y$; see \autoref{claim: ged le ned}.

We establish that each of these notions satisfies the asymptotic separation property, and thus is amenable to lifting via the Asymptotic Hausdorff construction.

\begin{restatable}[]{claim}{claimwordmetricswithasp}\label{claim: word metrics with asp}
The asymptotic separation property holds for $\ed$, $\ned$, $\ged$, $\ced$, and $\prf$.
\end{restatable}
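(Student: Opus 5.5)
The size notion throughout is word length, $s(x)=|x|$. So we take $(x_k)$ with $\limsup_k |x_k|=\infty$ and $(y_k)$ with $|y_k|\le N$ for all large $k$, and we must show $d(x_k,y_k)\to\sup d$ for each listed $d$.

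A caveat comes first. An $s$-unbounded sequence only has unbounded $\limsup$, so a subsequence could stay short, e.g.\ $x_k=\varepsilon$ for odd $k$. Read literally, such a sequence has no limit, so the definition must be meant for sequences with $|x_k|\to\infty$. This is exactly how it is used in the proof of \autoref{thm:asymptotic relative distance satisfiess triangle inequality}, where one passes to a subsequence with $s(x_{k_j})\to\infty$. I will adopt that reading, or equivalently prove the statement along every subsequence with $|x_{k_j}|\to\infty$. With this in place, the core is a per-metric lower bound depending only on $|x|$ and $N$.

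\textbf{Unbounded metrics.} For $\ed$ and $\prf$ we have $\sup d=\infty$.
\begin{itemize}
\item Any edit path from $x$ to $y$ must delete or substitute at least $|x|-|y|$ letters, so $\ed(x_k,y_k)\ge |x_k|-N\to\infty$.
\item Directly from the definition, $\prf(x_k,y_k)=|x_k|+|y_k|-2|\mathrm{lcp}|\ge |x_k|-|y_k|\ge|x_k|-N$.
\item For $\ced$ (unbounded, $\sup=\infty$), use a chain $x=s_0,\dots,s_m=y$ of unit edits. At least $|x|-N$ steps must reduce length, and every intermediate string has length at most $|x|+m$. A cleaner bound: while passing from length $|x|$ down to length $N$, the step leaving a string of length $\ell$ costs at least $1/\ell$. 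Summing the length-decreasing steps gives at least $\sum_{\ell=N+1}^{|x|}1/\ell\approx\ln(|x|/N)\to\infty$.

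This harmonic argument needs care. It is the step I expect to be the main obstacle, since detours through longer strings must not help. I would argue that the last time the chain is at length $\ell$ before first reaching length $\le \ell-1$, it performs a deletion from a string of length $\ell$, costing $1/\ell$. The levels $\ell$ are distinct, so the costs sum. If the bounded variant $\ced'$ (capped at $1$) is intended, the same bound shows it tends to $1$.
\end{itemize}

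\textbf{Normalized metrics.} For $\ned$ and $\ged$, $\sup d=1$, attained for instance at $\ned(a,b)=1$.
\begin{itemize}
\item For $\ged$: $\ed(x,y)\ge|x|-N$ and the map $t\mapsto 2t/(c+t)$ is increasing, so
\[
\ged(x_k,y_k)\ge \frac{2(|x_k|-N)}{|x_k|+N+|x_k|-N}=1-\frac{N}{|x_k|}\to 1 .
\]
\item For $\ned$: any edit path $p$ from $x$ to $y$ has at most $|y|\le N$ operations consuming a letter of $y$. Hence at most $N$ operations have weight $0$, and the weight-$1$ operations number at least $|p|-N$. Therefore
\[
\mathrm{cost}(p)\ge 1-\frac{N}{|p|}\ge 1-\frac{N}{|x_k|},
\]
since $|p|\ge|x|$. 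Thus $\ned(x_k,y_k)\to1$.
\end{itemize}

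Finally, combine the bounds with the subsequence reading above to conclude $\lim_k d(x_k,y_k)=\sup d$ in each case.
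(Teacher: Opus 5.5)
Your proof is correct and follows essentially the same route as the paper: for each metric you lower-bound $d(x_k,y_k)$ by a quantity driven by the length gap $|x_k|-N$. The differences are minor: for $\ned$ you count the at most $N$ zero-weight no-ops, where the paper uses $\wgt(p)\ge\big||x|-|y|\big|$ together with $|p|\le|x|+|y|$, and for $\ced$ you give a rigorous level-crossing version of the harmonic-sum estimate that the paper only asserts. Your observation that the definition must be read with $|x_k|\to\infty$ (or along such subsequences) is sound, and it is exactly the reading the paper uses implicitly.
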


\begin{corollary}
    $\lift{AH}_{\ed}$, $\lift{AH}_{\ned}$,
    $\lift{AH}_{\ged}$, $\lift{AH}_{\ced}$, and $\lift{AH}_{\prf}$, are all pseudo-metrics on the set of languages.
\end{corollary}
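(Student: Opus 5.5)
The statement is the corollary, so the plan is to combine \autoref{claim: word metrics with asp} with \autoref{theorem:Asymptotic Hausdorff is metric}. The size notion on $M=\Sigma^*$ is $s(x)=|x|$, and the sets being lifted are subsets of $\Sigma^*$, i.e.\ languages. For each $\elementmetric{d}\in\{\ed,\ned,\ged,\ced,\prf\}$ I will check two things. First, that $\elementmetric{d}$ is a pseudo-metric on $\Sigma^*$. Second, that it has the asymptotic separation property with respect to $s$. Then \autoref{theorem:Asymptotic Hausdorff is metric} yields that $\lift{AH}_{\elementmetric{d}}$ is a pseudo-metric on $\powerset(\Sigma^*)$.

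The pseudo-metric hypothesis comes from the literature cited in \autoref{subsec:word-metrics}. Under the uniform weights assumed there, $\ed$, $\ned$, $\ged$ and $\ced$ are metrics. For $\prf$, symmetry and $\prf(x,x)=0$ are immediate, and the triangle inequality holds because $\prf$ is the shortest-path distance in the prefix tree of $\Sigma^*$. For $\ced$ I would note that the triangle inequality is built into its definition: it is a minimum over chains of unit edits, and two chains can be concatenated. The asymptotic separation property for all five is exactly \autoref{claim: word metrics with asp}, which may be assumed.

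It remains to unwind the conclusion of \autoref{theorem:Asymptotic Hausdorff is metric}. Reflexivity holds because $\dar(X,X)=0$: take $y=x$ in the infimum. Symmetry holds because $\lift{AH}$ is defined as a maximum of the two directions. The triangle inequality for the maximum follows from \autoref{thm:asymptotic relative distance satisfiess triangle inequality}, applied once in each direction:
\[
\dar(X,Z)\le \dar(X,Y)+\dar(Y,Z)\le \lift{AH}(X,Y)+\lift{AH}(Y,Z),
\]
and symmetrically for $\dar(Z,X)$.

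The one genuine subtlety is that $\ed$, $\ced$ and $\prf$ are unbounded, so $\sup \elementmetric{d}=\infty$ and the lifted values may be $\infty$. I would check that the proof of \autoref{thm:asymptotic relative distance satisfiess triangle inequality} goes through in $[0,\infty]$. In Case 1 it gives $\dar(X,Y)=\sup \elementmetric{d}$, and the required inequality is then trivial in the extended reals. Elsewhere, whenever the right-hand side is infinite the inequality holds trivially. So the lifted distances are extended pseudo-metrics for the unbounded base metrics, which I take to be the intended reading of ``pseudo-metric'' for those cases.
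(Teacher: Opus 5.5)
Your proposal is correct and follows the paper's route exactly: the corollary is obtained by feeding \autoref{claim: word metrics with asp} (asymptotic separation for all five word metrics, with $s(x)=|x|$) into \autoref{theorem:Asymptotic Hausdorff is metric}. You also observe that for the unbounded metrics $\ed$, $\ced$ and $\prf$ the lifted values can be infinite (e.g.\ $\lift{AH}_{\ed}(a^*,b^*)=\infty$), so the conclusion must be read as an extended pseudo-metric and the proof of \autoref{thm:asymptotic relative distance satisfiess triangle inequality} checked in $[0,\infty]$; this is a correct refinement that the paper glosses over.
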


\subsection{$\lift{AH}$ for Normalized Edit Distances}
Recall the three requirements given in the introduction. We claim that the three notions of normalized edit distance $\ned$, $\ged$ and $\ced$, all satisfy these properties.

\begin{restatable}[$\nedAH, \gedAH, \cedAH$ are finite-subset indifferent]{claim}{clmfinitesubsetindif}\label{clm:finite-subset-indif} 
 Let $X,Y$ be infinite language and $F$ a finite languages. 
Then $\metric{D}(X,Y)=\metric{D}(X{\cup}F,Y)$ for every $\metric{D}\in\{ \nedAH, \gedAH, \cedAH\}$. 
\end{restatable}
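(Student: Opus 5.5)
The plan is to show that both directional components of $\metric{D}=\lift{AH}_{\elementmetric{d}}$, for $\elementmetric{d}\in\{\ned,\ged,\ced\}$, are unchanged when $X$ is replaced by $X\cup F$. Here the size notion is word length, $s(x)=|x|$. Once $\rlift{AH}_{\elementmetric{d}}(X\cup F,Y)=\rlift{AH}_{\elementmetric{d}}(X,Y)$ and $\rlift{AH}_{\elementmetric{d}}(Y,X\cup F)=\rlift{AH}_{\elementmetric{d}}(Y,X)$ are established, the claim follows by taking the maximum.

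\textbf{First direction.} Let $m=\max\{|w|: w\in F\}$, which is finite since $F$ is finite. For every $k>m$, the set $\{x\in X\cup F: |x|\ge k\}$ coincides with $\{x\in X: |x|\ge k\}$, so the inner supremum in Definition~\ref{def:asymptotic relative distance} is identical for $X$ and $X\cup F$ from $k=m+1$ onward. The limits therefore agree. This step needs no property of $\elementmetric{d}$.

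\textbf{Second direction (the main step).} Here $F$ sits on the target side, inside the infimum $\inf_{z\in X\cup F}\elementmetric{d}(y,z)=\min\{\inf_{x\in X}\elementmetric{d}(y,x),\ \min_{w\in F}\elementmetric{d}(y,w)\}$. Adding targets can only decrease this infimum, so $\rlift{AH}(Y,X\cup F)\le\rlift{AH}(Y,X)$. For the reverse inequality I would use the asymptotic separation property, Claim~\ref{claim: word metrics with asp}, in a quantitative uniform form. Fix $\e>0$. Since $\sup\elementmetric{d}=1$ for $\ned$ and $\ged$, there is $K$ such that $|y|\ge K$ implies $\elementmetric{d}(y,w)\ge 1-\e$ for every $w\in F$. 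This follows because $F$ is finite: otherwise one could pick $y_k$ with $|y_k|\to\infty$ and a fixed $w\in F$ (after passing to a subsequence) with $\elementmetric{d}(y_k,w)<1-\e$. That contradicts asymptotic separation applied to the unbounded sequence $(y_k)$ and the constant bounded sequence $(w)$. Consequently, for $|y|\ge K$,
\[
\inf_{z\in X\cup F}\elementmetric{d}(y,z)\ \ge\ \min\Bigl\{\inf_{x\in X}\elementmetric{d}(y,x),\,1-\e\Bigr\}\ \ge\ \inf_{x\in X}\elementmetric{d}(y,x)-\e ,
\]
where the last step uses $\inf_{x\in X}\elementmetric{d}(y,x)\le 1$. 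Taking suprema over $|y|\ge k\ge K$ and letting $k\to\infty$ gives $\rlift{AH}(Y,X\cup F)\ge\rlift{AH}(Y,X)-\e$. Since $\e$ is arbitrary, equality follows.

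\textbf{Obstacle: $\ced$.} The subtle point is that the bound $\inf_X\elementmetric{d}\le\sup\elementmetric{d}$ requires $\sup\elementmetric{d}$ to be finite. $\ced$ is unbounded, so for it I would argue directly rather than through a value of the supremum. Every $w\in F$ has length at most $m$, and $\ced(y,w)\to\infty$ as $|y|\to\infty$: each unit edit contributes at least $1/\max$ of the current lengths, so shrinking $y$ to length $m$ costs at least a harmonic sum, roughly $\ln(|y|/m)$. Hence for $|y|$ large, $\ced(y,w)$ exceeds $\inf_{x\in X}\ced(y,x)$ whenever the latter is finite, and the minimum is attained on $X$. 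If the infimum over $X$ is itself infinite-growing, the same $\e$-slack argument applies with ``$\ge C$'' replacing ``$\ge 1-\e$'', for an arbitrary large $C$. (If the bounded variant $\ced'$ is intended, the argument for $\ned$ and $\ged$ applies verbatim.)
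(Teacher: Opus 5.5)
Your proof is correct. The first direction is exactly the paper's, but for the reverse inequality $\dar(Y,X\cup F)\ge\dar(Y,X)$ you take a different route.

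The paper fixes a sequence $(y_n)\subseteq Y$ that nearly realizes $\dar(Y,X)$ and splits into two cases:
\begin{itemize}
\item If eventually every $x\in X$ is more than $\e$ worse than the best point of $X\cup F$, the infimum is attained in $F$. Asymptotic separation then forces $\dar(Y,X\cup F)=\sup d$, and the sandwich $\sup d\le\dar(Y,X\cup F)\le\dar(Y,X)\le\sup d$ closes the argument.
\item Otherwise, infinitely often an $\e$-near-optimal point lies in $X$, and this transfers the lower bound.
\end{itemize}
You instead upgrade asymptotic separation to a uniform statement over the finite set $F$, via your subsequence/pigeonhole argument. That yields the pointwise inequality $\inf_{X\cup F}d(y,\cdot)\ge\inf_X d(y,\cdot)-\e$ for all sufficiently long $y$, after which you only take suprema and limits. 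Your route avoids choosing an extremal sequence and the case split, and it makes explicit where finiteness of $F$ enters. Its cost is that the subtraction step needs $\sup d<\infty$. The paper's case analysis works unchanged in the extended reals, so $\ced$ needs no separate treatment there.

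Your $\ced$ paragraph is the one loose spot. The sentence claiming that $\ced(y,w)$ exceeds $\inf_{x\in X}\ced(y,x)$ ``whenever the latter is finite'' is not justified as written: for nonempty $X$ that infimum is always finite, yet it may grow with $|y|$. Also, ``finite versus infinite-growing'' is not a clean dichotomy along an arbitrary family of long $y$'s.

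The clean form of your own ``$\ge C$'' fix handles all three metrics at once:
\begin{itemize}
\item Take any $C<\sup d$ (any real $C$ if $\sup d=\infty$). Finiteness of $F$ plus asymptotic separation give $K_C$ with $\min_{w\in F}d(y,w)\ge C$ whenever $|y|\ge K_C$.
\item Hence $\inf_{X\cup F}d(y,\cdot)\ge\min\{\inf_X d(y,\cdot),C\}$ for all such $y$.
\item The map $t\mapsto\min\{t,C\}$ is nondecreasing and continuous, so it commutes with the supremum over $|y|\ge k$ and with the limit in $k$. This gives $\dar(Y,X\cup F)\ge\min\{\dar(Y,X),C\}$.
\item Letting $C\uparrow\sup d$ finishes the proof, since $\dar(Y,X)\le\sup d$. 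No bounded/unbounded case distinction is needed.
\end{itemize}
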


\begin{restatable}[$\nedAH, \gedAH, \cedAH$ are bounded-edits insensitive]{claim}{clmboundededits}\label{clm:bounded-edits} Let $X,Y{\subseteq} \Sigma^*$.
If $\exists k{\in}\bb{N}$ such that for every $x{\in}X$ there exists $y{\in}Y$ such that $\ed(x,y)<k$ and vice versa then 
$\metric{D}(X,Y)=0$ for every $\metric{D}\in\{ \nedAH, \gedAH, \cedAH\}$. 
\end{restatable}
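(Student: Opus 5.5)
We prove the claim by showing that both asymptotic directional distances vanish. The key fact is that a bounded number of edits costs vanishingly little under each normalization once the words are long. Fix $\metric{D}\in\{\nedAH,\gedAH,\cedAH\}$ with underlying word metric $d\in\{\ned,\ged,\ced\}$, and take the size notion $s(x)=|x|$. Since $\metric{D}(X,Y)=\max\{\rlift{AH}_d(X,Y),\rlift{AH}_d(Y,X)\}$ and the hypothesis is symmetric in $X$ and $Y$, it suffices to prove $\rlift{AH}_d(X,Y)=0$. By \autoref{def:asymptotic relative distance}, it is enough to show that $\inf_{y\in Y} d(x,y)\le f(|x|)$ for some function $f$ with $f(n)\to 0$ as $n\to\infty$. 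Then the supremum over $x\in X$ with $|x|\ge k$ is at most $\sup_{n\ge k} f(n)$, which tends to $0$. If $X$ is finite, that supremum is over the empty set and the value is $0$ trivially.

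\textbf{Pointwise bounds.} Fix $x\in X$ and choose $y\in Y$ with $\ed(x,y)<k$. Then $|y|\ge |x|-k$. We bound each normalized distance by a function of $\ed(x,y)$ and $|x|$ alone.

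\emph{For $\ned$:} take an edit path $\ep$ realizing $\ed(x,y)$. Its weight is below $k$, and its length is at least $\max(|x|,|y|)\ge |x|$, since every letter of $x$ is consumed by some operation. Hence $\ned(x,y)\le \cost(\ep)< k/|x|$.

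\emph{For $\ged$:} directly, $\ged(x,y)=\frac{2\ed(x,y)}{|x|+|y|+\ed(x,y)}\le \frac{2k}{|x|}$. Alternatively, one can invoke $\ged\le\ned$ (\autoref{claim: ged le ned}).

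\emph{For $\ced$:} take a shortest sequence $x=s_0,\dots,s_m=y$ with $m=\ed(x,y)<k$ and $\ed(s_{i-1},s_i)=1$, obtained by applying the operations of an optimal edit path one at a time. Each intermediate string has length at least $|x|-k$, so every summand satisfies $\ced(s_{i-1},s_i)=1/\max(|s_{i-1}|,|s_i|)\le 1/(|x|-k)$ whenever $|x|>k$. Therefore $\ced(x,y)\le k/(|x|-k)$. If the bounded variant $\ced\,'$ is intended, the same bound applies, since it agrees with $\ced$ once the value is small.

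\textbf{Conclusion.} Taking $f(n)=2k/n$ for $\ned$ and $\ged$, and $f(n)=k/(n-k)$ for $\ced$ when $n>k$, gives $\rlift{AH}_d(X,Y)\le \lim_k \sup_{n\ge k} f(n)=0$. The only step needing care is $\ced$: we must check that the intermediate strings never get much shorter than $x$, which follows because fewer than $k$ unit edits change the length by fewer than $k$. We must also handle the small-length cases, and these are harmless because the limit discards them.
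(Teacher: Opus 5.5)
Your proposal is correct and follows the same route as the paper. You bound $\inf_{y\in Y} d(x,y)$ pointwise by $k/|x|$ for $\ned$, by $2k/|x|$ for $\ged$, and by $k/(|x|-k)$ for $\ced$, then let $|x|\to\infty$ in the definition of the asymptotic directional distance, applying the hypothesis in each direction separately. You also supply details the paper leaves implicit: an edit path has length at least $|x|$, and the intermediate strings in the $\ced$ chain never fall below length $|x|-k$. One small fix: don't use $k$ both for the edit bound and for the limit index.
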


\begin{restatable}[$\nedAH, \gedAH, \cedAH$ have a percentage nature]{claim}{clmprecentagenature}\label{clm:precentage-nature} 
For every $\metric{D}\in\{ \nedAH, \gedAH, \cedAH\}$ we have that
$\metric{D}(a^*,(a^jb)^*) < \metric{D}(a^*,(a^ib)^*)$ for all $j > i.$ 
\end{restatable}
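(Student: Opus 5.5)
The plan is to compute each of the three distances in closed form as a function of $i$ and then compare the values. By symmetry of the definition, $\metric{D}(a^*,(a^ib)^*)=\max\{\rlift{AH}(a^*,(a^ib)^*),\rlift{AH}((a^ib)^*,a^*)\}$, so I would compute both directional quantities, for each $\metric{D}$ in turn. In every case I expect the value to be a strictly decreasing function of $i$: $\tfrac{1}{i+1}$ for $\nedAH$ and $\cedAH$, and $\tfrac{2}{2i+3}$ for $\gedAH$. The claim then follows immediately for $j>i$.

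\textbf{Case $\ned$.} For the direction $(a^ib)^*\to a^*$, take $x=(a^ib)^n$. Substituting every $b$ by $a$ gives a path of length $(i+1)n$ and weight $n$, hence cost $\tfrac{1}{i+1}$. For the matching lower bound, fix any path from $x$ to some $a^m$ and let $w$ be its weight and $o$ its number of no-ops. Every $b$ must be deleted or substituted, so $w\ge n$. Only $a$'s can be matched, so $o\le in$. Since $w/(w+o)$ is increasing in $w$, the cost satisfies $w/(w+o)\ge n/(n+in)=\tfrac{1}{i+1}$. For the direction $a^*\to(a^ib)^*$, take $x=a^m$. Substituting every $(i+1)$-st $a$ by $b$ and repairing a bounded remainder gives cost $\tfrac{1}{i+1}+o(1)$. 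For the lower bound, fix any path to $y=(a^ib)^n$. Then $o\le\min(m,in)$, and $w\ge\max(n,m-in)$ because every $b$ of $y$ costs something and every unmatched $a$ of $x$ costs something. If $n\ge m/(i+1)$, the bound $w\ge n$ gives cost at least $\tfrac{1}{i+1}$. Otherwise $w\ge m-in>n$, and monotonicity in $w$ again gives cost at least $\tfrac{1}{i+1}$. Hence $\nedAH(a^*,(a^ib)^*)=\tfrac{1}{i+1}$.

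\textbf{Case $\ged$.} Here the key fact is that $\ged$ is increasing in $\ed$ once $|x|+|y|$ is fixed. I would minimize $2e/(|x|+|y|+e)$ over the target length, using $\ed((a^ib)^n,a^m)\ge\max(n,|(i+1)n-m|)$ and the symmetric bound for the other direction. The minimum should be attained at $m=(i+1)n$ with $e=n$, which gives $\tfrac{2n}{2(i+1)n+n}=\tfrac{2}{2i+3}$. The same estimate should also show that the direction $a^*\to(a^ib)^*$ tends to this value. This step is routine.

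\textbf{Case $\ced$: the main obstacle.} For $\ced$ the upper bound is easy. Perform the $n$ substitutions in place: each is a single edit on a string of length $(i+1)n$, so the total is $\tfrac{1}{i+1}$ in both directions, up to $o(1)$ corrections from bounded remainders. The difficulty is the exact lower bound. The crude argument — at least $n$ unit edits, each on strings of length at most $\max(|x|,|y|)+k$ — only yields about $\tfrac{1}{i+2}$, which does not separate consecutive values of $i$. I would first try a potential-type argument. Each unit step from $s_{t-1}$ to $s_t$ that changes the number of positions disagreeing with the target letter pattern by one costs $1/\max(|s_{t-1}|,|s_t|)$. Insertions that lengthen the string must themselves be paid for, so inflating the length to cheapen later edits is never profitable. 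Concretely, I would show that any sequence can be rearranged so that all length-increasing steps occur last and all length-decreasing steps occur first, without increasing the total $\ced$. On such a normalized sequence the bound $\sum_t 1/|s_t|\ge \tfrac{n}{(i+1)n}$ can then be checked directly. If this rearrangement proves delicate, a fallback is to prove only strict monotonicity: compare the optimal sequences for $i$ and $j$ via an explicit injection of required $b$-edits, and obtain a strict gap from the extra $b$'s per block.
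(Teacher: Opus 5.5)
Your plan is the paper's: compute the closed forms $\tfrac{1}{i+1}$, $\tfrac{2}{2i+3}$ and $\tfrac{1}{i+1}$ for $\nedAH$, $\gedAH$ and $\cedAH$, and observe that they strictly decrease in $i$. The $\ned$ and $\ged$ parts are sound. Your $\ned$ lower bound ($w\ge n$, $o\le in$) is more careful than the paper's. It also holds verbatim in the direction $a^*\to(a^ib)^*$, so the case split there is unnecessary.

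\textbf{The gap is the $\ced$ lower bound.} You rightly flag it as indispensable, since a bound like $\tfrac{1}{i+2}$ cannot separate $j=i+1$ from $i$. But your proposed normalization lemma is false. A unit edit costs $1/\max(|s|,|s'|)$, so edits are \emph{cheaper} on longer strings. Postponing insertions or advancing deletions can therefore only increase the total. For instance, $b\to ba\to aa$ costs $\tfrac12+\tfrac12=1$, while the reordered $b\to a\to aa$ costs $1+\tfrac12$. In general, an insertion at length $L$ followed by a substitution costs $\tfrac{2}{L+1}<\tfrac1L+\tfrac1{L+1}$, which is the cost of the swapped order; symmetrically, a substitution followed by a deletion costs $\tfrac2L<\tfrac1L+\tfrac1{L-1}$. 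So no reduction to ``shrink first, grow last'' sequences is available. Your fallback (``an injection of required $b$-edits'') does not by itself give the sharp bound that strictness needs. The paper's own proof simply asserts the $\ced$ value, so this is a step you must supply in any case.

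One way to close it is a Lipschitz potential. Let $\Phi(s)=|s|_b/|s|$ for $s\neq\varepsilon$ and $\Phi(\varepsilon)=0$, and check each kind of unit step:
\begin{itemize}
\item For a substitution at length $L$, $|s|_b$ changes by at most one, so $\Phi$ moves by at most $\tfrac1L$.
\item For an insertion from length $L\ge 1$ to $L+1$, with $r=|s|_b\le L$, $\Phi$ moves by $\tfrac{r}{L(L+1)}$ (non-$b$ letter) or $\tfrac{L-r}{L(L+1)}$ (letter $b$). Both are at most $\tfrac{1}{L+1}$, which is exactly the cost.
\item Deletions are the reverse steps and have the same cost.
\item Steps to or from $\varepsilon$ cost $1\ge|\Phi(s)-\Phi(s')|$.
\end{itemize}
Hence $|\Phi(s)-\Phi(s')|\le\ced(s,s')$ for every unit step, and telescoping gives $\ced(x,y)\ge|\Phi(x)-\Phi(y)|$ for all $x,y$. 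This yields $\ced\big((a^ib)^n,a^m\big)\ge\tfrac{1}{i+1}$, and $\ced\big(a^m,(a^ib)^n\big)\ge\tfrac{1}{i+1}$ for $n\ge1$. For $m\ge 1$ we also have $\ced(a^m,\varepsilon)\ge 1$, since the last step into $\varepsilon$ costs $1$. Together with your in-place-substitution upper bounds, this gives $\cedAH(a^*,(a^ib)^*)=\tfrac{1}{i+1}$ and completes the argument.
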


\begin{corollary}\label{cor:all-normalized-satisfy-all-reqs}
    $\nedAH, \gedAH, \cedAH$ satisfy all of our requirements. 
\end{corollary}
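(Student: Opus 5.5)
The statement to prove is \autoref{cor:all-normalized-satisfy-all-reqs}. The plan is to assemble it from results already in the paper, each matched to one of the four demands made in the introduction.

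\textbf{Pseudo-metric.} By \autoref{claim: word metrics with asp}, $\ned$, $\ged$ and $\ced$ have the asymptotic separation property. Hence \autoref{theorem:Asymptotic Hausdorff is metric} shows that $\nedAH$, $\gedAH$ and $\cedAH$ are pseudo-metrics on $\powerset(\Sigma^*)$.

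\textbf{Bounded codomain.} We check that the values lie in $[0,1]$. For $\ned$ and $\ged$ the word-level distances are bounded by $1$. Every $\sup$/$\inf$ and limit in \autoref{def:asymptotic relative distance} preserves this bound, so the lifted values also lie in $[0,1]$.

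For $\ced$ a little more care is needed. One either works with the bounded variant $\ced\,'$ mentioned earlier, or observes that each directional value is at most $\sup \ced$. With a suitable normalisation this again keeps the lifted values in $[0,1]$. This bookkeeping for $\ced$ is the only step that is not a direct citation, and it is where I would be most careful. If needed, the fallback is to state the bound for the truncated variant, which is a metric with the same asymptotic behaviour.

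\textbf{The three requirements.}
\begin{itemize}
\item \emph{Outlier insensitivity} is exactly \autoref{clm:finite-subset-indif}.
\item \emph{Bounded-edits insensitivity} is exactly \autoref{clm:bounded-edits}. Its hypothesis (edit distance $<k$ in both directions) matches the requirement's hypothesis (at most $k$ edits) after replacing $k$ by $k+1$.
\item \emph{Percentage nature} is exactly \autoref{clm:precentage-nature}.
\end{itemize}
Combining the pseudo-metric property, the bounded codomain and these three claims yields the corollary.
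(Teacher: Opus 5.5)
Your handling of the pseudo-metric property and of the three numbered requirements is exactly the paper's route. The corollary is just the conjunction of \autoref{clm:finite-subset-indif}, \autoref{clm:bounded-edits} and \autoref{clm:precentage-nature}, with pseudo-metricity supplied by \autoref{claim: word metrics with asp} together with \autoref{theorem:Asymptotic Hausdorff is metric}. Your $k\mapsto k+1$ remark correctly reconciles the two bounded-edits hypotheses.

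\textbf{The failing step: bounded codomain for $\ced$.}

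\begin{itemize}
\item \emph{The bound you cite is vacuous.} Since $\ced$ is unbounded, $\sup\ced=\infty$, so ``each directional value is at most $\sup\ced$'' gives nothing.

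\item \emph{The claim is actually false.} The asymptotic separation property you invoke forces $\cedar(X,F)=\sup\ced=\infty$ for every infinite $X$ and finite $F$. Concretely, a $\ced$-sequence from $a^n$ to $\varepsilon$ must take a length-decreasing step at each length $\ell\le n$, each costing $1/\ell$. Hence $\ced(a^n,\varepsilon)=H_n$, the $n$-th harmonic number, and $\cedAH(a^*,\{\varepsilon\})=\lim_n H_n=\infty$.

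\item \emph{It also fails for two infinite languages.} The same counting argument applied to $x=a^{10\cdot 100^k}$ gives $\cedAH(a^*,\{a^{100^k}\mid k\in\bb{N}\})\ge\ln 10$.

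\item \emph{The fallback changes the object.} No normalisation of $\cedAH$ itself lands in $[0,1]$. Since $t\mapsto\min\{1,t\}$ is monotone and continuous, it commutes with the $\sup$, $\inf$, $\lim$ and $\max$ in \autoref{def:asymptotic relative distance}. So the truncated variant gives $\lift{AH}_{\min\{1,\ced\}}=\min\{1,\cedAH\}$, which is not $\cedAH$.
\end{itemize}

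\textbf{The repair.} Drop the $[0,1]$ claim for $\cedAH$, and keep it only for $\ned$ and $\ged$, where your argument is fine. The corollary, as the paper states it, asserts only the three numbered requirements (with pseudo-metricity established just before), and your citations already cover those.

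If you want a bounded $\ced$-based distance, say explicitly that you switch to the truncated variant. The three requirements transfer to it, because truncation is monotone and the smaller side of each percentage comparison is already below $1$.
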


Recall (cf.\ \autoref{req:percentage}) that one may expect
$\metric{D}(a^*,(ab)^*)=\frac{1}{2}$ and, more generally,
$\metric{D}(a^*,(a^kb)^*)=\frac{1}{k+1}$.
We deliberately adopted a more relaxed requirement, as enforcing these equalities exactly would be overly restrictive.
Nevertheless, we show that $\nedAH$ does satisfy this stricter behavior, whereas $\gedAH$ and $\cedAH$ do not.

\begin{requirement}[Strict percentage property]
We say that a language metric $\metric{D}$ satisfies the \emph{strict percentage property} if the following hold:
\begin{itemize}[nosep]
\item $\metric{D}\left(a^*,(a^kb^l)^*\right)=\frac{l}{k+l}$ for all $k,l\geq 0$ with $k+l\geq 1$;
\item $\metric{D}\left((a^ib)^*,(a^kb)^*\right)=\frac{k-i}{k+1}$ for all $i<k$.
\end{itemize}
\end{requirement}

\begin{restatable}[Satisfaction of the strict percentage property]{claim}{claimahnedsatisfiesstrictprecentage}\label{claim:ahned satisfies strict precentage}
The strict percentage property is satisfied by $\nedAH$, but not by $\gedAH$ or $\cedAH$.
\end{restatable}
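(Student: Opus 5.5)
The plan is to prove the two halves of the claim separately. For $\nedAH$ I prove exact values via matching upper and lower bounds on $\ned$ between individual words. For $\gedAH$ and $\cedAH$ I exhibit one instance of the first bullet, $\metric{D}(a^*,(ab)^*)$, whose value is strictly below $\tfrac12$.

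\emph{Basic counting for $\ned$.} For an edit path $\ep:x\leadsto y$, let $c$ be its number of no-ops and $w=\wgt(\ep)$ its number of non-no-op operations, so $|\ep|=c+w$ and $\cost(\ep)=w/(w+c)$. This expression is increasing in $w$ and decreasing in $c$. Every letter of $x$ or $y$ not matched by a no-op lies in a non-no-op operation, so $w\ge \max(|x|,|y|)-c$. Also $c$ is bounded by the length of a common subsequence of $x$ and $y$.

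\emph{First bullet for $\ned$.} Let $x=a^n$ and $y=(a^kb^l)^m$.
\begin{itemize}
\item Upper bounds. For $x=a^n$ take $m=\lfloor n/k\rfloor$ and insert the $b$'s (deleting at most $k-1$ leftover $a$'s). For $y=(a^kb^l)^m$ take $a^{mk}$ and delete the $b$'s. Both paths have cost tending to $\tfrac{l}{k+l}$.
\item Lower bound. Every $b$ of $y$ is a non-no-op, so $w\ge ml$, and $c\le\min(n,mk)$.
\begin{itemize}
\item If $mk\ge n$, the cost is at least $ml/(ml+n)$. This is increasing in $m$, so it is at least $\tfrac{l}{k+l}$.
\item If $mk<n$, the cost is at least $ml/(ml+mk)=\tfrac{l}{k+l}$.
\end{itemize}
\end{itemize}
This bound holds for every pair of words, so both asymptotic directional distances equal $\tfrac{l}{k+l}$. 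The degenerate cases $k=0$ and $l=0$ give $1$ and $0$ directly.

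\emph{Second bullet for $\ned$.} Take $x=(a^ib)^m$ and $y=(a^kb)^{m'}$ with $i<k$.
\begin{itemize}
\item Upper bound. With $m'=m$, deleting (resp.\ inserting) $k-i$ $a$'s per block gives cost exactly $\tfrac{k-i}{k+1}$ in each direction.
\item Lower bound. This is the main obstacle. Pure letter counting is too weak: when $m'<m$ it only yields a bound near $0$, because it ignores that $y$ has runs of $k$ $a$'s whereas $x$ has runs of only $i$. I would use a charging argument on the no-op alignment. Consider a $y$-block $a^kb$ whose $a$'s are matched into $x$. Either at least $k-i$ of its $a$'s are unmatched (inserted or substituted), or its matched $a$'s span at least one $b$ of $x$, and that $b$ of $x$ must then be a non-no-op. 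Charge these non-no-ops to $y$-blocks, together with the symmetric charges for unmatched $x$-blocks.
\end{itemize}
The goal is to show $w\ge \tfrac{k-i}{k+1}(w+c)-O(1)$ for every path. Equivalently, after normalisation, check the worst case of a one-variable piecewise-linear optimisation in $t=m'/m$, verifying that its minimum is at $t=1$. I would first verify this for $i=1,k=2$, then generalise.

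\emph{Failure for $\gedAH$ and $\cedAH$.} I take $a^*$ versus $(ab)^*$, where the strict property demands $\tfrac12$.
\begin{itemize}
\item For $\ged$:
\begin{itemize}
\item For $a^{2m}$, substituting every second letter gives $\ed=m$ and $\ged=\tfrac{2m}{2m+2m+m}=\tfrac25$.
\item For $(ab)^m$, the same pair gives $\tfrac25$ in the other direction.
\end{itemize}
Hence $\gedAH(a^*,(ab)^*)\le\tfrac25<\tfrac12$.
\item For $\ced$:
\begin{itemize}
\item Going from $a^{2m}$ to $(ab)^{2m}$ by inserting $2m$ $b$'s one at a time costs $\sum_{j=1}^{2m}\tfrac{1}{2m+j}\to\ln 2$. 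This is not below $\tfrac12$, so it is the wrong choice.
\item Instead target $(ab)^m$. The $2m$-letter sequence starting from $a^{2m}$ is reached by $m$ substitutions of cost $\tfrac{1}{2m}$ each, i.e.\ total $\tfrac12$. Mixing deletions of $a$'s with insertions of $b$'s lowers this below $\tfrac12$. I would use the sequence that first inserts $\lfloor m/2\rfloor$ $b$'s and substitutes the rest, and compute its limit via harmonic sums.
\item The reverse direction is symmetric, using deletions.
\end{itemize}
\end{itemize}
If the $\ced$ computation turns out delicate, I would instead pick a different instance of the strict property, such as $(a^ib)^*$ versus $(a^kb)^*$ with larger parameters, where the harmonic-sum slack is larger.
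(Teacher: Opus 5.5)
\textbf{The $\nedAH$ half cannot be completed, because its second bullet is false.} Your argument for the first bullet is correct. The lower bound you plan for $(a^ib)^*$ versus $(a^kb)^*$, however, already fails for $i=1$, $k=2$. The path $\swap{a}{a}\swap{b}{\varepsilon}\swap{a}{a}\swap{b}{b}$ turns $abab$ into $aab$ at cost $\tfrac14$. Concatenating $t$ copies gives $\ned((ab)^{2t},(aab)^t)\le\tfrac14$, and the reversed path (insertions) gives the same bound starting from $(aab)^t$. Odd exponents need only $O(1)$ extra edits. Hence $\nedAH((ab)^*,(aab)^*)\le\tfrac14<\tfrac13=\tfrac{k-i}{k+1}$.

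Your own dichotomy explains why. Take a $y$-block $aab$ whose two $a$'s are both matched. It costs just one deleted $b$ of $x$ (this must be a deletion, since nothing in $y$ lies between those $a$'s), while contributing three no-ops. That beats the $m'=m$ alignment, which pays one insertion per two no-ops. Made precise, the dichotomy gives $w\ge m'$ and $c\le|y|=3m'$, so the value is exactly $\tfrac14$, attained at $m'/m=\tfrac12$ rather than at $m'/m=1$. More generally, for $k=qi$, deleting $q-1$ of the $b$'s in each $(a^ib)^q$ costs $\tfrac{q-1}{q(i+1)}<\tfrac{(q-1)i}{qi+1}=\tfrac{k-i}{k+1}$. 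So the assertion that $\nedAH$ has the strict percentage property is false as stated; the paper's proof only says it is ``straightforward to verify''. What can be proved is the first bullet, which you did.

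\textbf{Your $\cedAH$ witness also fails.} Let $\Psi(w)=\frac{|w|_a-|w|_b}{2|w|}$. Each single edit step $s\to s'$ between nonempty words lowers $\Psi$ by at most its cost $1/\max(|s|,|s'|)$. Substituting $a$ by $b$ lowers it by exactly $\frac1{|s|}$. Inserting a $b$ lowers it by $\frac{|s|_a}{|s|(|s|+1)}\le\frac1{|s|+1}$. Deleting an $a$ lowers it by $\frac{|s|_b}{|s|(|s|-1)}\le\frac1{|s|}$. The other steps lower it by no more than their cost. A step into $\varepsilon$ alone costs $1$, so telescoping gives $\ced(a^n,(ab)^M)\ge\Psi(a^n)-\Psi((ab)^M)=\tfrac12$. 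The potential $|w|_b/|w|$ gives $\ced((ab)^m,a^N)\ge\tfrac12$ in the same way. Since substitutions achieve $\tfrac12$, we get $\cedAH(a^*,(ab)^*)=\tfrac12$, exactly what the property demands, and no mixture of insertions and deletions can go below it. (Analogous potentials show that $\cedAH$ satisfies the entire first bullet, so any witness must come from the second.)

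Your fallback is the right move, and it is the pair the paper uses. Deleting one $b$ from each $abab$ in $(ab)^{2t}$ costs $\sum_{\ell=3t+1}^{4t}\frac1\ell\to\ln\frac43$, and the reverse insertions give the same bound. Hence $\cedAH((ab)^*,(aab)^*)\le\ln\frac43<\frac13$. Your $\gedAH$ witness, $\gedAH(a^*,(ab)^*)\le\tfrac25<\tfrac12$, is correct and simply differs from the paper's $\gedAH(a^*,b^*)=\tfrac23\ne1$.
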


For this reason, we prefer $\nedAH$ over $\gedAH$ and $\cedAH$.

\begin{restatable}[Relations]{remark}{lemrelations}\label{lem:relations} 
    The following relations between the metrics hold for every pair of languages $X,Y$:
    \(
       \gedAH(X,Y) \le \ \nedAH(X,Y) \le \nedH(X,Y) \le \edH(X,Y)
    \) 
\end{restatable}

While $\nedAH$ satisfies all of our stated requirements, its insensitivity to finite outliers and bounded local edits implies that it is a pseudo-metric rather than a metric.
In scenarios where one wishes to additionally distinguish, for example, $a^*\cup b$ from $a^*$ (i.e., to forgo outlier insensitivity), or $a^*b$ from $a^*$ (i.e., to forgo bounded-edit insensitivity), this can be achieved by combining $\nedAH$ with an additional language metric $\metric{D}$ (e.g. $\hed$, $\hprf$, etc.).
Specifically, one may consider the refined generalized metric
\(
\lift{D}_{\times}(X,Y)=(\nedAH(X,Y),\metric{D}(X,Y)),
\)
ordered lexicographically so that the $\nedAH$ component is dominant.\footnote{Here we use the term \emph{generalized metric} to refer to a distance function whose codomain is a totally ordered set (rather than $\nonnegR$), and which satisfies the triangle inequality with respect to that order; see, e.g.,~\cite{Lawvere73,Flagg97}.}
If $\metric{D}$ takes values in a domain that is bounded above by a constant $C$ (as is the case, for example, for $\hned$), then this generalized metric can be collapsed into a genuine metric by defining
\(
\lift{D}_{\otimes}(X,Y)=(C{+}1)\cdot\nedAH(X,Y)+\metric{D}(X,Y),
\)
which preserves the dominance of the asymptotic distance.

\subsection{Asymptotic Essence of Regular Languages}\label{sec:essense}

A pseudo-metric $\metric{D}$ naturally induces an equivalence relation $\equivD$, where
$X \equivD Y$ if and only if $\metric{D}(X,Y)=0$.
In our context, two languages are considered equivalent with respect to language-metric $\metric{D}$ if their $\metric{D}$-distance is zero.

Recalling the discussion in the introduction, which was illustrated using regular expressions, it appears that removing all parts of a regular expression that are not under a Kleene star yields a language that is equivalent to the original one under the desired pseudo-metric $\metric{D}$.
Intuitively, the \emph{asymptotic essence} of a regular expression is captured by the subexpressions occurring under Kleene closure, whereas all other subexpressions are asymptotically negligible.
Accordingly, we denote by $\essenceRegex(r)$ the regular expression obtained by retaining only these Kleene-starred subexpressions.

Note that applying this procedure to different regular expressions defining the same language may result in different languages.
For example, consider $r_1 = a(aa)^* \cup (aa)^*$ and $r_2 = a^*$.
Although $\sema{r_1} = \sema{r_2}$, we have $\sema{\essenceRegex(r_1)} \neq \sema{\essenceRegex(r_2)}$, since
$\essenceRegex(r_1) = (aa)^*\cup(aa)^*$ whereas $\essenceRegex(r_2) = a^*$.

Since regular languages do not admit a canonical regular expression, we seek to define asymptotic essence at the level of automata, so that it can later be applied to the minimal DFA, which is canonical up to isomorphism.
Given a DFA {or an NFA} $A$ recognizing a language $L$, we seek an intuitively simpler language $L'$ that is equivalent to $L$ under $\equivD$.
To this end, we decompose $A$ into its strongly connected components (SCCs) and replace all transitions that are not contained in {some} SCC by $\varepsilon$-transitions. 
The resulting automaton is {a simpler} NFA $A'$ that recognizes a language $L'$ which, intuitively, preserves exactly the asymptotically significant behavior of $A$.\footnote{Here, we use the strict notion of SCC: a singleton state forms an SCC only if it has a self-loop.}

We denote by $\essenceFa(A)$ the NFA obtained by this construction.
The \emph{asymptotic essence} of a regular language $L$, denoted $\essence(L)$, is then defined as $\essenceFa(A_L)$, where $A_L$ is the minimal DFA recognizing $L$.

The following claim shows that these constructions—whether applied to regular expressions or to automata—yield languages that are equivalent under $\equivD$ when $\metric{D}$ is instantiated as one of the Asymptotic Hausdorff distances using a normalized word metric.

\begin{restatable}[Asymptotic Essence Equivalences]{claim}{claimessenceequivalences}\label{claim:essence-equivalences}
Let $L$ be a regular language, and $r$ and $A$ a regular expression and a {NFA}  recognizing $L$, resp.
Then, for every $\metric{D} \in \{\nedAH,\gedAH,\cedAH\}$
\[
L \equivD \sema{\essence(L)} \equivD \sema{\essenceRegex(r)} \equivD \sema{\essenceFa(A)}
\]
\end{restatable}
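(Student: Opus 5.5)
The core step is the automaton statement: for every NFA $A$, $\sema{A}\equivD\sema{\essenceFa(A)}$. The other equivalences then follow. Taking $A=A_L$ gives $L\equivD\sema{\essence(L)}$. For the regular expression I would convert $r$ to a Glushkov/Thompson-style automaton in which a subexpression lies under a star exactly when its transitions lie inside an SCC, so that $\essenceFa$ of that automaton recognizes $\sema{\essenceRegex(r)}$. Chaining uses transitivity of $\equivD$, which holds because $\metric{D}$ is a pseudo-metric (\autoref{theorem:Asymptotic Hausdorff is metric} with \autoref{claim: word metrics with asp}).

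To prove $\sema{A}\equivD\sema{\essenceFa(A)}$ I would use \autoref{clm:bounded-edits}, which reduces the task to exhibiting a constant $k$ with bounded $\ed$ in both directions. I would take $k$ to be roughly the number of states $n$ of $A$ plus one. Any accepting run visits the SCC DAG along a path, so it crosses at most $n-1$ transitions that lie outside every SCC. Consider a word $x\in\sema{A}$ with an accepting run. Replace each non-SCC transition by $\varepsilon$ and keep the rest, including states on singleton non-looping components, which simply act as $\varepsilon$-hops. This gives an accepting run of $\essenceFa(A)$ on the word $x'$ obtained by deleting at most $n-1$ letters, so $\ed(x,x')<n$. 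Conversely, an accepting run of $\essenceFa(A)$ on $y$ uses at most $n-1$ $\varepsilon$-transitions that came from non-SCC letter transitions. Restoring their letters gives $y'\in\sema{A}$ with $\ed(y,y')<n$. I also have to handle $\varepsilon$-transitions already present in $A$. I would either assume $A$ is $\varepsilon$-free or note that they are preserved verbatim, which is harmless.

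One subtlety: \autoref{clm:bounded-edits} needs the bound for every word, including short ones, and the argument above already provides it. Infinite versus finite languages raise no issue. If $\sema{A}$ is finite, then $\essenceFa(A)$ has no letter transitions inside SCCs reachable on accepting paths, so its language is finite or empty. Here I must watch the corner case where one side is empty and the other nonempty, since then "for every $x$ there exists $y$" fails. I would handle it directly from \autoref{def:asymptotic relative distance}: the limit over $s(x)\ge k$ is an empty supremum for finite sets, giving $0$ (using $\inf\emptyset$ conventions as in the proof of \autoref{thm:asymptotic relative distance satisfiess triangle inequality}). The edge case with empty $Y$ and infinite $X$ cannot occur, because the deletion map preserves acceptance.

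I expect the main obstacle to be the regular-expression step: making precise that "not under a Kleene star" corresponds exactly to non-SCC transitions in a suitable automaton. I would first try a direct structural induction on $r$ instead. I would show that $\sema{r}$ and $\sema{\essenceRegex(r)}$ are mutually within bounded $\ed$, with bound the number of letter occurrences outside stars in $r$. Concatenation and union compose the bounds additively or by maximum. For a starred subexpression $s^*$, nothing is removed, and replacing $s$ by $\essenceRegex(s)$ is the identity under a star, since $\essenceRegex$ keeps the whole starred subexpression. Base letters outside stars map to $\varepsilon$ with cost $1$. Applying \autoref{clm:bounded-edits} then finishes this equivalence without passing through automata.
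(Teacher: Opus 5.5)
Your proposal is correct and follows essentially the paper's route. The paper's sketch likewise reduces the claim to the insensitivity properties, observing that $\essenceRegex$ and $\essenceFa$ alter words by only boundedly many edits. Your SCC-DAG count (at most $n-1$ non-SCC transitions along any run, in both directions) and your structural induction on $r$ (with bound the number of letter occurrences outside stars) make those bounds explicit, and you then finish via \autoref{clm:bounded-edits} and transitivity of $\equivD$. This also shows that bounded-edits insensitivity alone suffices: the run correspondence preserves emptiness, so your empty-set corner case cannot arise, and the paper's appeal to finite-subset indifference is not actually needed.
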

\begin{proof}[Sketch proof]
    The proof follows from
    \autoref{clm:finite-subset-indif}
    since $\essenceRegex$ and $\essenceFa$ (and thus $\essence$) yield languages that differ from the original language by finitely many words 
    and require finally many edits.
\end{proof}

We note that this claim does not hold in general for $\lift{AH}_{\elementmetric{d}}$. For example, it fails for $\lift{AH}_{\ed}$, since $\essenceRegex(a^*bb)=a^*$ while $\lift{AH}_{\ed}(a^*bb,a^*)=2$ implying $a^*bb \not\equiv_{\lift{AH}_{\ed}} a^*$.

\section{On the Computation of \nedAH }\label{sec:computation}

We now turn to the computation of $\ahned$, our primary notion of interest (cf.\ \autoref{cor:all-normalized-satisfy-all-reqs} and \autoref{claim:ahned satisfies strict precentage}).
We begin, in \autoref{sec:nedAH-hardness}, by considering regular languages, where we establish a \pspace-hardness result and present an approximation algorithm in \coNExp.
We then move to \autoref{sec:nedAH-bounded-CFL}, where we develop a detailed algorithm for bounded context-free languages.

\subsection{$\ahned$ for Regular Languages}\label{sec:nedAH-hardness}
{For regular languages, we establish both a hardness result and an approximation bound for computing $\ahned$.}

\subparagraph*{{Hardness}}

We begin by showing that, as in the case of $\acost$, which serves as our starting point, the problem is already \pspace-hard when languages are regular and $X=\Sigma^*$.

\begin{restatable}[\pspace-hardness]{theorem}{thmpspacehardness}\label{thm:pspace-hardness}
Let $\Sigma$ be an alphabet, $T$ an NFA over $\Sigma$ and $\nu$ a rational number.  
    The problem of deciding whether $\nedar\big(\Sigma^*, \sema{T}\big) \le \nu$ is \pspace-hard.
\end{restatable}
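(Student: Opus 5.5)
The plan is to reduce from NFA universality, which is \pspace-complete: given an NFA $B$ over $\Sigma$, decide whether $\sema{B}=\Sigma^*$. Universality alone is not enough, because $\nedar(\Sigma^*,\sema{B})$ is insensitive to a single missing word. By \autoref{clm:finite-subset-indif} and \autoref{clm:bounded-edits}, even a language missing infinitely many words may still be at asymptotic distance $0$. So the reduction must amplify each failure of universality into an asymptotic, linear-size discrepancy. Following the hardness proof for $\acost$ in~\cite{BenediktPR14}, I will map $B$ to an NFA $T$ over an extended alphabet $\Sigma'=\Sigma\cup\{\#\}$. We will then show that $\nedar(\Sigma'^*,\sema{T})=0$ if $B$ is universal, and $\nedar(\Sigma'^*,\sema{T})\ge c$ for a fixed rational $c>0$ otherwise. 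Taking $\nu=0$, or any $\nu<c$, then yields the hardness claim.

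For the construction, I would take $T$ to accept words of the form $w_1\#w_2\#\cdots\#w_m$ with every block $w_i\in\sema{B}$, closed under a bounded amount of slack. This is an NFA polynomial in $|B|$: it is essentially $(\sema{B}\#)^*\sema{B}$.

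If $B$ is universal, then $T$ accepts every word of $\Sigma'^*$, so the distance is $0$.

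If $B$ is not universal, fix a shortest rejected word $u$; its length may be exponential, but that is harmless since it appears only in the correctness argument, not in the reduction. Consider the sequence $x_n=(u\#)^n$, whose size tends to infinity. We need a lower bound on $\inf_{y\in\sema{T}}\ned(x_n,y)$ that is linear in the fraction of corrupted blocks.

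This lower bound is the main obstacle. The problem is that $u$ might be long, so the fraction $1/(|u|+1)$ is not a constant, while $\nu$ must be computed in polynomial time. I would handle this in one of two ways. The first option is to replace the input to the reduction by the \pspace-complete problem of universality for words of length bounded polynomially in a ``corridor'' encoding, i.e. a standard linear-bounded-automaton acceptance encoding. There a rejected witness, which is an accepting computation, has length at most exponential, but local errors are detectable within windows of polynomial size. Then $T$ accepts all words in which some window of polynomial width $w$ exhibits a local error, repeated with period at most $w$. The alternative is to let $\nu$ be an exponentially small rational $1/2^{p(n)}$, which is encodable in polynomial bits. I lean toward the second option, since $\nu$ is part of the input and such a rational has a polynomial-size encoding. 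With that choice the gap condition is that in the non-universal case $\nedar\ge 1/(2(|u|+1))\ge 2^{-p}$, where $|u|\le 2^{|B|}$ by the standard subset-construction bound.

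The key lemma is a block-counting argument. Any edit path $p:x_n\leadsto y$ with $y\in\sema{T}$ either preserves a $\#$-aligned copy of $u$ as a block of $y$, which is impossible since $u\notin\sema{B}$, or spends at least one non-no-op operation inside, or adjacent to, each of the $n$ blocks. Hence $\wgt(p)\ge n$. Since the path length satisfies $|p|\le |x_n|+|y|$ and, by cost minimality, we may assume $|p|\le 2|x_n|$, we get $\cost(p)\ge n/(2n(|u|+1))$. Finally, the equivalence of \autoref{claim: Equivalent definition to asymptotic relative distance} converts this single sequence into the required lower bound on $\nedar(\Sigma'^*,\sema{T})$.
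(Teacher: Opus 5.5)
Your reduction is essentially the paper's: universality of an NFA, a $\#$-separated iteration of $\sema{B}$, the witness sequence obtained by repeating a rejected word $u$ with separators, and a count of forced edits per block. The argument goes through, but two steps need repair.

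First, $\wgt(p)\ge n$ does not follow from having a non-no-op operation inside or adjacent to each block. An operation on a separating $\#$ is adjacent to two blocks: deleting it turns $u\#u$ into $uu$, which may well lie in $\sema{B}$ (e.g.\ if $\sema{B}=\Sigma^*\setminus\{u\}$ with $u\neq\varepsilon$), so one edit repairs two blocks. You only get $\wgt(p)\ge n/2$, which is why the paper charges one costly edit to each \emph{pair} of consecutive blocks. Your bound becomes $1/(4(|u|+1))$, still positive.

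Second, ``by cost minimality we may assume $|p|\le 2|x_n|$'' needs its real justification, because for $\ned$ minimal cost does not by itself force short paths. The reason is that $|p|=|x_n|+I$, with $I$ the number of insertions, each of weight $1$. So $|p|>2|x_n|$ already gives $\cost(p)\ge I/(|x_n|+I)>\tfrac12$. With that, your length bound is a simpler alternative to the paper's appeal to \autoref{claim:bound-on-ned-edit-path}. That claim bounds an optimal path by $|T|(|x_k|+1)$ and yields the constant $\tfrac{1}{2|T|(|u|+2)}$, whereas yours is independent of $T$.

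The ``main obstacle'' you identify is not one. Since $\nu$ is part of the instance and you only need to separate $\nedar=0$ from $\nedar>0$, the reduction can simply output $(T,0)$ however long $u$ is. You note this yourself at the outset, and it is what the paper does. Neither the corridor encoding nor an exponentially small $\nu$ is needed; with $\nu=2^{-p}$ you would in any case need $\nedar>2^{-p}$ strictly, not $\ge$. Likewise drop the ``bounded amount of slack'': $\sema{T}=(\sema{B}\#)^*\sema{B}$ is already all of $\Sigma'^*$ when $B$ is universal.
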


\begin{proof}[Sketch proof]
    The proof proceeds by a reduction from the universality problem for NFAs, adapting ideas from the \pspace-hardness construction for $\acost$~\cite{BenediktPR14} to account for normalization by edit-path length rather than word length.
    Given an NFA $A$ over $\Sigma$, we construct an NFA $T$ over the extended alphabet $\Gamma=\Sigma\cup\{\#\}$ that recognizes the language $(\#\cdot\sema{A})^*$.
    We then show that if $A$ is universal, i.e.\ $\sema{A}=\Sigma^*$, then
    $\nedar(\Gamma^*,\sema{T})=0$, whereas if $A$ is not universal, then
    $\nedar(\Gamma^*,\sema{T})>0$.
    
    To establish the latter case, we consider the sequence $(x_k)_{k\ge 1}$ defined by
    $x_k = (\#x)^k$ for some $x\in\Sigma^*\setminus\sema{A}$.
    We first analyze the cost of an optimal edit path for $x_k$ and then study the limit behavior arising in the computation of $\nedar$.
\end{proof}

The complete proof makes use of the following claim which we also use for the approximation result.

\begin{restatable}{claim}{claimboundonnededitpath}\label{claim:bound-on-ned-edit-path}
Let $x$ be a word and let $Y$ be a language recognized by an NFA with $n$ states.
Then there exists a word $y_x\in Y$ attaining the value $\inf_{y\in Y}\ned(x,y)$.
Moreover, there exists an optimal edit path from $x$ to $y_x$ of length at most $n(|x|{+}1)$.
\end{restatable}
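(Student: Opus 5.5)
The plan is to view the problem as a shortest-mean-weight question on a finite product graph. Let $A=(Q,\Sigma,\delta,q_0,F)$ be the NFA for $Y$ with $|Q|=n$, and fix $x$ with $|x|=m$. Build the weighted graph $G_x$ on vertices $\{0,\dots,m\}\times Q$. Its edges are: $(i,q)\to(i{+}1,q')$ with label $\swap{x[i+1]}{b}$ whenever $q'\in\delta(q,b)$ (weight $0$ if $b=x[i+1]$, else $1$); $(i,q)\to(i{+}1,q)$ with label $\swap{x[i+1]}{\varepsilon}$ (weight $1$); and $(i,q)\to(i,q')$ with label $\swap{\varepsilon}{b}$ whenever $q'\in\delta(q,b)$ (weight $1$). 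Paths from $(0,q_0)$ to $\{m\}\times F$ correspond exactly to pairs $(y,\ep)$ with $y\in Y$ and $\ep:x\leadsto y$ (together with an accepting run on $y$). Such a path has weight $\wgt(\ep)$ and number of edges $|\ep|$. Hence $\inf_{y\in Y}\ned(x,y)$ is the infimum of $\wgt/\len$ over these paths. If $Y=\emptyset$ the statement is vacuous or conventional, so we assume $Y\neq\emptyset$.

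The key step is to show that an optimal path can be taken to be simple. Every cycle in $G_x$ stays inside a single layer $\{i\}\times Q$, since the first coordinate never decreases. It therefore consists only of insertions, so its ratio is $1$. Removing a cycle $C$ from a path $P$ with $\wgt(P)/\len(P)=r\le 1$ yields $P'$ with $\wgt(P')=\wgt(P)-\len(C)$ and $\len(P')=\len(P)-\len(C)$. The ratio does not increase: $(w-c)/(\ell-c)\le w/\ell$ iff $w\le \ell$, which holds. One detail needs checking: after removal $P'$ is still a valid path and $\len(P')\ge 1$ unless $x=\varepsilon$ and $P'$ is empty. In that boundary case $q_0\in F$, $\varepsilon\in Y$, and $\ned(\varepsilon,\varepsilon)=0$ is attained trivially, so it is handled separately. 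Also, any path with ratio greater than $1$ cannot occur, since all weights lie in $\{0,1\}$.

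Thus every path can be replaced by a simple path of no larger cost. There are only finitely many simple paths, so the infimum is a minimum over a finite set and is attained by some simple path. Its output word $y_x\in Y$ attains $\inf_{y\in Y}\ned(x,y)$, and the path itself is an optimal edit path. A simple path visits at most $n(m+1)$ vertices, so it has at most $n(m+1)-1\le n(|x|+1)$ edges, which gives the length bound.

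The main obstacle is establishing the correspondence between paths and edit paths correctly, including the handling of $\varepsilon$ letters and the empty-path case, and making the cycle-removal inequality airtight. The inequality relies on the fact that every cycle is insertion-only, so it suffices to argue that layers are never left backwards.
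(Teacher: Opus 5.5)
Your proof is correct and takes essentially the same route as the paper. Both build the product of the acyclic automaton for $\{x\}$ with the NFA for $Y$, note that every cycle consists only of insertions and so has cost ratio $1$, remove such cycles without increasing the normalized cost, and obtain attainment from finiteness. The one difference is that you strip \emph{all} cycles to reach a simple path, which gives the edit-path length bound $n(|x|+1)-1$ directly. The paper's pigeonhole argument on insertion blocks instead bounds the target word, $|y_x|\le n(|x|+1)$, and the stated bound on the path length then needs one more step, namely continuing the trimming until every insertion block has fewer than $n$ letters.
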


\subparagraph*{{Approximation}}
Next, we establish bounds relating $\nedar$ and $\acost$.
In particular, $\nedar(X,Y)$ is sandwiched between $\acost(X,Y)$ and $\tfrac{1}{d}\cdot \acost(X,Y)$, where $d$ is the number of states in the minimal DFA for $Y$.

\begin{restatable}[]{claim}{claimnedarsmalleracost}\label{claim: nedar smaller acost}
        Let $X$ and $Y$ be regular languages. Then 
$\nedar(X,Y) \le \acost(
X,Y)$.
\end{restatable}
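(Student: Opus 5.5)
The inequality holds pointwise, so I will prove it pointwise and then pass to the limit. The key observation is that for every pair of words $x,y$ we have $\ned(x,y)\le \ed(x,y)/|x|$. Indeed, let $\ep$ be an edit path from $x$ to $y$ realizing $\ed(x,y)$, so $\wgt(\ep)=\ed(x,y)$. Every letter of $x$ is consumed by exactly one operation of $\ep$: a no-op, a substitution, or a deletion. Hence $|\ep|\ge |x|$, and
\[
\ned(x,y)\le \cost(\ep)=\frac{\wgt(\ep)}{|\ep|}\le \frac{\ed(x,y)}{|x|}
\]
for $x\neq\varepsilon$. This is the only place where the edit-path structure is used. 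No regularity assumption is needed; the hypothesis that $X$ and $Y$ are regular is harmless, and I would simply note that the argument works for arbitrary languages.

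Next I take the infimum over $y\in Y$ on both sides, which gives $\inf_{y\in Y}\ned(x,y)\le \inf_{y\in Y}\ed(x,y)/|x|$ for every nonempty $x$. If $Y=\emptyset$, both infima are $+\infty$ by convention (or both sides are treated consistently), so I would dispose of that degenerate case separately, or observe that the paper's conventions make both quantities equal. I then take the supremum over $x\in X$ with $|x|\ge k$ for $k\ge1$. Here the size notion is $s(x)=|x|$, and the condition $|x|\ge k$ matches the indexing $|x|\ge n$ in $\acost$. Monotonicity of $\sup$ gives, for each $k$,
\[
\sup_{x\in X,\,|x|\ge k}\inf_{y\in Y}\ned(x,y)\le \sup_{x\in X,\,|x|\ge k}\inf_{y\in Y}\frac{\ed(x,y)}{|x|}.
\]

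Finally, both sides are nonincreasing in $k$, and their limits exist: for the left side by \autoref{claim: asymptotic relative distance well defined}, and for the right side by the same monotonicity argument. Limits preserve non-strict inequalities, so $\nedar(X,Y)\le\acost(X,Y)$. If $X$ is finite, both sup-sets are eventually empty and both sides are $0$ under the same convention used in the proof of \autoref{thm:asymptotic relative distance satisfiess triangle inequality}. I expect no real obstacle. The only point requiring care is the bound $|\ep|\ge|x|$, together with the conventions for empty suprema and infima.
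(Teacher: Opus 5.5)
Your proof is correct and follows essentially the same route as the paper's. The paper also uses $|p|\ge|x|$ for every edit path $p:x\leadsto y$ to bound $\wgt(p)/|p|\le \wgt(p)/|x|$, minimizes over $p$ to get $\ed(x,y)/|x|$, and passes the inequality through $\inf$, $\sup$ and $\lim$ by monotonicity. Your remark that the regularity of $X$ and $Y$ is never used is accurate: the paper's argument does not use it either.
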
 
\vspace{-2mm}
\begin{restatable}[]{claim}{claimnedarlargeracostdividenY}\label{claim: nedar larger acost divide nY}
    Let $X$ and $Y$ be regular languages. Then $\nedar(X,Y) \ge \tfrac{1}{d} \cdot \acost(X,Y)$
    where $d$ is the number of states in the minimal DFA of $Y$. 
\end{restatable}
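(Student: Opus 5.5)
The plan is to prove a pointwise inequality for each word $x$ and then pass to the limit in the definitions of $\nedar$ and $\acost$. The main tool is \autoref{claim:bound-on-ned-edit-path}, applied to the minimal DFA $A_Y$ of $Y$, which has $d$ states. Since a DFA is in particular an NFA with $d$ states, the claim gives the following for every word $x$. There is a word $y_x\in Y$ with $\ned(x,y_x)=\inf_{y\in Y}\ned(x,y)$, and there is an optimal edit path $\ep:x\leadsto y_x$ with $|\ep|\le d(|x|+1)$. The degenerate case $Y=\emptyset$ is handled separately: both sides are then governed by the same convention for an infimum over the empty set, and the inequality is immediate.

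\textbf{Pointwise bound.} Fix $x$, and let $\ep$ and $y_x$ be as above. The weight of any edit path from $x$ to $y_x$ is at least $\ed(x,y_x)$, and $\ed(x,y_x)\ge\inf_{y\in Y}\ed(x,y)$. Using the length bound $|\ep|\le d(|x|+1)$ in the denominator, we get
\[
\inf_{y\in Y}\ned(x,y)=\cost(\ep)=\frac{\wgt(\ep)}{|\ep|}\ \ge\ \frac{\inf_{y\in Y}\ed(x,y)}{d(|x|+1)}
=\frac{1}{d}\cdot\frac{|x|}{|x|+1}\cdot\inf_{y\in Y}\frac{\ed(x,y)}{|x|},
\]
which holds for every nonempty $x$.

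\textbf{Passing to the limit.} For $x\in X$ with $|x|\ge k$, the factor $\frac{|x|}{|x|+1}$ is at least $\frac{k}{k+1}$. Taking the supremum over such $x$ gives
\[
\sup_{\substack{x\in X\\|x|\ge k}}\ \inf_{y\in Y}\ned(x,y)\ \ge\ \frac{1}{d}\cdot\frac{k}{k+1}\cdot \sup_{\substack{x\in X\\|x|\ge k}}\ \inf_{y\in Y}\frac{\ed(x,y)}{|x|}.
\]
Both suprema are non-increasing in $k$, so their limits exist; this is the same monotonicity argument as in \autoref{claim: asymptotic relative distance well defined}. Letting $k\to\infty$, the factor $\frac{k}{k+1}$ tends to $1$, and we obtain $\nedar(X,Y)\ge\frac{1}{d}\acost(X,Y)$. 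If $X$ is finite, both sides equal $0$ because the suprema are eventually over the empty set, so the inequality holds in that case as well.

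\textbf{Main obstacle.} The only real difficulty is controlling the length of the optimal $\ned$ path. Normalizing by path length, rather than by $|x|$, could in principle let long insertion detours dilute the cost. The length bound $n(|x|+1)$ from \autoref{claim:bound-on-ned-edit-path} removes this concern, so the proof reduces to checking that the claim applies to the minimal DFA with $n=d$. Since a DFA is an NFA, this check is immediate. Regularity of $X$ is not actually needed for this direction.
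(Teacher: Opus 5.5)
Your proposal is correct and follows essentially the same route as the paper. Both apply \autoref{claim:bound-on-ned-edit-path} to the minimal DFA of $Y$ to obtain an optimal $\ned$ path of length at most $d(|x|+1)$, derive the pointwise bound $\inf_{y\in Y}\tfrac{\ed(x,y)}{|x|}\le d\left(1+\tfrac{1}{|x|}\right)\inf_{y\in Y}\ned(x,y)$, and pass to the limit; your factor $\tfrac{k}{k+1}$ is just the paper's $\left(1+\tfrac{1}{n}\right)$ factor moved to the other side.
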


Using these bounds and the \coNExp-time algorithm for $\acost$ developed in~\cite{BenediktPR14}, which relies on substantial technical machinery, we obtain a \coNExp-time approximation algorithm for $\nedar$.

\begin{restatable}[]{lemma}{lemmaahrsandwitch}\label{lemma:ahr-sandwitch}
        Let $X$ and $Y$ be regular languages. Then there is a \coNExp\ algorithm finding $c$ such that $\nedar(X,Y) \in [\tfrac{1}{d},1]c$
        where $d$ is the number of states in the minimal DFA of $Y$.
 \end{restatable}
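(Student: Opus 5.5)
The plan is to obtain the lemma directly from the two sandwich bounds, Claim~\ref{claim: nedar smaller acost} and Claim~\ref{claim: nedar larger acost divide nY}, together with the \coNExp\ procedure of~\cite{BenediktPR14} for $\acost$. Take $c \defeq \acost(X,Y)$. The two claims give
\[
\tfrac{1}{d}\cdot c \;\le\; \nedar(X,Y) \;\le\; c,
\]
that is, $\nedar(X,Y)\in[\tfrac{1}{d},1]c$, where $d$ is the number of states of the minimal DFA of $Y$. So the algorithm simply outputs $c$. The mathematical content is already in the two claims; what remains is to make the output of the $\acost$ procedure usable as the number $c$.

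First I would fix the input representation: $X$ and $Y$ are given by automata, say NFAs. I would then recall precisely what~\cite{BenediktPR14} provides. Their result is a \coNExp\ procedure deciding threshold questions of the form $\acost(X,Y)\le\nu$ for rational $\nu$, together with the fact that $\acost(X,Y)$ is a rational number whose numerator and denominator are at most exponential in the input size. That value arises from optimal cycle ratios, i.e.\ edits per letter, in a product construction of exponential size.

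The main obstacle is the gap between a \coNExp\ decision procedure and an algorithm that \emph{finds} $c$. I would first check whether~\cite{BenediktPR14} already computes the value itself, for instance as the optimum of a mean-payoff or linear program over an exponential-size structure. If it only decides thresholds, I would recover $c$ as follows. Since $c$ is a rational with denominator at most $D=2^{\mathrm{poly}}$, any two candidate values differ by at least $1/D^2$. Binary search over $[0,1]$ with polynomially many threshold queries therefore pins down $c$ exactly, and each query stays in \coNExp. I would then phrase the overall procedure as a polynomial-length sequence of \coNExp\ queries, or equivalently as a \coNExp\ verification of a guessed value $c$ via the two queries $\acost\le c$ and $\neg(\acost\le c-1/D^2)$.

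One more point needs care. The claims are stated for the minimal DFA of $Y$, but $Y$ may be given as an NFA. This does not affect the algorithm, because $c$ is computed from $\acost$, which is independent of the representation. The factor $d$ only enters the statement of the guarantee and never has to be computed. If an explicit bound is wanted, it suffices to note that $d\le 2^{n}$ for an $n$-state NFA.
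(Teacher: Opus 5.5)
Your proposal is correct and follows the paper's own argument. You set $c=\acost(X,Y)$, read off $\tfrac{1}{d}c\le\nedar(X,Y)\le c$ from \autoref{claim: nedar smaller acost} and \autoref{claim: nedar larger acost divide nY}, and invoke the \coNExp\ procedure of~\cite{BenediktPR14}, which the paper takes as computing $\acost$ outright. One caution about your optional fallback, which the paper does not need: certifying that $\acost(X,Y)$ exceeds the next smaller candidate is the complement of a \coNExp\ threshold test. So the guess-and-verify formulation is not a \coNExp\ procedure, and adaptive binary search over polynomially many threshold queries only gives polynomial time with a \coNExp\ oracle.
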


\subsection{$\ahned$ for Bounded Context-Free Languages}\label{sec:nedAH-bounded-CFL}

We now turn to \emph{bounded context-free languages} (BCFLs), a well-studied subclass of context-free languages with rich structural properties and strong decidability results. We assume familiarity with standard definitions and basic properties of context-free languages, and provide the necessary definitions for \emph{bounded} CFLs.

\subparagraph*{Bounded CFL (Preliminaries)}
A language $L$ is said to be \emph{bounded} if there exist fixed words $w_1, w_2, \ldots, w_n \in \Sigma^*$ such that
\(
L \subseteq w_1^* w_2^* \cdots w_n^*.
\)
A language is a \emph{bounded context-free language} if it is both bounded and context-free.

Let $\Sigma=\{\sigma_1,\ldots,\sigma_k\}$. The \emph{Parikh vector} of a word $w\in\Sigma^*$ is the $k$-tuple
\(
\Parikh(w)=(|w|_{\sigma_1},\ldots,|w|_{\sigma_k})
\),
where $|w|_{\sigma}$ is the number of occurrences of $\sigma$ in $w$.
For bounded languages, it is often convenient to work with a Parikh representation relative to the bounding words.
Specifically, given $w_1,\ldots,w_n$ and a word
\(
w = w_1^{n_1} w_2^{n_2} \cdots w_n^{n_n},
\)
we define the Parikh vector of $w$ with respect to these words to be $(n_1,n_2,\ldots,n_n)$.

Parikh’s Theorem states that for every context-free language $L$, the set of Parikh vectors
$\Parikh(L)=\{\Parikh(w)\mid w\in L\}$
is a \emph{semi-linear set}~\cite{Parikh66}.
A set $S\subseteq\mathbb{N}^n$ is \emph{linear} if there exist vectors
$U=\{\vec{u}_0,\vec{u}_1,\ldots,\vec{u}_m\}\subseteq\mathbb{N}^n$
such that
\(
S=\{\vec{u}_0+t_1\vec{u}_1+\cdots+t_m\vec{u}_m \mid t_1,\ldots,t_m\in\mathbb{N}\}.
\)
The vector $\vec{u}_0$ is called the \emph{constant vector}, and the remaining vectors $\vec{u}_1,\ldots,\vec{u}_m$ are called \emph{period vectors}. The matrix in $\mathbb{N}^{n\times(m+1)}$ whose columns are the vectors
$\vec{u}_0,\vec{u}_1,\ldots,\vec{u}_m$ is called the \emph{generator matrix} of $S$.
A set is \emph{semi-linear} if it is a finite union of linear sets.

\begin{myexample}\label{ex:parikh}
As an example, consider the language
\(
L=\{(aba)^n (bcc)^{3n} d^k e^m c^{2m+1} \mid n,k,m\ge 0\}.
\)
This language is bounded, since
$L\subseteq w_1^* w_2^* w_3^* w_4^* w_5^*$
for $w_1{=}aba$, $w_2{=}bcc$, $w_3{=}d$, $w_4{=}e$, and $w_5{=}c$.
For the word $w=(aba)(bcc)^3d^2 e c^3$, the corresponding Parikh vector is $(1,3,2,1,3)$.
The Parikh image of $L$ is generated by the constant vector
$\vec{u}_0=(0,0,0,0,1)$
together with the period vectors $\vec{u}_1=(1,3,0,0,0)$,  $\vec{u}_2=(0,0,1,0,0)$, $\vec{u}_3=(0,0,0,1,2)$, that correspond to the powers $n$, $k$, and $m$, respectively. 
\end{myexample}

Ginsburg and Spanier~\cite{GinsburgSpanier66} characterized bounded context-free languages by showing that a bounded language is context-free if and only if its Parikh image (with respect to the bounding words) is a \emph{stratified semi-linear set}.
A linear set is \emph{stratified} if
(i) each of its period vectors has at most two non-zero coordinates, and
(ii) the non-zero coordinates of distinct period vectors do not interleave.
Formally, if $\vec{u}$ has non-zero coordinates at indices $i_1{<}i_2$ and $\vec{v}$ has non-zero coordinates at $j_1{<}j_2$, then it is not the case that $i_1{<}j_1{<}i_2{<}j_2$.
In the example above, the period vectors satisfy these conditions: each has at most two non-zero entries, and the index sets $\{1,2\}$, $\{3\}$, and $\{4,5\}$ are pairwise non-interleaving.

\subsubsection*{Towards solving $\ahned$ for BCFLs}
For simplicity, we focus on BCFLs whose Parikh image is a stratified linear (rather than semi-linear) set.
Henceforth, let $X \subseteq x_1^*\cdots x_n^*$ and $Y \subseteq y_1^*\cdots y_m^*$ be  BCFLs with generators {$U \in \bb{N}^{n \times (k+1)}$} and {$V \in \bb{N}^{m \times (r+1)}$}, respectively. Let $X_0$ be the BCFL with the same generating vector set $U$ as $X$ but where $\vecInd{u}{0}  = \Vec{0}$. Thus with $\sum^n_{i=1} \vecIndEnt{u}{0}{i}\cdot |w_i|$ edit operations we can move between $X$ and $X_0$ and hence by~\autoref{clm:bounded-edits} $\nedar(X_0,X) = 0$. From this point onward we thus assume $\vecInd{u}{0}=\vecInd{v}{0}  =\Vec{0}$ and hence {$U \in \bb{N}^{n \times k}$ and {$V \in \bb{N}^{m \times r}$}}.

\begin{myexample}\label{ex:generators}
    We use $X=\{(a)^n(bba)^{2n}\mid n\in\bb{N}\}$ and $Y=\{(abbb)^k(ab)^{2m}(c)^{m}\mid k,m\in\bb{N}\}$ as a running example.
    The generators of $X$ and $Y$ are $U\in\bb{N}^{2\times 1}$ and $V\in\bb{N}^{3\times 2}$ where $\vec{u}_1=(1,2)$  $\vec{v}_1=(1,0,0)$ and $\vec{v}_2=(0,2,1)$. 
\end{myexample}

We compute $\ahned$ for BCFLs using linear programs.
This requires several auxiliary notions, which we introduce next.

\subparagraph*{The edit graph} 
Let $x=a_1 a_2\ldots a_n$ and $y=b_1 b_2\ldots b_m$.
The edit graph $\edgraph{x,y}$ is the directed graph whose vertices are the grid points
$\{0,\ldots,n\}\times\{0,\ldots,m\}$,
with edges from $(i,j)$ to $(i{+}1,j)$, $(i,j{+}1)$, and $(i{+}1,j{+}1)$ whenever the target vertex exists.
An edge to $(i{+}1,j)$ corresponds to deleting $a_{i}$, and has weight $1$;
an edge to $(i,j{+}1)$ corresponds to inserting $b_{j}$, and has weight $1$;
and an edge to $(i{+}1,j{+}1)$ corresponds either to a no-op (with weight $0$) if $a_{i}=b_{j}$,
or to a substitution of $a_i$ by $b_j$ (with weight $1$) otherwise.

\begin{myexample}\label{ex:edit-graph-x-y}   
    The edit graph of $x=abbabba$ and $y=abbbababc$ is given in \autoref{fig:all}  (left). Edges that weigh $0$ are dashed whereas edges that weigh $1$ are solid. A minimum cost-path is marked on the graph in black.  It corresponds to the edit path $\swap{a}{a}\swap{b}{b}\swap{b}{b}\swap{\e}{b}\swap{a}{a}\swap{b}{b}\swap{b}{\e}\swap{a}{a}\swap{\e}{b}\swap{\e}{c}$. It has 10 edges, out of which 4 weigh $1$ so its cost is $\frac{4}{10}$. Accordingly $\ned(x,y)=\frac{4}{10}$.
\end{myexample}

\subparagraph*{Blocks}
Note that words in $X$ (resp. $Y$) are parameterized by vectors in $\bb{N}^n$ (resp. $\bb{N}^m$).
Let $\vec{n}=(n_1,\ldots,n_n)\in\bb{N}^n$ and $\vec{m}=(m_1,\ldots,m_m)\in\bb{N}^m$.
These vectors induce the words
$w^{\setX}_{\vec{n}} = x_1^{n_1}\cdots x_n^{n_n}$ and
$w^{\setY}_{\vec{m}} = y_1^{m_1}\cdots y_m^{m_m}$, resp.
We denote their lengths by
$\ell^{\setX}_{\vec{n}} = |w^{\setX}_{\vec{n}}|$ and $\ell^{\setY}_{\vec{m}} = |w^{\setY}_{\vec{m}}|$.
Consider now $x = w^{\setX}_{\vec{n}}$ and $y = w^{\setY}_{\vec{m}}$.
The edit graph of $x$ and $y$ can be partitioned into $n\times m$ rectangular subgrids, which we henceforth call \emph{blocks}.
The $(i,j)$-block corresponds to the edit graph of $x_i^{n_i}$ and $y_j^{m_j}$.

\begin{myexample}\label{ex:edit-graph-xn-ym}
Let $X=\{(a)^n(bba)^{2n}\mid n\in\bb{N}\}$ and $Y=\{(abbb)^k(ab)^{2m}(c)^{m}\mid k,m\in\bb{N}\}$.
Let $\vec{n}=(1,2)$ and $\vec{m}=(1,2,1)$
\autoref{fig:all} (middle) shows the edit graph of
$w^{\setX}_{\vec{n}}=a \cdot bba \cdot bba$ 
and $w^{\setY}_{\vec{m}}=aabb \cdot  ab \cdot ab \cdot c$ 
partitioned into its six blocks by the bold gray lines.
\end{myexample}

\begin{wrapfigure}[4]{r}{0.1\textwidth}
\vspace{-3mm}
\scalebox{0.4}{
\begin{tikzpicture}[
  font=\LARGE
]


\fill[pattern=north east lines, pattern color=gray] (0,1) rectangle (1,2);
\fill[pattern=north east lines, pattern color=gray] (0,0) rectangle (1,1);
\fill[pattern=north east lines, pattern color=gray] (2,0) rectangle (3,1);
\fill[pattern=north east lines, pattern color=gray] (1,0) rectangle (2,1);

\draw[step=1cm, black] (0,0) grid (3,2);

\node[above] at (0.5,2) {$\textcolor{\yonecol}{y_1}$};
\node[above] at (1.5,2) {$\textcolor{\ytwocol}{y_2}$};
\node[above] at (2.5,2) {$\textcolor{\ythreecol}{y_3}$};

\node[left] at (0,1.5) {$\textcolor{\xonecol}{x_1}$};
\node[left] at (0,0.5) {$\textcolor{\xtwocol}{x_2}$};

\end{tikzpicture}\label{fig:interleaving}
}
\end{wrapfigure}

\subparagraph*{Interleavings}
Let $p$ be an edit path from $x$ to $y$.
We consider the $n\times m$ grid, and say that a cell {$(i,j)$} is \emph{lit} if the edit path $p$ passes through the {$(i,j)$}-block.
We write $\pi(p)$ for the set of cells lit by $p$, and refer to $\pi(p)$ as the \emph{interleaving} imposed by $p$.
Observe that $\pi(p)$ forms a monotone path from $(1,1)$ to $(n,m)$ in the $n\times m$ grid. We use $\Pi$ to denote the set of all interleavings from $(1,1)$ to $(n,m)$.

\begin{myexample}\label{ex:grid-rectangles-path}
Continuing Ex.\ref{ex:edit-graph-xn-ym}, the $2\times 3$ grid on the right
shows that the marked edit path passes through blocks $(1,1)$, $(2,1)$, $(2,2)$, and $(2,3)$; the cells corresponding to these blocks are diagonally hatched.
\end{myexample}

\begin{remark}\label{rem:interleavin-set-size}
Note that $|\Pi| = \binom{n+m-2}{n-1}$,
since any such interleaving consists of a total of $n{-}1$ rightward steps and $m{-}1$ downward steps, and is therefore determined by the choice of which $n{-}1$ of the {$n{+}m{-}2$} steps are rightward.
\end{remark}

\begin{figure}
\vspace{-15mm}
    \scalebox{0.5}{
    \makebox[0pt][l]{\hspace{-1.5cm}
        \begin{tabular}{lll}
            \begin{tikzpicture}[
  font=\Large
]

\tikzset{
  path thin/.style={line width=3pt, dashed},
  path thick/.style={line width=3pt},
}

\def\rowletter#1{%
  \ifcase#1 a\or b\or b\or a\or b\or b\or a\fi%
}
\def\colletter#1{%
  \ifcase#1 a\or b\or b\or b\or a\or b\or a\or b\or c\fi%
}

\draw[step=1cm, gray, line width=0.15pt] (0,0) grid (9,7);

\foreach \x in {0,...,8} {
  \foreach \y in {0,...,6} {
    \pgfmathtruncatemacro{\r}{6-\y}
    \edef\rl{\rowletter{\r}}
    \edef\cl{\colletter{\x}}
    \ifx\rl\cl
      \draw[gray, dashed] (\x,\y+1) -- (\x+1,\y);
    \else
      \draw[gray] (\x,\y+1) -- (\x+1,\y);
    \fi
  }
}

\node[above] at (0.5,7) {$a$};
\node[above] at (1.5,7) {$b$};
\node[above] at (2.5,7) {$b$};
\node[above] at (3.5,7) {$b$};
\node[above] at (4.5,7) {$a$};
\node[above] at (5.5,7) {$b$};
\node[above] at (6.5,7) {$a$};
\node[above] at (7.5,7) {$b$};
\node[above] at (8.5,7) {$c$};

\node[left] at (0,6.5) {$a$};
\node[left] at (0,5.5) {$b$};
\node[left] at (0,4.5) {$b$};
\node[left] at (0,3.5) {$a$};
\node[left] at (0,2.5) {$b$};
\node[left] at (0,1.5) {$b$};
\node[left] at (0,0.5) {$a$};


\draw[path thin]
  (0,7) --
    node[above right] {} (1,6) -- 
    node[above right] {} (2,5) -- 
    node[above right] {} (3,4) -- 
    node[above]       {} (4,4) -- 
    node[above right] {} (5,3) -- 
    node[above right] {} (6,2) -- 
    node[right]       {} (6,1) -- 
    node[right]       {} (7,0) -- 
    node[below]       {} (8,0) -- 
    node[below]       {} (9,0); 

\draw[path thick] (3,4) -- (4,4); 
\draw[path thick] (6,2) -- (6,1); 
\draw[path thick] (7,0) -- (8,0); 
\draw[path thick] (8,0) -- (9,0); 

\end{tikzpicture}\label{fig:edit-graph-xy}
            &
            \begin{tikzpicture}[
  font=\Large
]

\tikzset{
  path thin/.style={line width=3pt, dashed},
  path thick/.style={line width=3pt},
}

\def\rowletter#1{%
  \ifcase#1 a\or b\or b\or a\or b\or b\or a\fi%
}
\def\colletter#1{%
  \ifcase#1 a\or b\or b\or b\or a\or b\or a\or b\or c\fi%
}

\draw[step=1cm, gray] (0,0) grid (9,7);

\node[above] at (0.5,7) {$\textcolor{\yonecol}{a}$};
\node[above] at (1.5,7) {$\textcolor{\yonecol}{b}$};
\node[above] at (2.5,7) {$\textcolor{\yonecol}{b}$};
\node[above] at (3.5,7) {$\textcolor{\yonecol}{b}$};
\node[above] at (4.5,7) {$\textcolor{\ytwocol}{a}$};
\node[above] at (5.5,7) {$\textcolor{\ytwocol}{b}$};
\node[above] at (6.5,7) {$\textcolor{\ytwocol}{a}$};
\node[above] at (7.5,7) {$\textcolor{\ytwocol}{b}$};
\node[above] at (8.5,7) {$\textcolor{\ythreecol}{c}$};

\node[left] at (0,6.5) {$\textcolor{\xonecol}{a}$};
\node[left] at (0,5.5) {$\textcolor{\xtwocol}{b}$};
\node[left] at (0,4.5) {$\textcolor{\xtwocol}{b}$};
\node[left] at (0,3.5) {$\textcolor{\xtwocol}{a}$};
\node[left] at (0,2.5) {$\textcolor{\xtwocol}{b}$};
\node[left] at (0,1.5) {$\textcolor{\xtwocol}{b}$};
\node[left] at (0,0.5) {$\textcolor{\xtwocol}{a}$};


\draw[line width=3pt,  gray] (0,7) -- (9,7);
\draw[line width=3pt,  gray] (0,6) -- (9,6);
\draw[line width=3pt, gray] (0,0) -- (9,0);

\draw[line width=3pt, gray] (0,0) -- (0,7);
\draw[line width=3pt, gray] (4,0) -- (4,7);
\draw[line width=3pt, gray] (8,0) -- (8,7);
\draw[line width=3pt, gray] (9,0) -- (9,7);


\draw[decorate,decoration={brace,amplitude=6pt},\yonecol]
  (0,7.4) -- (4,7.4) node[midway,above=6pt] {$\textcolor{\yonecol}{y_1}$};
\draw[decorate,decoration={brace,amplitude=6pt},\ytwocol]
  (4,7.4) -- (6,7.4) node[midway,above=6pt] {$\textcolor{\ytwocol}{y_2}$};
\draw[decorate,decoration={brace,amplitude=6pt},\ytwocol]
  (6,7.4) -- (8,7.4) node[midway,above=6pt] {$\textcolor{\ytwocol}{y_2}$};
\draw[decorate,decoration={brace,amplitude=6pt},\ythreecol]
  (8,7.4) -- (9,7.4) node[midway,above=6pt] {$\textcolor{\ythreecol}{y_3}$};

\draw[decorate,decoration={brace,mirror,amplitude=6pt},\xonecol]
  (-0.7,7) -- (-0.7,6) node[midway,left=6pt] {$\textcolor{\xonecol}{x_1}$};
\draw[decorate,decoration={brace,mirror,amplitude=6pt},\xtwocol]
  (-0.7,6) -- (-0.7,3) node[midway,left=6pt] {$\textcolor{\xtwocol}{x_2}$};
\draw[decorate,decoration={brace,mirror,amplitude=6pt},\xtwocol]
  (-0.7,3) -- (-0.7,0) node[midway,left=6pt] {$\textcolor{\xtwocol}{x_2}$};


\draw[path thin]
  (0,7) --
    node[above right] {} (1,6) -- 
    node[above right] {} (2,5) -- 
    node[above right] {} (3,4) -- 
    node[above]       {} (4,4) -- 
    node[above right] {} (5,3) -- 
    node[above right] {} (6,2) -- 
    node[right]       {} (6,1) -- 
    node[right]       {} (7,0) -- 
    node[below]       {} (8,0) -- 
    node[below]       {} (9,0); 

\draw[path thick] (3,4) -- (4,4); 
\draw[path thick] (6,2) -- (6,1); 
\draw[path thick] (7,0) -- (8,0); 
\draw[path thick] (8,0) -- (9,0); 

\end{tikzpicture}\label{fig:edit-graph-xy-blocks}
            &
            \begin{tikzpicture}[scale=1.5]

  \def\W{4}
  \def\H{6}

  \def\w{2}
  \def\h{3}

  \def\cA{blue}
  \def\cB{magenta}
  \def\cC{cyan}
  \def\cD{orange}

  \draw[step=1cm, gray] (0,0) grid (\W,\H);

  \newcommand{\ColorRegionGrid}[3]{
    \begin{scope}
      \clip #1 rectangle #2;
      \draw[step=1cm, #3] (0,0) grid (\W,\H);
    \end{scope}
  }

  \ColorRegionGrid{(0,0)}{(\w,\h)}{\cA}
  \ColorRegionGrid{(\w,0)}{(2*\w,\h)}{\cB}
  \ColorRegionGrid{(0,\h)}{(\w,2*\h)}{\cC}
  \ColorRegionGrid{(\w,\h)}{(2*\w,2*\h)}{\cD}

  \newcommand{\LabelBlock}[3]{
    \begin{scope}[shift={(#1,#2)}]
      \foreach \X in {0,1,2} {
        \foreach \Y in {0,1,2,3} {
          \pgfmathtruncatemacro{\Ytop}{3-\Y}
          \node[
            anchor=north west,
            font=\small,
            text=#3
          ]
          at (\X+0.01,\Y+0.01)
          {\hspace{-1px}$({\X},{\Ytop})$};
        }
      }
    \end{scope}
  }

  \LabelBlock{-0.6}{0}{\cA}
  \LabelBlock{\w}{0}{\cB}
  \LabelBlock{-0.6}{\h+0.3}{\cC}
  \LabelBlock{\w}{\h+0.3}{\cD}




  \def\rowletter#1{%
    \ifcase#1 b\or b\or a\or b\or b\or a\fi%
  }
  \def\colletter#1{%
    \ifcase#1 a\or b\or a\or b\fi%
  }

  \def\cellcolor#1#2{%
    \ifnum#1<2
      \ifnum#2<3 \cA \else \cC \fi
    \else
      \ifnum#2<3 \cB \else \cD \fi
    \fi
  }



  \def\rowletter#1{%
    \ifcase#1 b\or b\or a\or b\or b\or a\fi%
  }
  \def\colletter#1{%
    \ifcase#1 a\or b\or a\or b\fi%
  }

  \def\cellcolor#1#2{%
    \ifnum#1<2
      \ifnum#2<3 \cA \else \cC \fi
    \else
      \ifnum#2<3 \cB \else \cD \fi
    \fi
  }

  \foreach \i in {0,...,3} {
    \foreach \j in {0,...,5} {

      \pgfmathtruncatemacro{\r}{5-\j}

      \edef\rl{\rowletter{\r}}
      \edef\cl{\colletter{\i}}
      \edef\cc{\cellcolor{\i}{\j}}

      \ifx\rl\cl
        \draw[dashed, \cc] (\i,\j+1) -- (\i+1,\j);
      \else
        \draw[\cc] (\i,\j+1) -- (\i+1,\j);
      \fi
    }
  }

\draw[decorate,decoration={brace,mirror,amplitude=6pt}]
  (-0.7,6) -- (-0.7,3) node[midway,left=6pt, font=\Large] {${x_2}$};
\draw[decorate,decoration={brace,mirror,amplitude=6pt}]
  (-0.7,3) -- (-0.7,0) node[midway,left=6pt, font=\Large] {${x_2}$};

\draw[decorate,decoration={brace,amplitude=6pt}]
  (0,6.4) -- (2,6.4) node[midway,above=6pt, font=\Large] {${y_2}$};
\draw[decorate,decoration={brace,amplitude=6pt}]
  (2,6.4) -- (4,6.4) node[midway,above=6pt, font=\Large] {${y_2}$};  

\tikzset{
  path thin/.style={line width=3pt, dashed},
  path thick/.style={line width=3pt},
}

\draw[path thin]
  (0,4) --
    node[right, font=\Large] {$\mathbf{e_1}$} (1,3) -- 
    node[right, font=\Large] {$\mathbf{e_2}$} (2,2) -- 
    node[right, font=\Large]       {$\mathbf{e_3}$} (2,1) -- 
    node[right, font=\Large]       {$\mathbf{e_1}$} (3,0) -- 
    node[above, font=\Large]       {$\mathbf{e_4}$} (4,0);  

\draw[path thick] (2,2) -- (2,1); 
\draw[path thick] (3,0) -- (4,0); 

\end{tikzpicture}
        \end{tabular}
    }}
    \caption{\textbf{Left:} $\edgraph{x,y}$, \textbf{Middle:} the blocks of $\edgraph{x,y}$, \textbf{Right:} the sub-blocks of the $(2,2)$-block.}\label{fig:all}
\end{figure}

\subparagraph*{Sub-blocks and the cyclic-edit graph}
In our linear program, we want to reason simultaneously about many words $x\in X$ and $y\in Y$.
To this end, we observe that each $(i,j)$-block is composed of $n_i\times m_j$ identical copies of the edit graph of $x_i$ and $y_j$, which we call \emph{sub-blocks}.
\autoref{fig:all} (right) illustrates the partition of the $(2,2)$-block into four identical sub-blocks corresponding to 
the edit graph of $x_2$ to $y_2$.

To reason uniformly about powers of $x_i$ and $y_j$, we introduce the \emph{cyclic edit graph} of $x_i$ and $y_j$, denoted $\cyclicedgraph{i,j}$.
Let $x_i = a_1\ldots a_{|x_i|}$ and $y_j = b_1\ldots b_{|y_j|}$.
The vertices of $\cyclicedgraph{i,j}$ form a quotient of the set
$V=\{(l,k)\mid 0\leq l\leq |x_i|,\ 0\leq k\leq |y_j|\}$
where $(0,k)$ is identified with $(|x_i|,k)$, and $(l,0)$ is identified with $(l,|y_j|)$.

These identifications reflect the adjacency of sub-blocks:
for instance, the vertex $(0,k)$ of a given sub-block (e.g., the orange block in \autoref{fig:all} (right)) coincides with $(|x_i|,k)$ of the sub-block to its left (the cyan one), and analogously for the vertical direction.

As a result, there is a bijection between edit paths from {$\wSuff{{x_i}}{x_i}^*\wPref{{x_i}}$} to {$\wSuff{{y_j}}{y_j}^*\wPref{{y_j}}$} and walks in $\cyclicedgraph{i,j}$, where {$\wSuff{{x_i}}$} and {$\wSuff{{y_j}}$} are suffixes and {$\wPref{{x_i}}$} and {$\wPref{{y_j}}$} are prefixes of {$x_i$ and $y_j$}, respectively.
This bijection preserves both weight and length.
Moreover, every walk starting at $(0,0)$ corresponds to an edit path from some word in {$x_i^*\wPref{{x_i}}$} to some word in {$y_j^*\wPref{{y_j}}$}, and similarly for walks ending at $(0,0)$ and words in {$\wSuff{{x_i}}x_i^*$} and {$\wSuff{{y_j}}y_j^*$}.

\begin{myexample}\label{ex:walk-in-cyclic-edgraph}
Consider the edit path in the $(2,2)$-block of \autoref{fig:all} (right).
It induces the edit path $\swap{a}{a}\swap{b}{b}\swap{b}{\e}\swap{a}{a}\swap{\e}{b}$
which corresponds to the walk
\(
(0,2)\to(1,3){\equiv}(1,0)\to(2,1)\to(2,2){\equiv}(0,2)\to(1,3)\to(2,3)
\)
in $\cyclicedgraph{2,2}$.
This walk represents the alignment of the words $a\cdot bba$ and $ab\cdot ab$, that is,
$a x_2$ and $y_2 y_2$, respectively, where $a$ is indeed a suffix of $x_2$.
\end{myexample}

\subparagraph*{Directions}
Another means to simultaneously reason on many $x\in X$ and $y\in Y$ is the notion of \emph{direction}.
Consider again $\vec{n}=(n_1,\ldots,n_n)\in\bb{N}^n$, 
$w^{\setX}_{\vec{n}} = x_1^{n_1} \cdots x_n^{n_n}$, and $\ell^{\setX}_{\vec{n}} = |w^{\setX}_{\vec{n}}|$. 
Note that $\ell^{\setX}_{\vec{n}} = \smash\sum_{i=1}^n n_i \cdot |x_i|$.
The \emph{direction} induced by $\vec{n}$ is the vector
$\dir(\vec{n}) = (d_1,\ldots,d_n)$,  where
$d_i = {n_i}/{\ell^{\setX}_{\vec{n}}}$. 

\begin{myexample}\label{ex:x-direction}
For instance, for $X=\{(a)^n(bba)^{2n}\mid n\in\bb{N}\}$, $\vec{n}=(1,2)$, we get that $w^{\setX}_{\vec{n}}=abbabba$, $\ell^{\setX}_{\vec{n}}=7$ and $\dir(\vec{n})=(\frac{1}{7},\frac{2}{7})$.
\end{myexample}

For every $\vec{n}\in\bb{N}^n$, the induced direction $\dir(\vec{n})$ lies in the set
        \(
            \Delta_{\setX} \defeq \{\vec{d} \in \bb{R}^n_{\ge0} : \sum^n_{i=1}\vec{d}[i]\cdot|x_i| = 1\}.
        \)
Note that each direction $\vec{a}\in\Delta_{\setX}$ is induced by infinitely many vectors in $\bb{N}^n$: if $\vec{n}$ induces $\vec{a}$, then so does $c\vec{n}$ for any $c\in\bb{N}$.

Not every vector $\vec{n}\in\bb{N}^n$ induces a word $w^{\setX}_{\vec{n}}$ that belongs to $X$.
We therefore restrict attention to directions that are realizable by words in $X$, at least asymptotically.
Let $S_{\setX}\subseteq\bb{N}^n$ denote the Parikh image of $X$ with respect to $x_1,\ldots,x_n$.
We define the \emph{direction set} of $X$ as
\(
D_{\setX} \defeq\, \overline{\{\dir(\vec{n}) \mid \vec{n}\in S_{\setX}\}} \;\subseteq\; \Delta_{\setX},
\)
where the closure is taken in the standard topology on $\bb{R}^n$.
Since $S_{\setX}$ is a linear set, $D_{\setX}$ is a polytope.

\paragraph*{The Linear Program}
Fix a direction {$\vec{a}\in D_{\setX}$ and an interleaving $\pi\in\Pi$.
Our goal is to describe an optimal edit path from a word in $X$ inducing the direction {$\vec{a}$} to a word in $Y$, under the restriction that the path respects the interleaving $\pi$. We use a linear program to solve the infimum for any $\vec{b} \in D_{\setY}$ and $\lambda \in (0,1]$, where $\vec{b}$ is the "closest" direction to $\ve{a}$ in $S_\setY$ and if $\ep$ is a corresponding edit path then $\lambda$ represents the fraction $\tfrac{|x|}{|\ep|}$.
To this aim we define a linear program over the following set of variables:
\[\begin{array}{@{\bullet\quad}l@{\qquad}l}
    f^{i,j}_e\ge 0 
    & \text{for each }(i,j)\in\pi\text{ and }e\in E(\cyclicedgraph{i,j})
   \\
   \tau_{t}\ge 0
   & 
   \text{for each }1\le t \le \ell
   \\
   \lambda \in (0,1]
\end{array}\]
Intuitively the variable $f_e^{i,j}$ holds the fraction corresponding to the number of times edge $e$ is used in $\cyclicedgraph{i,j}$ along $\pi$ divided by the length of the edit path $p$. Thus, the sum overall these variables should be $1$, and this is Constraint~\ref{cons:mass} below. Working with edit paths whose length is normalized to $1$ simplifies the reasoning for $\ned$: under this condition the cost of the path is the same as its weight. 

Not every assignment to the variables $f^{i,j}_e$ corresponds to a valid edit path.
In a path, for every vertex (except for the source and target) every visited vertex has a corresponding entry and exit edges. This gives rise to Constraint~\ref{cons:flw-cons} below.\footnote{The discrepancy for the source and target is taken care of in the proof details.}

We further require that the resulting edit path corresponds to a word in $X$ with direction $\vec{a}$, and transforms it into an optimal word in $Y$. 
We use the indicators $\indicator_{x_i}(e)$ and $\indicator_{y_j}(e)$ where $\indicator_{x_i}(e)$ equals $1$ if the edge $e$ consumes a letter of $x_i$—that is, if it corresponds to a substitution, deletion, or no-op—and equals $0$ otherwise (and analogously for $\indicator_{y_j}(e)$).
Constraint~\ref{cons:src-cons} ensures that the total use of edit operations reading symbols from the source block $x_i$ matches the contribution prescribed by the direction $\vec{a}\in D_{\setX}$.
Recall that parameter $\lambda$ corresponds to the ratio $\frac{|x|}{|\ep|}$.
It is introduced to account for the fact that the expression
\quad
\(
    \sum_{(i,j)\in\Pi,\ e\in E(\cyclicedgraph{i,j})} f^{i,j}_e \,\indicator_{x_i}(e)
\)
\quad
naturally yields $n_i |x_i|$ normalized by $|\ep|$, whereas the direction vector $\vec{a}\in D_{\setX}$ is defined in terms of normalization by $|x|$.
Multiplication by $\lambda$ therefore corrects the denominator, aligning the normalization with that of the source direction. See \autoref{ex:constraint-3}.

For the target word, we do not fix a direction in advance.
Instead, we allow the linear program to choose an optimal target direction.
This is achieved using the variables $\tau_1,\ldots,\tau_r$, which induce a normalized direction vector $\vec{b}$ for $Y$ via
\quad \label{eq:bj}
\(
    b_j \;=\; \sum_{s=1}^{r} \tau^{_s} \cdot \vecIndEnt{v}{s}{j}
    \quad\text{for each } 1 \leq j \leq m.
\)
Intuitively, the variables $\tau_1,\ldots,\tau_{r}$ 
normalize by $|\ep|$ a natural combination of the generators $\vec{v}_1,\ldots,\vec{v}_r$ of $Y$.
Accordingly, $b_j$ represents the $|\ep|$-normalized contribution of the target block $y_j$.
Constraint~\ref{cons:tgt-cons} ensures that the edit path consumes symbols from the target blocks in accordance with this induced direction. See \autoref{ex:constraint-4}.

\subparagraph*{Constraints}
The constraints of the linear program $\LP_{\vec{a},\pi}$ are as follows:
\begin{enumerate}
        \item\label{cons:mass} \emph{Normalization constraint}: \\[1mm]
        \(\displaystyle \phantom{--------------}
        \sum_{(i,j)\in\pi}\sum_{e\in E(\cyclicedgraph{i,j})} f^{i,j}_e = 1
        \).
        \vspace{2mm}
        \item \label{cons:flw-cons} \emph{Flow conservation in each block}:
        for every $(i,j)\in\pi$ and every vertex $v\in V(\cyclicedgraph{i,j})$ \\[1mm]
        \(\displaystyle \phantom{--------------}
            \sum_{e\in\mathrm{Out}(v)} f^{i,j}_e \;=\; \sum_{e\in\mathrm{In}(v)} f^{i,j}_e.
        \)
        \vspace{2mm}
        \item \label{cons:src-cons} \emph{Source-consumption constraints}: for all {$i{\in \{1,\dots,n\}}$} \\[1mm]
        \(\displaystyle \phantom{--------------}
             \sum_{(i,j)\in\pi}\sum_{e\in E(\cyclicedgraph{i,j})}  f^{i,j}_e\indicator_{x_i} (e)\;=\; |x_i|\cdot a_i \cdot \lambda.
        \)
        \vspace{2mm}
        \item \label{cons:tgt-cons}\emph{Target-consumption constraints}: for all {$j{\in \{1,\dots,m\}}$} \\[1mm]
        \(\displaystyle\phantom{--------------}
            \sum_{(i,j)\in\pi}\sum_{e\in E(\cyclicedgraph{i,j})} f^{i,j}_e\,\indicator_{y_j}(e)\;=\;|y_j| \cdot b_j 
        \)
\end{enumerate}

\subparagraph*{Objective:}
        \( \phantom{--}\displaystyle
            \min\ \sum_{(i,j)\in\pi}\sum_{e\in E(\cyclicedgraph{i,j})}
            f^{i,j}_e\,c(e).
        \)\\[1mm]
where $c(e)$ is the weight of $e$. The objective thus looks for an edit path with minimum cost. \\[1mm]

For a fixed direction $\vec{a}\in D_{\setX}$, each interleaving $\pi\in\Pi$ induces a linear program $\LP_{\vec{a},\pi}$.
We denote by $\OPT_{\vec{a},\pi}$ the optimal value of $\LP_{\vec{a},\pi}$.
We aggregate these by taking the best value over all interleavings, and define
\(
F(\vec{a}) \defeq \min_{\pi\in\Pi}\OPT_{\vec{a},\pi}.
\)
Continuity of $F$ is used in the proof of the following theorem.

\begin{restatable}[]{claim}{claimLPcontinuous}\label{claim:LP-continuous}
The function $F : D_{\setX}\to\bb{R}$ is continuous.
\end{restatable}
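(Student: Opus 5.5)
Since $F(\vec{a})=\min_{\pi\in\Pi}\OPT_{\vec{a},\pi}$ is a minimum over the finite set $\Pi$ (cf.\ \autoref{rem:interleavin-set-size}), it suffices to show that each $\vec{a}\mapsto\OPT_{\vec{a},\pi}$ is continuous on $D_{\setX}$. The plan is to view $\LP_{\vec{a},\pi}$ as a parametric linear program in which $\vec{a}$ appears only in the right-hand side of Constraint~\ref{cons:src-cons}. The one complication is that this right-hand side is the bilinear term $|x_i|a_i\lambda$.

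\textbf{Step 1: linearise.} Substitute $g_i=a_i\lambda$ for the product. For fixed $\vec{a}$, the feasible set is the projection of the polyhedron
\[P=\{(f,\tau,\lambda,\vec{a}) : \text{Constraints 1--4 hold},\ f,\tau\ge 0,\ \lambda\in[0,1]\}.\]
Feasibility here is essentially scale-free: we can take $\lambda$ equal to the value forced by Constraint~\ref{cons:mass} and the indicator sums. Thus $\OPT_{\vec{a},\pi}=\min\{c\cdot f : (f,\tau,\lambda)\in P_{\vec{a}}\}$, where $P_{\vec{a}}$ is the fibre of $P$ over $\vec{a}$. We must also handle the open end $\lambda>0$. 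Since $\vec{a}\in\Delta_{\setX}$, summing Constraint~\ref{cons:src-cons} over $i$ gives $\lambda=\sum f\,\indicator_x$. So $\lambda=0$ would force an all-insertion flow. I would argue that such flows are either infeasible or can be perturbed, so that taking the closure $\lambda\in[0,1]$ does not change the infimum. If that argument fails, I would work with the closure throughout and note that the optimum is attained at $\lambda>0$.

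\textbf{Step 2: continuity of the fibrewise minimum.} The total mass is $1$ and $\tau$ is bounded via Constraint~\ref{cons:tgt-cons}, so $P$ is a bounded polytope. Define $\phi(\vec{a})=\min\{c\cdot f : (f,\tau,\lambda,\vec{a})\in P\}$.

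Lower semicontinuity follows from compactness. Take $\vec{a}_k\to\vec{a}$ with optimal points $z_k\in P$. A subsequence converges to some $z\in P_{\vec{a}}$, so $\phi(\vec{a})\le\liminf\phi(\vec{a}_k)$.

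Upper semicontinuity is the main obstacle. Since the graph of $\vec{a}\mapsto P_{\vec{a}}$ is a polytope, the map is a polyhedral set-valued map. By Hoffman-type bounds (Walkup--Wets: the fibres of a polytope vary Lipschitz-continuously in Hausdorff distance over the projection), every point of $P_{\vec{a}}$ is within $O(\|\vec{a}'-\vec{a}\|)$ of a point of $P_{\vec{a}'}$, for any $\vec{a}'$ in the projection. Since the objective is linear, this gives upper semicontinuity, and in fact Lipschitz continuity.

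It remains to check that the domain is right, namely that every $\vec{a}\in D_{\setX}$ admits a feasible point for every $\pi$. I would exhibit one directly: in each block $(i,j)\in\pi$, delete all of $x_i$ and insert a $\vec{b}$ built from, say, $\tau=0$. This requires inspecting Constraint~\ref{cons:tgt-cons} with $b_j=0$, which forces no insertions. In that case, all-deletion flows along cycles of $\cyclicedgraph{i,j}$, weighted by $a_i$, with $\lambda$ chosen to normalise, satisfy Constraints 1--4. Hence $D_{\setX}$ lies in the projection of $P$, and $\phi$, which equals $\OPT_{\cdot,\pi}$, is continuous on $D_{\setX}$. Taking the minimum over $\pi$ gives continuity of $F$.
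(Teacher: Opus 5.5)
Your reduction to a fixed $\pi$, the compactness argument for lower semicontinuity, and the all-deletion feasibility witness are fine. The $\lambda=0$ points are also harmless: they are all-insertion flows of cost exactly $1$, and your deletion witness gives $\OPT_{\vec{a},\pi}\le 1$.

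The gap is in the upper-semicontinuity step. You rightly notice that in Constraint~3 the parameter multiplies the \emph{variable} $\lambda$. The substitution $g_i=a_i\lambda$ does not remove this. It makes the set of tuples $(f,\tau,\lambda,g)$ polyhedral, but $\vec{a}$ is recovered as $g/\lambda$, so the fibre over $\vec{a}$ is the bilinear slice $g=\lambda\vec{a}$. Consequently the set $P$ of tuples $(f,\tau,\lambda,\vec{a})$ that you call a polytope is not one; it is not even convex. Walkup--Wets therefore cannot be invoked, and the lower hemicontinuity of $\vec{a}\mapsto P_{\vec{a}}$ that upper semicontinuity requires is left unproved.

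No reformulation with a polyhedral graph and the linear objective $c\cdot f$ can exist. Take $X=a^*c^*$ (so $x_1=a$, $x_2=c$, and $a_2=1-a_1$) and $Y=(ab)^*$. One checks that $\OPT_{\vec{a},\pi}=1/(1+a_1)$, attained with $\lambda=1/(1+a_1)$ by no-ops paired with inserted $b$'s in block $(1,1)$ and deletions of $c$ in block $(2,1)$. This is strictly convex in $a_1$, whereas such value functions are piecewise linear.

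The same example bears on the paper's own proof. That proof is just your outer step plus the assertion that $\vec{a}$ enters only the right-hand side linearly, from which it concludes piecewise linearity. This overlooks exactly the bilinearity you flagged: continuity is still true, but it needs an actual argument.

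The missing idea is to rescale by $1/\lambda$, i.e.\ to normalise by $|x|$ instead of $|p|$. Set $g=f/\lambda$ and $\sigma=\tau/\lambda$. Then:
\begin{itemize}
\item Flow conservation and Constraint~4 are homogeneous and keep their form.
\item Constraint~3 becomes $\sum g^{i,j}_e\indicator_{x_i}(e)=|x_i|\,a_i$, with a genuinely linear right-hand side.
\item Constraint~1 becomes $\lambda=1/\sum_e g_e$, and $\lambda\in(0,1]$ is automatic since $\sum_e g_e\ge\sum_i|x_i|a_i=1$. The open endpoint therefore disappears as well.
\end{itemize}
The graph $\{(\vec{a},g,\sigma)\}$ is now a polyhedron. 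It is unbounded, because insertion cycles can be added, but Walkup--Wets (or Hoffman's bound directly) does not need boundedness. The objective becomes $c\cdot g/\sum_e g_e$, which is $2$-Lipschitz in $\ell_1$ on $\{\sum_e g_e\ge 1\}$ because $0\le c\le 1$. The fibres are nonempty on $D_{\setX}$ by your deletion witness and vary Hausdorff-Lipschitz in $\vec{a}$. Together with the Lipschitz objective, this makes each $\OPT_{\cdot,\pi}$ Lipschitz, and the finite minimum over $\Pi$ finishes the proof.
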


We can finally state the main theorem for this section.

\begin{restatable}[]{theorem}{thmLP}\label{thm:LP}
     Let $X$ and $Y$ be bounded CFLs, and let $\Pi$ and $D_{\setX}$ be the corresponding interleaving and directions set.
     Then
    \(\displaystyle
    \nedar(X,Y) = \sup_{\vec{a} \in D_{\setX}} \min_{\pi \in   \Pi} \ \OPT_{\vec{a}, \pi} = \sup_{\vec{a} \in D_{\setX}}F(\vec{a}).
    \) 
\end{restatable}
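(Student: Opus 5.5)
We prove the two inequalities $\nedar(X,Y)\le\sup_{\vec a}F(\vec a)$ and $\nedar(X,Y)\ge\sup_{\vec a}F(\vec a)$ separately. Both rest on a two-way translation between (long) edit paths respecting an interleaving $\pi$ and feasible points of $\LP_{\vec a,\pi}$, with errors that are $O(1/|\ep|)$. Throughout we use \autoref{claim: Equivalent definition to asymptotic relative distance} to reason about sequences $x_k\in X$ with $|x_k|\to\infty$. We also use \autoref{claim:bound-on-ned-edit-path}, adapted to BCFLs: an optimal path has length at most a constant times $|x|$, which forces $\lambda$ to be bounded away from $0$.

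\textbf{Paths to LP solutions (upper bound on $\nedar$ is at most the sup of $F$ fails, so this gives the $\le$ direction).} Take a sequence $x_k\in X$ with $|x_k|\to\infty$ realizing $\nedar(X,Y)$ up to $\e$. For each $k$ take an optimal path $\ep_k:x_k\leadsto y_k\in Y$. Pass to a subsequence along which the interleaving $\pi(\ep_k)=\pi$ is constant (possible since $\Pi$ is finite), $\dir(x_k)\to\vec a\in D_{\setX}$ (by compactness of $D_{\setX}$), and $|x_k|/|\ep_k|\to\lambda$.

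Project $\ep_k$ into the cyclic graphs $\cyclicedgraph{i,j}$ and set $f^{i,j}_e$ to the number of uses of $e$ divided by $|\ep_k|$. Express $\Parikh(y_k)=\sum_s t_s\vec v_s$ and set $\tau_s=t_s/|\ep_k|$. Then the following hold:
\begin{itemize}
\item Constraint~\ref{cons:mass} holds exactly.
\item Constraints~\ref{cons:src-cons} and~\ref{cons:tgt-cons} hold exactly, with $\lambda=|x_k|/|\ep_k|$ and $\vec a$ replaced by $\dir(x_k)$.
\item Flow conservation (Constraint~\ref{cons:flw-cons}) fails only at the at most $2|\pi|$ block entry and exit points, hence by $O(1/|\ep_k|)$.
\end{itemize}
So the vector is an approximately feasible point of $\LP_{\dir(x_k),\pi}$ with objective $\ned(x_k,y_k)$. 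A standard LP sensitivity argument (Hoffman-type bounds, since the constraint matrix is fixed) together with continuity (\autoref{claim:LP-continuous}) yields $\ned(x_k,y_k)\ge F(\vec a)-o(1)$. This gives $\nedar\ge$ the limit along this sequence—wait, this direction actually bounds $\inf_y\ned(x_k,y)$ from below by $F$. It is therefore the \emph{lower} bound $\nedar(X,Y)\ge F(\vec a)$ for every $\vec a$ that is a limit of directions of words in $X$, i.e.\ every $\vec a$ in the (closed) $D_{\setX}$, applied to any sequence with $\dir(x_k)\to\vec a$. Combined with the first step this gives $\nedar\ge\sup F$. The same construction also yields $\nedar\le\sup F$, since every such sequence has its limit $\vec a$ in $D_{\setX}$ and its value is at most $F(\vec a)+o(1)$ once we show the converse direction below.

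\textbf{LP solutions to paths (the converse).} Fix $\vec a\in D_{\setX}$, $\pi$, and an optimal rational solution of $\LP_{\vec a,\pi}$; by continuity of $F$ we may take $\vec a$ rational and realized, i.e.\ $\vec a=\dir(\vec n)$ for some $\vec n\in S_{\setX}$. Scale by a large integer $N$ so that all $Nf^{i,j}_e$ and $N\tau_s$ are integers. In each block the integer circulation decomposes into closed walks in $\cyclicedgraph{i,j}$. We then must stitch these into one edit path:
\begin{itemize}
\item inside a block, connect its cycles through a bounded number of extra edges;
\item between consecutive blocks of $\pi$, join with $O(1)$ further edges.
\end{itemize}
The result is a path from a word close to $w^{\setX}_{N\vec n}$ to a word close to $w^{\setY}_{N\sum\tau_s\vec v_s}$. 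Correct the endpoints to actual words of $X$ and $Y$ with boundedly many edits, using $\vec u_0=\vec v_0=\vec 0$ and the linear-set structure. The total extra cost is $O(1)$ in weight, so the path's cost is $\OPT_{\vec a,\pi}+O(1/N)$. Minimizing over $\pi$ and letting $N\to\infty$ gives $\inf_y\ned(x,y)\le F(\vec a)+o(1)$ along words $x$ of direction $\vec a$.

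For general sequences with $\dir(x_k)\to\vec a$, the words $x_k$ may not be multiples of one $\vec n$. Here we use the stratified linear structure to write $x_k$ as a combination of periods, apply the construction to a nearby realized direction, and absorb the difference via continuity of $F$. This yields $\nedar\le\sup F$.

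\textbf{Main obstacle.} The stitching step is the hard part. A feasible circulation in $\cyclicedgraph{i,j}$ may decompose into several disconnected cycles, and the cycles lying in block $(i,j)$ must be traversed in a single monotone sweep. I would first try to show that any two cycles in the same cyclic grid can be merged through a vertex at bounded distance, at $O(|x_i||y_j|)$ extra weight. I would then check that the block boundaries can be realized with prefix/suffix choices consistent with the $\wSuff{}$/$\wPref{}$ bijection described above.
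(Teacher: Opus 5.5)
Your outline has the same architecture as the paper's proof. Once the direction labels in your first part are sorted out, paths $\to$ LP gives $\nedar\ge\sup F$ and LP $\to$ paths gives $\le$. The technical route differs in each direction, however, and two of your justifications need adjusting.

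\textbf{The $\ge$ direction.} The paper repairs the projected path explicitly. It closes each block walk into a cycle and adds correction cycles so that the source and target Parikh vectors land back in $S_{\setX}$ and $S_{\setY}$. This gives an \emph{exactly} feasible point of $\LP_{\vec a',\pi}$ for a perturbed direction with $\|\vec a'-\dir(x)\|_\infty\le M/|x|$, and the paper then uses continuity of $F$. You instead keep the approximately feasible point and invoke a Hoffman bound, justified by ``the constraint matrix is fixed''. That justification is wrong as stated: in $\LP_{\dir(x_k),\pi}$ the direction multiplies the variable $\lambda$ in Constraint~\ref{cons:src-cons}. So the matrix changes with $k$, and the Hoffman constants need not be uniform. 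The fix is to view your point as approximately feasible for $\LP_{\vec a,\pi}$ at the limit direction $\vec a$, whose matrix really is fixed. The residual is then $O(1/|\ep_k|)$ in flow conservation plus $O(\|\dir(x_k)-\vec a\|_\infty)$ in Constraint~\ref{cons:src-cons}. This gives $\cost(\ep_k)\ge\OPT_{\vec a,\pi}-o(1)$ directly, uniformly over the finitely many $\pi$, and without using continuity of $F$. A nearby point of the closed polyhedron with $\lambda=0$ does no harm: it carries only insertion flow, so its cost is $1\ge\OPT_{\vec a,\pi}$. Two further points: the BCFL path-length bound you invoke to keep $\lambda$ away from $0$ is unproven and unnecessary, and $\e$-optimal $y_k$ suffice where you assume optimal ones exist.

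\textbf{The $\le$ direction.} The paper applies its LP-to-path claim separately at each $\vec\alpha_k=\dir(x_k)$, with $L_k\approx|x_k|/\lambda_k$. This requires the denominator $D$ to be independent of the direction. Yet its proof takes $D$ to be a common denominator of an optimal solution of $\LP_{\vec\alpha_k,\pi}$, which varies with $\vec\alpha_k$ and need not stay bounded as $|x_k|\to\infty$. Your choice of a single fixed rational realized direction $\vec a'$ near the limit avoids this, which is a real advantage of your route.

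However, $x_k$ generally differs from words on the ray of $\vec a'$ by $\Theta(|x_k|)$ edits, and continuity of $F$ does not absorb that discrepancy. What does absorb it is the inequality $\inf_y\ned(x,y)\le\inf_y\ned(x'',y)+\ed(x,x'')/|x''|$. This holds by concatenating paths, since the second path has length at least $|x''|$. Apply it with $x''=w^{\setX}_{c\vec n}\in X$, where $\vec n\in S_{\setX}$ realizes $\vec a'$ and $c=\lfloor|x_k|/\ell^{\setX}_{\vec n}\rfloor$; here $x''\in X$ because $S_{\setX}$ is a monoid once $\vec u_0=\vec 0$. Since $\ed(x_k,x'')=O(\|\dir(x_k)-\vec a'\|_\infty|x_k|)+O(1)$, you get
$\limsup_k\inf_y\ned(x_k,y)\le F(\vec a')+O(\|\vec a-\vec a'\|_\infty)\le\sup F+O(\|\vec a-\vec a'\|_\infty)$.
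Letting $\vec a'\to\vec a$ finishes the argument, again without continuity of $F$.

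\textbf{Stitching.} The obstacle you flag is resolved as in the paper. Each $\cyclicedgraph{i,j}$ is strongly connected, so you can add bounded out-and-back detours from $(0,0)$ to every component of the support, then take an Eulerian circuit from $(0,0)$. Each block then maps whole powers $x_i^{c}$ to $y_j^{d}$, and consecutive blocks of $\pi$ concatenate at copy boundaries. No prefix/suffix bookkeeping is needed.
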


Putting the above ingredients together, we obtain the following complexity bound for computing $\lift{AH}_{\ned}$ on bounded context-free languages.

\begin{restatable}[]{lemma}{lemLPexp}\label{lem:LP-exp}
    There exists an \Exp algorithm that computes $\nedar(X,Y)$ for bounded context-free languages $X$ and $Y$.
\end{restatable}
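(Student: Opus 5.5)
The plan is to turn \autoref{thm:LP} into an algorithm by showing that the supremum $\sup_{\vec{a}\in D_{\setX}}F(\vec{a})$ is attained at one of finitely many candidate points, each computable by solving a polynomial-size LP. The difficulty is that $\vec{a}$ enters $\LP_{\vec{a},\pi}$ bilinearly, through the product $a_i\lambda$ in Constraint~\ref{cons:src-cons}. To remove this, we substitute $\alpha_i = a_i\lambda$ and describe $D_{\setX}$ explicitly. Since $S_{\setX}$ is linear with generators $\vec{u}_1,\ldots,\vec{u}_k$ (after dropping $\vec{u}_0$), $D_{\setX}$ is the convex hull of the normalized generators $\vec{u}_s/\ell^{\setX}_{\vec{u}_s}$. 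Equivalently, $\vec{a}=\sum_s\sigma_s\vec{u}_s$ with $\sigma_s\ge 0$ and $\sum_i |x_i|\vec{a}[i]=1$. Then $\vec{\alpha}=\lambda\vec{a}$ ranges over the cone $\{\sum_s\sigma'_s\vec{u}_s\}$ with $\sum_i|x_i|\alpha_i=\lambda\in(0,1]$. For fixed $\vec{a}$ (equivalently, fixed ray), $\OPT_{\vec{a},\pi}$ is an ordinary LP in $f,\tau,\lambda$. Because of \autoref{claim:LP-continuous}, we may replace $\lambda\in(0,1]$ by $\lambda\in[0,1]$ without changing the value.

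\textbf{Step 1 (structure of $F$).} For a fixed $\pi$, consider the parametric LP $g_\pi(\vec{a})=\OPT_{\vec{a},\pi}$. Here $\vec{a}$ appears only on the right-hand side, multiplied by the variable $\lambda$. After the substitution $\vec{\alpha}=\lambda\vec{a}$, the feasible set becomes a polyhedron $P_\pi$ in the joint variables $(f,\tau,\lambda,\vec{\alpha})$, and $g_\pi(\vec{a})$ is the minimum of a linear objective over the slice of $P_\pi$ where $\vec{\alpha}$ lies on the ray through $\vec{a}$. By LP duality, $g_\pi(\vec{a})=\max_{\text{dual vertices }w}\langle w,\cdot\rangle$, so for fixed $\lambda$ the function is piecewise linear and convex in the right-hand side. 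The ray restriction makes $g_\pi$ a ratio of linear functions on each region. I would handle this with the Charnes--Cooper view: restricted to the polytope $D_{\setX}$, $g_\pi$ is quasi-convex and piecewise linear-fractional, with pieces defined by finitely many dual bases. Consequently $F=\min_\pi g_\pi$ is piecewise linear-fractional on a polyhedral subdivision of $D_{\setX}$, and its sup over each cell is attained at a vertex of that cell.

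\textbf{Step 2 (algorithm).} We enumerate the candidate points: all vertices of the common refinement of the subdivisions, i.e.\ intersections of $D_{\setX}$ with hyperplanes arising from dual bases of the LPs $\LP_{\cdot,\pi}$ and from the equalities $g_\pi=g_{\pi'}$. Each candidate is a solution of a linear system selected from exponentially many bases. Its coordinates have polynomially many bits by Cramer's rule, since the cyclic edit graphs have size $|x_i||y_j|$. At each candidate we evaluate $F$ by solving $|\Pi|=\binom{n+m-2}{n-1}$ LPs in polynomial time each, and return the maximum. The total count is exponential, giving \Exp.

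\textbf{Main obstacle.} The hard step is justifying that the supremum is attained at such a finite, explicitly enumerable set. The minimum over $\pi$ destroys concavity and convexity, and the $\lambda$-coupling makes the pieces linear-fractional rather than linear. My first attempt would fix $\pi$ together with a dual basis. On the resulting cell, $F$ equals a linear-fractional function, whose maximum over a polytope lies at a vertex. The cells are cut out by linear inequalities in $\vec{a}$, because the feasibility and optimality conditions of a fixed basis are linear in $(\vec{\alpha},\lambda)$ and homogenize on the ray. This reduces the problem to enumerating vertices of exponentially many polyhedra of polynomial description, and that enumeration can be carried out in exponential time.
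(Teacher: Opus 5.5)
There is a genuine gap, at the step where you pass from the individual $g_\pi$ to $F=\min_\pi g_\pi$.

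\textbf{What you get right.} Your observation that $\vec{a}$ enters $\LP_{\vec{a},\pi}$ bilinearly, through the term $a_i\lambda$ in Constraint~\ref{cons:src-cons}, is correct. It is exactly what the paper's proof glosses over. The paper asserts that $\vec{a}$ only affects an affine right-hand side, and deduces by duality that each $\OPT_{\vec{a},\pi}$ is a maximum of affine functions. From this it concludes that $F$ is piecewise linear, and it enumerates the vertices of the resulting hyperplane arrangement. Your repair does work for a single $\pi$: after homogenizing, $g_\pi$ is the value of a linear-fractional program whose right-hand side is linear in $\vec{a}$. So $g_\pi$ is quasi-convex and piecewise linear-fractional on a polyhedral subdivision.

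\textbf{Where it fails.} Pieces of different $g_\pi$ have different denominators; the denominator is $1/\lambda=|p|/|x|$, the normalized path length. Hence the locus $g_\pi=g_{\pi'}$ is $n_\pi d_{\pi'}=n_{\pi'}d_\pi$, which is a quadric, not a hyperplane. Moreover, on a cell of the common refinement where every $g_\pi$ is a single linear-fractional function, $F$ is still the minimum of several of them. Such a minimum is quasi-concave, so its maximum over the cell can lie in the interior. Your sentence ``on the resulting cell, $F$ equals a linear-fractional function'' is therefore false. Consequently the Cramer's-rule candidates of Step~2, which are all rational, need not contain a maximizer.

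\textbf{A concrete instance.} Take $X=a^*c^*$ with $x_1=a$, $x_2=c$, and $Y=c^*(ab)^*$ with $y_1=c$, $y_2=ab$. Then $D_{\setX}=\{a_1+a_2=1\}$ and $|\Pi|=2$.
\begin{itemize}
\item For $\pi=\{(1,1),(2,1),(2,2)\}$, the optimum keeps the $c$'s and deletes the $a$'s, so $\OPT_{\vec{a},\pi}=a_1$.
\item For $\pi'=\{(1,1),(1,2),(2,2)\}$, the optimum matches each $a$, inserts a $b$ after it, and deletes the $c$'s, so $\OPT_{\vec{a},\pi'}=\lambda=1/(1+a_1)$. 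To check optimality: the objective is $1-N$ where $N$ is the no-op mass, and flow conservation gives $N\le a_1\lambda$ and $2N\le 1-(1-a_1)\lambda$.
\end{itemize}
Thus $F(\vec{a})=\min\{a_1,1/(1+a_1)\}$. Its maximum is at the interior point $a_1=(\sqrt5-1)/2$ on the conic $a_1(1+a_1)=1$. Indeed $\nedar(X,Y)=(\sqrt5-1)/2$, which one can also check directly from $\inf_{y\in Y}\ned(a^nc^m,y)=\min\{\tfrac{n+m}{2n+m},\tfrac{n}{n+m}\}+O(1/(n+m))$.

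No enumeration of rational candidate points can return this value. The same example also refutes the paper's claim that $F$ is piecewise linear, so the paper's proof does not supply the missing step either.

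\textbf{What is missing.} You need a way to maximize a minimum of linear-fractional functions. One option is to solve, on each polyhedral cell, the generalized fractional program $\max\{t : n_\pi(\vec{a})\ge t\,d_\pi(\vec{a})\ \text{for all } \pi\}$. Its optimum is in general an algebraic number (a root of a determinant polynomial in $t$ attached to some basis), not a vertex value. You would also need to fix what it means to compute an irrational output.
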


\begin{proof}[Proof sketch]

By the LP characterization, $\nedar(X,Y)=\sup_{\vec a\in D_X}F(\vec a)$, where $F(\vec a)=\min_{\pi\in\Pi}\OPT_{\vec a,\pi}$ and $\OPT_{\vec a,\pi}$ is the optimum of a linear program whose constraints are independent of $\vec{a}$ and whose right-hand side depends affinely on it. By LP duality, for each $\pi$, $\OPT_{\vec {a},\pi}$ is a convex, piecewise-linear function of $\vec{a}$. Hence, $F$ is also piecewise linear.

Since $D_X$ is a compact polytope, the supremum of $F$ is attained at a vertex of the arrangement induced by these linear pieces. The number of defining hyperplanes is exponential, as it depends on the number of interleavings and dual vertices. Standard bounds on hyperplane arrangements imply that the number of candidate vertices is exponential.

The algorithm enumerates these candidates, evaluates $F$ at each by solving the corresponding  linear programs, and returns the maximum. This yields an exponential algorithm.
    
\end{proof}

\section{Discussion}
The Asymptotic Hausdorff lifting captures a notion of similarity that is inherently asymptotic, comparing infinite sets by their long-run behavior while deliberately abstracting away from finite deviations.
This choice is intentional: in formal verification, two languages are considered similar when similarity is witnessed on increasingly long words.
In particular, in applications such as repair, robustness, and automata learning, asymptotic behavior with respect to a normalized edit distance between words reflects the semantic core of the compared languages.

Our results highlight an inherent trade-off between robustness and sensitivity.
Classical Hausdorff-style constructions are highly sensitive to outliers, whereas asymptotic notions necessarily collapse certain local distinctions.
The Asymptotic Hausdorff lifting makes this trade-off explicit and allows additional sensitivity to be reintroduced in a controlled manner by combining it with complementary distances.

Although formal languages serve as a central motivating domain, the lifting scheme itself is more general and applies to any infinite domain equipped with a meaningful notion of size and an element-level metric.

From an algorithmic perspective, our results indicate that computing $\ahned$ is already challenging for regular and bounded context-free languages, and extending the proposed techniques to richer classes or more efficient exact algorithms is an interesting direction for future work.

\bibliographystyle{plainurl}
\bibliography{gm}

\appendix
\section{Additional Examples}

\begin{example}\label{ex:constraint-3}
Continuing Ex.\ref{ex:grid-rectangles-path}, we obtain $\lambda=\frac{7}{10}$, since $|x|=7$ and $|\ep|=10$.
Recall from Ex.\ref{ex:x-direction} that $\dir(\vec{n})=(\frac{1}{7},\frac{2}{7})$.
We now verify Constraint~\ref{cons:src-cons}  for each $i\in\{1,2\}$.
\begin{itemize}
    \item For $i=1$:
    \(
    \sum_{\substack{(1,j)\in\pi \\ e\in E(\cyclicedgraph{1,j})}} f^{1,j}_e \indicator_{x_1}(e)
    = \tfrac{1}{10}.
    \) \\[1mm]
    Indeed, among the rectangles $(1,j)$, only rectangle $(1,1)$ is traversed by the path, and within it there is a single edge that consumes a letter of $x_1$. For the right hand side we have,
    \(
    |x_1| \cdot a_1 \cdot \lambda
    = 1 \cdot \tfrac{1}{7} \cdot \tfrac{7}{10}
    = \tfrac{1}{10}.
    \)
    so equality holds.
    \vspace{2mm}
    \item For $i=2$:
    \(
    \sum_{\substack{(2,j)\in\pi \\ e\in E(\cyclicedgraph{2,j})}} f^{2,j}_e \indicator_{x_2}(e)
    = \tfrac{6}{10}.
    \) \\[1mm]
    This holds since 
    in the $(2,1)$-block there are 2 edges that consume a letter of $x_2$, in the $(2,2)$-block there 4 such edges, and in the $(2,3)$-block there are none.
    For the right hand side we have,
    \(
    |x_2| \cdot a_2 \cdot \lambda
    = 3 \cdot \tfrac{2}{7} \cdot \tfrac{7}{10}
    = \tfrac{6}{10}.
    \)
    satisfying the requested equality.
\end{itemize}
\end{example}

\begin{example}\label{ex:constraint-4}
Continuing Ex.\ref{ex:grid-rectangles-path}, recall that the generators of $Y$
are $\vec{v}_1=(1,0,0)$ and $\vec{v}_2=(0,2,1)$. The vector $\vec{m}=(1,2,1)$ is obtained by the natural linear combination $1\cdot\vec{v}_1+1\cdot\vec{v}_2$. Since $|\ep|=10$ these yield $\tau_1=\tfrac{1}{10}$, 
    $\tau_2=\tfrac{1}{10}$. 
Accordingly we get,
    $$\begin{array}{l@{=}l}
    b_1 & \tau_1\cdot \vecIndEnt{v}{1}{1} +
                \tau_2 \cdot \vecIndEnt{v}{2}{1}=
                \tau_1\cdot 1 = \frac{1}{10} \\ 
    b_2& \tau_1\cdot \vecIndEnt{v}{1}{2} + \tau_2\cdot \vecIndEnt{v}{2}{2} =
                \tau_2\cdot 2 = \frac{2}{10} 
                 \\ 
    b_3 & \tau_3\cdot \vecIndEnt{v}{1}{3}  + \tau_2 \cdot \vecIndEnt{v}{2}{3} =
    \tau_2\cdot 1 = \frac{1}{10}\end{array}.$$
We now verify Constraint~\ref{cons:tgt-cons}  for each $j\in\{1,2,3\}$.
\begin{itemize}
    \item For $j=1$ : 
    $\sum_{\substack{(i,1)\in\pi \\ e\in E(\cyclicedgraph{i,1})}} 
    f^{i,1}_e \indicator_{y_1}(e)
    = \tfrac{4}{10} = 
    |y_1|\cdot b_1=4 \cdot \tfrac{1}{10}$
        \vspace{2mm}
    \item For $j=2$ : 
    $\sum_{\substack{(i,2)\in\pi \\ e\in E(\cyclicedgraph{i,2})}} 
    f^{i,2}_e \indicator_{y_2}(e)
    = \tfrac{4}{10}= |y_2|\cdot b_2
    =2 \cdot \tfrac{2}{10}$
        \vspace{2mm}
    \item For $j=3$ : 
    $\sum_{\substack{(i,3)\in\pi \\ e\in E(\cyclicedgraph{i,3})}} 
    f^{i,3}_e \indicator_{y_3}(e)
    = \tfrac{1}{10}= 
    |y_3|\cdot b_3=1 \cdot \tfrac{1}{10}$    
\end{itemize}
\end{example}

\section{Proofs of \autoref{sec:related-work}}
\claimacostisnotametric*
\begin{proof}
        Let $X,Y,Z$ be the following languages
        \[
            \begin{array}{ccc}
            X = \{ a^{18\cdot 10^k} \mid k \in \bb{N}\} &
            Y = \{ a^{19\cdot 10^k} \mid k \in \bb{N} \} &
            Z = \{ a^{20\cdot 10^k} \mid k \in \bb{N}\}       
            \end{array}
        \]
        {We show that $\acost(X,Y){=}\frac{1}{18}$,   $\acost(Y,Z){=}\frac{1}{19}$, and $\acost(X,Z){=}\frac{1}{9}$. Thus, the triangle inequality fails, since  $\frac{1}{9}>\frac{1}{18} + \frac{1}{19}$.}
        
        As this is an unary alphabet $\ed(a^m,a^n) = \left|m-n\right|$.
        We denote the words in the languages as follows 
        \(
            \begin{array}{ccc}
            x_k =  a^{18\cdot 10^k} &
            y_k = a^{19\cdot 10^k} &
            z_k = a^{20\cdot 10^k}       
            \end{array}
        \).
        For every $x_k$ the following holds
        \begin{align*}
            &\ed(x_k, z_k) = |18\cdot 10^k - 20\cdot 10^k | = 2\cdot 10^k\\
            &\ed(x_k, z_{k-1}) = |18\cdot 10^k - 20\cdot 10^{k-1} | = 16\cdot 10^k\\
            &\ed(x_k, z_{k+1}) = |18\cdot 10^k - 20\cdot 10^{k+1} | = 182\cdot 10^k\\
        \end{align*}
        \noindent
        Thus we can see that $\arg \inf_{z \in Z}\ed(x_k,z) = z_k$.\\
        Furthermore for every $x_k$
        \begin{align*}
            &\ed(x_k, y_k) = |18\cdot 10^k - 19\cdot 10^k | = 1\cdot 10^k\\
            &\ed(x_k, z_{k-1}) = |18\cdot 10^k - 19\cdot 10^{k-1} | \ge 16\cdot 10^k\\
            &\ed(x_k, z_{k+1}) = |18\cdot 10^k - 19\cdot 10^{k+1} | = 172\cdot 10^k
        \end{align*} 
        Thus we can see that $\arg \inf_{y \in Y}\ed(x_k,y) = y_k$.\\
        Finally for every $y_k$ the following holds
        \begin{align*}
            &\ed(y_k, z_k) = |19\cdot 10^k - 20\cdot 10^k | = 1\cdot 10^k\\
            &\ed(y_k, z_{k-1}) = |19\cdot 10^k - 20\cdot 10^{k-1} | = 17\cdot 10^k\\
            &\ed(y_k, z_{k+1}) = |19\cdot 10^k - 20\cdot 10^{k+1} | = 181\cdot 10^k
        \end{align*}
        Thus we can see that $\arg \inf_{z \in Z}\ed(y_k,z) = z_k$.\\
        We can finally see that
        \[
        \begin{array}{c}
           \inf_{y \in Y} \frac{\ed(x_k,y)}{|x_k|} =  \frac{\ed(x_k,y_k)}{|x_k|} = \frac{10^k}{18\cdot 10^k} = \frac{1}{18}\\
           \inf_{z \in Z} \frac{\ed(x_k,z)}{|x_k|} =  \frac{\ed(x_k,z_k)}{|x_k|} = \frac{2\cdot 10^k}{18\cdot 10^k} = \frac{1}{9}\\
           \inf_{z \in Z} \frac{\ed(y_k,z)}{|y_k|} =  \frac{\ed(y_k,z_k)}{|y_k|} = \frac{10^k}{19\cdot 10^k} = \frac{1}{19}\\
        \end{array}
        \]
        Putting it all together we get 
        \begin{align*}
            \acost(X,Z) &= \lim_{n \to \infty} \sup_{\substack{x \in X  |x|\geq n}} \inf_{z \in Z} \tfrac{\ed(x,z)}{|x|}
            = \lim _{i \to \infty} \inf_{z \in Z} \tfrac{\ed(x_{k_i},z)}{|x_{k_i}|}
             = \frac{1}{9}.
\end{align*}

    \noindent
    Similarly $\acost(X,Y)=\frac{1}{18}$ and $\acost(Y,Z)=\frac{1}{19}$.
    Yet $\frac{1}{9}>\frac{1}{18} + \frac{1}{19}$.

    \end{proof}

\section{Proofs of \autoref{sec:AH}}

\claimasymptoticrelativedistancewelldefined*
 \begin{proof}
        Let $X,Y \subseteq M$ and let $b_k = \sup_{\substack{x \in X \\ s(x)\geq k}}\inf_{y \in Y} d(x,y)$.  As  $d(x,y) \le \sup d$ for every $x,y \in M$, we know that $0 \le b_k \le \sup d$ for every $k$. Moreover for every $k$ 
        \[
            b_k = \sup_{\substack{x \in X \\ s(x)\geq k}}\inf_{y \in Y} d(x,y) \geq \sup_{\substack{x \in X \\ s(x)\geq k+1}}\inf_{y \in Y} d(x,y) = b_{k+1}
        \]
        Hence $b_k$ is a bounded non-increasing sequence and thus converges. 
\end{proof}

\claimEquivalentdefinitiontoasymptoticrelativedistance*
\begin{proof}
        Let $(x_n)^\infty_{n=1}$ be a sequence in $X$ that satisfies $s(x_n) \ge n$ for every $n\in\bb{N}$. Then
        \[
            \inf_{y \in Y}d(x_n,y) \le \sup_{\substack{x \in X \\ s(x) \ge n}} \inf_{y \in Y} d(x,y)
        \]
        and
        \[
            \limsup_{n \to\infty}  \inf_{y \in Y}d(x_n,y) \le \limsup_{n \to\infty}  \sup_{\substack{x \in X \\ s(x) \ge n}} \inf_{y \in Y} d(x,y) = \dar(X,Y)
        \]
        and therefore
        \[
            \sup_{\substack{(x_n)^\infty_{n=1} \subseteq X\\ s(x_n) \ge n \,\forall n\in\bb{N}}} \lim_{n \to \infty} \inf_{y\in Y} d(x_n,y) \le \dar(X,Y).
        \]
        \smallskip
        As for the other direction let $\e >0$ and for every $n$ let $x_n$ be an element that satisfies
        \[
            s(x_n) \ge n \qquad \text{and} \qquad \inf_{y\in Y} d(x_n,y) \ge \sup_{\substack{x \in X \\ s(x) \ge n}} \inf_{y \in Y} d(x,y) - \e.
        \]
        Then
        \[
            \limsup_{n \to \infty } \inf_{y\in Y} d(x_n,y) \ge \limsup_{n \to \infty} \sup_{\substack{x \in X \\ s(x) \ge n}} \inf_{y \in Y} d(x,y) - \e = \dar(X,Y) - \e
        \]
        and thus
        \[
            \sup_{\substack{(x_n)^\infty_{n=1} \subseteq X\\ s(x_n) \ge n \,\forall n\in\bb{N}}} \lim_{n \to \infty} \inf_{y\in Y} d(x'_n,y) \ge \lim_{n \to \infty } \inf_{y\in Y} d(x_n,y) \ge \dar(X,Y).
        \]
        Finally
        \[
            \dar(X,Y) = \sup_{\substack{(x_n)^\infty_{n=1} \subseteq X\\ s(x_n) \ge n \,\forall n\in\bb{N}}} \lim_{n \to \infty} \inf_{y\in Y} d(x_n,y).
        \]
\end{proof}

\thmnormedinducedmetricshavetheasymptoticseparationproperty*

\begin{proof}
        Let $(x_k)^\infty_{k=1}$ be an $s$-unbounded sequence of elements, and $ (y_k)^\infty_{k=1}$ be  an $s$-bounded sequence of elements. Denote by $B$ a bound on the norm of the sequence $y_k$, there exists $n$ such that every $k \ge n$ satisfies
        \begin{equation}\label{eq: norm diff}
            \norm{x_k} \ge B \ge  \norm{y_k}
        \end{equation}
        Fixing $k\ge n$ we get
        \begin{align*}
            d(x_k, y_k) &= \norm{x_k - y_k} \\
                        &\ge \norm{\norm{x_k} - \norm{y_k}}\\
                        &= \norm{x_k} - \norm{y_k}\\
                        &\ge^{\eqref{eq: norm diff}} \norm{x_k} - B.
        \end{align*}
        Thus as $\left(\norm{x_k} - B\right)_{k} \underset{k \to \infty}{\to} \sup d$. Hence, we proved that $d$ has the asymptotic separation property.
\end{proof}

\section{Proofs of \autoref{sec:FLdistances}}

\claimwordmetricswithasp*
\begin{proof}
        Let $(x_n)^\infty_{n=1}$ and $(y_n)^\infty_{n=1}$ be length-bounded and length-unbounded  sequences respectively.
        \begin{itemize}
            \item 
   
        For $\ed$ we know that from some point $|y_n| > |x_n|$ and thus every edit path from $x_n$ to $y_n$ will need at least $|y_n| - |x_n|$ insertions. But $\lim_{n \to \infty} |y_n| - |x_n| = \infty$ and thus $\ed$ has the asymptotic separation property.
        \item
        From the same reasoning as in $\ed$ from some point onward
        \begin{align*}
            \lim_{n \to \infty} \ged(x_n, y_n) &= \lim_{n \to \infty} \tfrac{2\ed(x_n,y_n)}{|x_n|+|y_n|+ \ed(x_n,y_n)}\\
                    &\ge \lim_{n \to \infty} \tfrac{2\cdot (|y_n|-|x_n|)}{|x_n|+|y_n|+ \max\{|y_n|,|x_n|\}} = 1. 
        \end{align*}
        \item 
        For $\prf$ we get
        \begin{align*}
            \lim_{n \to \infty} \prf(x_n,y_n) &=\lim_{n \to \infty}|x_n|+|y_n|-2\cdot\max\{|z|\colon x_n,y_n\in z\cdot\Sigma^*\} \\
            &\ge \lim_{n \to \infty}|x_n|+|y_n|-2|x_n| = \infty
        \end{align*}
        \item
        For $\ced$ we know that from some point onward we will need at least $|y_n| - k$ edit operations in the sequence, where $k$ is the length bound of $(x_n)^\infty_{n=1}$. Thus
        \[
            \lim_{n \to \infty} \ced(x_n,y_n) \ge \lim_{n \to \infty} \sum^{|y_n|}_{i = k+1} \frac{1}{i} = \infty 
        \]
        \item
        Finally for $\ned$, Let $k$ be a bound on the lengths of $ (x_n)^\infty_{n=1}$ and fix $n$. Every edit path $p$ from $x_n$ to $y_n$ satisfies 
        \[
            |p| \le |x_n|+|y_n| \qquad \wgt(p) \ge \big||x_n| - |y_n| \big|
        \]
        Therefore from some point onward
        \[
            \ned(x_n,y_n) \ge \frac{\big||x_n| - |y_n| \big|}{|x_n|+|y_n|} \ge \frac{|y_n| - k}{|y_n| + k}
        \]
        Thus
        \[
             \lim_{n \to \infty} \ned(x_n,y_n) \ge  \lim_{n \to \infty} \frac{|y_n| - k}{|y_n| + k} = 1.
        \]
             \end{itemize}
    \end{proof}
\clmfinitesubsetindif*
\begin{proof}
    We prove the claim on $\metric{D}\in\{ \nedAH, \gedAH, \cedAH\}$ using the asymptotic directional distances of the different metrics. Let $\elementmetric{d} \in \{\ned, \ced, \ged\}$. It is trivial that
    \begin{align*}
        \dar(X{\cup}F,Y) &= \lim_{k \to \infty} \sup_{\substack{x \in X\cup F \\ |x|\geq k}}\inf_{y \in Y} \elementmetric{d}(x,y) \\&= \lim_{k \to \infty} \sup_{\substack{x \in X \\ s(x)\geq k}}\inf_{y \in Y} \elementmetric{d}(x,y) =  
        \dar(X,Y).
    \end{align*}
    As for the other direction, 
    first we also notice that it is trivial that 
    \[
        \dar(Y,X \cup F) \le \dar(Y,X). 
    \]
    Let $\e > 0$ and let $(y_n)^\infty_{n=1} \subseteq Y$ be a sequence with length tends to $\infty$ that satisfy
    \[
        \sup_{\substack{y \in Y \\ |y|\geq k}}\inf_{x \in X} \elementmetric{d}(y,x) \le \inf_{x \in X} \elementmetric{d}(y_k, x) + \e.
    \]
    \textbf{Case 1} - There exists $N$ such that for every $n \ge N$ and every $x \in X$ the following is satisfied
    \[
        \elementmetric{d}(y_n,x) > \inf_{x \in X \cup F} \elementmetric{d}(y_n,x) + \e.
    \]
    Thus from some point onward the closest word to $y_n$ with respect to metric $d$ is in $F$. But as $\elementmetric{d}$ has the asymptotic separation property we get
    \begin{align*}
        \dar(Y,X\cup F) &= \sup_{\substack{(y'_n)^\infty_{n=1} \subseteq Y \\ |y\_n| \ge n}}\limsup _{n \to \infty} \inf_{x \in X \cup F}\elementmetric{d}(y'_n,x)\\
            &= \limsup _{n \to \infty} \inf_{x \in X \cup F}\elementmetric{d}(y_n,x)\\
            &= \limsup _{n \to \infty} \inf_{x \in F}\elementmetric{d}(y_n,x)\\
            &= \sup \elementmetric{d}.
    \end{align*}
    But then
    \[
        \sup \elementmetric{d} \le \dar(Y,X\cup F) \le \dar(Y,X) \le \sup \elementmetric{d}.
    \]
    \textbf{Case 2} - There exists infinite subsequence $(y_{n_i})^\infty_{i=1}$ such that there exists $x_i \in X$ that satisfy 
    \[
        \elementmetric{d}(y_{n_i}, x_i) \le \inf_{x \in X \cup F} \elementmetric{d}(y_{n_i}, x) + \e.
    \]
    Thus
    \begin{align*}
        \dar(Y,X\cup F) &= \sup_{\substack{(y'_n)^\infty_{n=1} \subseteq Y \\ |y\_n| \ge n}}\limsup _{n \to \infty} \inf_{x \in X \cup F}\elementmetric{d}(y'_n,x)\\
                        &\ge \limsup _{i \to \infty} \inf_{x \in X \cup F}\elementmetric{d}(y_{n_i},x)\\
                        &= \e +  \limsup _{i \to \infty} \elementmetric{d}(y_{n_i},x_i) \\
                        &\ge  \e +  \limsup _{i \to \infty} \inf_{x \in X} \elementmetric{d}(y_{n_i},x) \\
                        &\ge 2\e + \lim_{k \to \infty} \sup_{\substack{y \in Y \\ |y|\geq k}} \inf_{x \in X} \elementmetric{d}(y,x) \\
                        &= 2\e + \dar(Y,X).
    \end{align*}
    And as $\e$ is arbitrary small
    \[
        \dar(Y, X\cup F) \ge \dar(Y, X). 
    \]
    
\end{proof}

\clmboundededits*
\begin{proof}
    We prove for the asymptotic directional distance as it is sufficient. Assume that there exists $k \in \bb{N}$ such that for every $x \in X$ there exists $y \in Y$ that satisfy $\ed(x,y) \le k$ and for every $y\in Y$ there exists $x\in X$ such that $\ed(x,y)<k$. If $X$ is infinite if and only if $Y$ is infinite thus if one is finite  then  
    \[
        \nedar(X,Y) = 0, \qquad \gedar(X,Y) = 0, \qquad \cedar(X,Y) = 0.
    \]
    From this point onward both are infinite. For $\nedar$ we get
    \[
        \nedar(X,Y) = \lim_{n \to \infty} \sup_{\substack{x \in X\\ |x| \ge n}} \inf_{y \in Y} \ned(x,y) \le \lim_{n \to \infty} \sup_{\substack{x \in X\\ |x| \ge n}} \frac{k}{|x|} = 0.
    \]
    For $\gedar$
    \begin{align*}
        \gedar(X,Y) &= \lim_{n \to \infty} \sup_{\substack{x \in X\\ |x| \ge n}} \inf_{y \in Y} \ged(x,y)
        \\ &= \lim_{n \to \infty} \sup_{\substack{x \in X\\ |x| \ge n}} \inf_{y \in Y} \frac{2 \ed(x,y)}{|x|+|y|+\ed(x,y)}
        \le \lim_{n \to \infty} \sup_{\substack{x \in X\\ |x| \ge n}} \frac{2k}{|x|} = 0.
    \end{align*}
    And for $\cedar$
    \[
        \cedar(X,Y) = \lim_{n \to \infty} \sup_{\substack{x \in X\\ |x| \ge n}} \inf_{y \in Y} \ced(x,y) 
        \le \lim_{n \to \infty} \sup_{\substack{x \in X\\ |x| \ge n}} \frac{k}{|x|-k} = 0.
    \]
\end{proof}

\clmprecentagenature*
\begin{proof}
    For every $(a^ib)^k \in (a^ib)^*$ there need to be done at least $k$ operations of deletion/replace in order to get to $a^m$ for some $m$. As for $\ned$ replacing is better then deletion because it results in a longer path, and any other addition to that path will only increase its cost thus
    \begin{align*}
        \nedar((a^ib)^*, a^*) &= \lim_{n \to \infty} \sup_{\substack{x \in (a^ib)^*\\ |x| \ge n}} \inf_{y \in a^*} \ned(x,y)\\
                            &= \lim_{n \to \infty} \tfrac{n}{n\cdot(j+1)} = \tfrac{1}{j+1}
    \end{align*}
    As  for $\ged$ we know 
    \begin{align*}
        \gedar((a^ib)^*, a^*) &= \lim_{n \to \infty} \sup_{\substack{x \in (a^ib)^*\\ |x| \ge n}} \inf_{y \in a^*} \ged(x,y)\\
                            &= \lim_{n \to \infty} \sup_{\substack{x \in (a^ib)^*\\ |x| \ge n}} \inf_{y \in a^*} \tfrac{2 \ed(x,y)}{|x|+|y|+\ed(x,y)}\\
                            &=  \lim_{n \to \infty} \tfrac{2n}{n\cdot(i+1) + n\cdot(i+1) + n}\\
                            &= \tfrac{2}{2(i+1) + 1}
    \end{align*}
    Finally for $\ced$ we get
     \begin{align*}
        \cedar((a^ib)^*, a^*) &= \lim_{n \to \infty} \sup_{\substack{x \in (a^ib)^*\\ |x| \ge n}} \inf_{y \in a^*} \ced(x,y)\\
                            &=  \lim_{n \to \infty} \tfrac{n}{n\cdot (i +1)} = \tfrac{1}{1+i}
    \end{align*}
    As all terms decrease as $i$ increases, the directional case is proved. Note that the exact same arguments can be used for the other asymptotic directional distance.       
\end{proof}

\claimahnedsatisfiesstrictprecentage*
\begin{proof}
    We show below that the strict percentage property fails for $\gedAH$ and $\cedAH$.
For $\nedAH$, the property follows directly from the definition and is straightforward to verify.
    \begin{itemize}
        \item 
        We show that $\gedAH$ violates the property. Let $X = a^*$ and $Y = b^*$. Then
    \begin{align*}
        \gedar(a^*,b^*) &= \lim_{n \to \infty} \sup_{\substack{x \in a^* \\ |x| \ge n}} \inf_{y \in b^*}\ged(x,y)\\
                        &= \lim_{n \to \infty} \sup_{\substack{x \in a^* \\ |x| \ge n}} \inf_{y \in b^*}\tfrac{2 \cdot \ed(x,y)}{|x|+|y|+\ed(x,y)}\\
                        &= \lim_{n \to \infty} \tfrac{2 \cdot  n}{n +n +n}\\
                        &= \tfrac{2}{3}
    \end{align*}
         \item We show that $\cedAH$ violates the property.
        Let $X = (ab)^*$ and $Y = (aab)^*$. Then
    \begin{align*}
        \cedar((ab)^*,(aab)^*) &= \lim_{n \to \infty} \sup_{\substack{x \in (ab)^* \\ |x| \ge n}} \inf_{y \in (aab)^*}\ced(x,y)\\
                        &= \lim_{n \to \infty} \sum^n_{i=1} \tfrac{1}{2\cdot n+i}\\
                        &= \lim_{n \to \infty} \tfrac{1}{n} \sum^n_{i=1} \tfrac{1}{2+\tfrac{i}{n}}\\
                        &= \int^1_{0}\tfrac{1}{2 + x} dx \\
                        &= \Big[ \ln(2+x) \Big]^1_0 \\
                        &= \ln{3} - \ln{2} \\
                        &= \ln(\tfrac{3}{2}) \approx 0.4054.
    \end{align*}
    \end{itemize}
\end{proof}

\begin{restatable}[GED is lesser than NED]{claim}{claimgedlened}\label{claim: ged le ned}
    Let $x$ and $y$ be words. Then $\ged(x,y) \le \ned(x,y)$.
\end{restatable}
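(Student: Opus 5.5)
Set $e=\ed(x,y)$, and let $\ep$ be an edit path from $x$ to $y$ attaining $\ned(x,y)=\wgt(\ep)/|\ep|$. Write $w=\wgt(\ep)$ and $L=|\ep|$. Under the uniform weight, $w$ counts the non-no-op operations of $\ep$, so $w\ge e$.

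The first step is a length identity. Classify the operations of $\ep$ by counts: no-ops $N$, substitutions $S$, deletions $D$, insertions $I$. Each letter of $x$ is consumed by exactly one no-op, substitution or deletion, and each letter of $y$ is produced by exactly one no-op, substitution or insertion. Hence
\[|x|+|y| = 2N+2S+D+I.\]
Since $L=N+S+D+I$ and $w=S+D+I$, we get
\[|x|+|y|+w = 2N+3S+2D+2I\ \ge\ 2L,\]
that is, $L\le \tfrac{|x|+|y|+w}{2}$.

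The second step turns this into a bound on $\ned$. From the length bound,
\[\ned(x,y)=\frac{w}{L}\ \ge\ \frac{2w}{|x|+|y|+w}.\]
The map $t\mapsto \tfrac{2t}{c+t}$ is nondecreasing in $t\ge 0$ for $c=|x|+|y|\ge 0$. Since $w\ge e$, this gives
\[\ned(x,y)\ \ge\ \frac{2e}{|x|+|y|+e}=\ged(x,y).\]

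The degenerate case $x=y=\varepsilon$ gives $0\le 0$ directly, and if $c=0$ the monotonicity is trivial. The only point requiring care is the counting identity for $|x|+|y|$. The main (mild) subtlety is that the $\ned$-optimal path need not be $\ed$-optimal. This is handled by the monotonicity step, which lets us replace $w$ by the smaller quantity $e$.
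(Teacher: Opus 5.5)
Your proof is correct and is essentially the paper's argument. You take a $\ned$-optimal path, use the count $|x|+|y|+\wgt(\ep)=2|\ep|+S\ge 2|\ep|$, and then pass from $\wgt(\ep)$ to $\ed(x,y)$. The only difference is that you state explicitly the monotonicity of $t\mapsto \tfrac{2t}{c+t}$ and the degenerate case $x=y=\varepsilon$, both of which the paper uses silently in its final inequality.
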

\begin{proof}
         Let $\ep$ be an edit path from $x$ to $y$ such that $\ned(x,y) = \frac{\wgt(\ep)}{|\ep|}$. Denote $I,D,S,N$ the number of insertions, deletions, substitutions and no-op respectively. Thus
            \[
                |x| = D + S+N \quad |y|  = I + S + N \quad \wgt(\ep) = I+D + S \quad |\ep| = I+D+S+N. 
            \]
            Hence we get
            \[
                |x|+|y|+\wgt(\ep) = 2\cdot(I+D+N+S) + S \ge 2\cdot |\ep|
            \]
            And thus
            \[
                \ned(x,y) =  \tfrac{\wgt(\ep)}{|\ep|} \ge  \tfrac{2\cdot \wgt(\ep)}{|x|+|y|+\wgt(\ep)} \ge \tfrac{2\cdot \ed(x,y)}{|x|+|y|+\ed(x,y)} = \ged(x,y).
            \]
\end{proof}

\lemrelations*
    \begin{proof}
    \begin{itemize}
        \item 

         We first prove the inequality $\nedAH(X,Y)\leq \nedH(X,Y)$.
         We have seen in Claim~\ref{claim: asymptotic relative distance well defined} that the sequence
        \[
            \beta_n = \sup_{\substack{x \in X\\ |x| \ge n}} \inf_{y \in Y} \ned(x,y). 
        \]
        is a non-increasing sequence, thus
        \[
            \nedar(X,Y) = \lim_{n \to \infty} \beta_n \le \beta_0. 
        \]
        Furthermore 
        \[
            \beta_0 = \sup_{\substack{x \in X\\ |x| \ge 0}} \inf_{y \in Y} \ned(x,y) = \sup_{\substack{x \in X}} \inf_{y \in Y} \ned(x,y).
        \]
        Without loss of generality we get 
        \begin{align*}
            \nedAH(X,Y) &= \max \left\{\nedar(X,Y),\nedar(Y,X)\right\}\\
                            &= \max \left\{\lim_{n \to \infty} \sup_{\substack{x \in X\\ |x| \ge n}} \inf_{y \in Y} \ned(x,y) ,\lim_{n \to \infty} \sup_{\substack{y \in Y\\ |y| \ge n}} \inf_{x \in X} \ned(y,x)\right\}\\
                           &\le \max \left\{\sup_{\substack{x \in X}} \inf_{y \in Y} \ned(x,y),\sup_{\substack{y \in Y}} \inf_{x \in X} \ned(y,x)\right\}\\
                           &= \nedH(X,Y).
        \end{align*}
        \item 
        Turning to the inequality 
        $\nedH(X,Y)\leq \edH(X,Y)$ we will partition into 2 cases:\\
        \textbf{Case 1} - If $\edH(X,Y) = 0$ then $X=Y$ and thus $\nedH(X,Y) = 0 \le \edH(X,Y)$.\\
        \textbf{Case 2} - Otherwise $\edH(X,Y) \ge 1$ but as $\ned$ is 1-bounded metric we get 
        \[
            \nedH(X,Y) \le 1 \le \edH(X,Y). 
        \]
        \item
        Last, for inequality 
        $\gedAH(X,Y)\leq \nedAH(X,Y)$ given $\e>0$ we 
        \begin{align*}
            \gedar(X,Y) &= \lim_{n \to \infty} \sup_{\substack{x \in X \\ |x| \ge n}} \inf_{y \in Y} \ged(x,y) \\
                        &= \lim_{n \to \infty} \sup_{\substack{x \in X \\ |x| \ge n}} \inf_{y \in Y} \frac{2\cdot \ed(x,y)}{|x|+|y|+ \ed(x,y)} \\
                        &= \sup_{\substack{(x_n)^\infty_{n=1}\subseteq X \\ |x_n| \ge n}}\lim_{n \to \infty} \inf_{y \in Y} \frac{2\cdot \ed(x_n,y)}{|x_n|+|y|+ \ed(x_n,y)} \\
                        &= \e + \lim_{n \to \infty} \inf_{y \in Y} \frac{2\cdot \ed(x_n,y)}{|x_n|+|y|+ \ed(x_n,y)} \\
                        &\le^{\eqref{claim: ged le ned}} \e + \lim_{n \to \infty} \inf_{y \in Y} \ned(x_n, y) \\
                        &\le \e + \sup_{\substack{(x_i)^\infty_{i=1}\subseteq X \\ |x_i| \ge n}}\lim_{n \to \infty} \inf_{y \in Y} \ned(x,y) \\
                        &= \e + \lim_{n \to \infty} \sup_{\substack{x \in X \\ |x| \ge n}} \inf_{y \in Y} \ned(x,y) \\
                        &= \e + \nedar(X,Y).
        \end{align*}
        Hence $\gedar(X,Y) \le \nedar(X,Y)$ and finally
        \begin{align*}
            \nedAH(X,Y) &= \max \left\{\nedar(X,Y),\nedar(Y,X)\right\}\\
            &\ge \max \left\{\gedar(X,Y),\gedar(Y,X)\right\} = \gedAH(X,Y).
        \end{align*}
            \end{itemize}
    \end{proof}

\claimessenceequivalences*

\begin{claim}\label{clm:ir-fails-triangle}
    $\ir^\varphi$ does not satisfy the triangle inequality.
\end{claim}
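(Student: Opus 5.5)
The plan is to give an explicit counterexample: three regular languages $X,Y,Z$ together with a concrete predicate $\predicate$, chosen so that $\ir^\predicate(X,Z)$ is strictly positive while $\ir^\predicate(X,Y)$ and $\ir^\predicate(Y,Z)$ are both $0$. Then $\ir^\predicate(X,Z)>\ir^\predicate(X,Y)+\ir^\predicate(Y,Z)$, which violates the triangle inequality. Throughout, $\ir^\predicate$ is read as the candidate distance, exactly as written.

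\textbf{Choice of $\predicate$ and languages.} I would take the predicate from~\cite{CeweiZTI13}, $\predicate(x,y)\iff \hamming(x,y)\le c_0$, with $c_0=0$. For $c_0=0$ this is just equality, so $\predicate(X,Y)=X\cap Y$. I would set $X=Z=\{a,b\}^*$ and $Y=a^*$.

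\textbf{The distance from $X$ to $Z$.} Here $\predicate(X,Z)=X\cup Z=\{a,b\}^*$. This language has $2^n$ words of length $n$, so $\ir(\predicate(X,Z))=\ir(X\cup Z)=\log 2>0$, and therefore $\ir^\predicate(X,Z)=1$.

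\textbf{The distances through $Y$.} Here $\predicate(X,Y)=X\cap Y=a^*$, which has exactly one word of each length. Its information rate is $\lim_n \tfrac{\log 1}{n}=0$. The denominator is $\ir(X\cup Y)=\ir(\{a,b\}^*)=\log 2\neq 0$, so $\ir^\predicate(X,Y)=0$. By the same computation, $\ir^\predicate(Y,Z)=0$. Hence $1>0+0$, and the triangle inequality fails.

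\textbf{Extension to other thresholds and the main obstacle.} To avoid relying on the degenerate threshold $c_0=0$, I would also check a general $c_0\in\bb{N}$. In that case, $\predicate(X,Y)$ consists of the words of $\{a,b\}^*$ containing at most $c_0$ letters $b$, together with $a^*$. The number of such words of length $n$ is $\sum_{i\le c_0}\binom{n}{i}=O(n^{c_0})$. Its logarithm is $O(\log n)$, so the information rate is still $0$, while $\ir(\{a,b\}^*)$ is unchanged. The only point I expect to need care is this polynomial-count estimate, together with confirming that all the limits in the definition of $\ir$ exist for these simple languages. Both are routine.
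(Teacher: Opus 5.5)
Your counterexample is correct. It uses the same template as the paper's proof: take $\varphi$ to be equality, so that $\ir^\varphi(X,Y)=\ir(X\cap Y)/\ir(X\cup Y)$, and make two of the three values vanish while the third is positive. Your choice of languages is what actually makes the argument work, and it fixes a real defect in the paper's version.

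The paper uses $X=a^*b^+c^*$, $Y=c^+$, $Z=b^+c^*d^*$. All three have only polynomially many words of each length, so every information rate in its computation is $0$. In particular, $\ir(X\cap Z)=\ir(b^+c^*)=\lim_n \frac{\log n}{n}=0$: linear growth gives rate $0$, not a positive rate. Likewise $\ir(X\cup Z)=\ir(X\cup Y)=\ir(Y\cup Z)=0$, so all three quotients have vanishing denominators, and $\ir^\varphi(X,Z)$ is $0/0$ rather than a positive number. In addition, $X\cap Y=\emptyset$ puts $\log 0$ into a numerator.

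Building on $\{a,b\}^*$ makes every denominator equal to $\log 2>0$. Since $Y=a^*\subseteq X=Z$, no intersection is empty, so all limits exist and $1>0+0$ is a genuine violation. Your extension to general Hamming thresholds $c_0$ is also right, since the Hamming distance from $x\in\{a,b\}^*$ to $a^*$ is the number of $b$'s in $x$; the paper does not treat this case.

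Taking $X=Z$ exploits the fact that $\ir^\varphi$ behaves like a similarity, equal to $1$ on identical languages of positive rate. That is legitimate for the statement as written, and the paper reads it literally too. If you want distinct languages, replacing $Z$ by $\{a,b\}^*\cup c^*$ gives the same three values $1,0,0$.
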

\begin{proof}
    Let $\varphi(x,y)$ be the predicate $x=y$. It follows that $$\ir^\varphi(X,Y)=\frac{\ir(X\cap Y)}{\ir(X\cup Y)}.$$
    Consider the languages
    \[X = a^* b^+ c^*, \qquad
    Y = c^+, \qquad
    Z = b^+ c^* d^*.\]
    We have $X\cap Y=\emptyset$ and $Y\cap Z=\emptyset$, hence $
    \ir^\varphi(X,Y)=0 $ and $ \ir^\varphi(Y,Z)=0$.
    However, $X\cap Z = b^+ c^*$, and the set of words of length $n$ in $X\cap Z$ is
    $(X\cap Z)^{(n)}=\{\,b^i c^j \mid i+j=n,\ i>0\,\},$
    which grows linearly with $n$.
    Consequently, $\ir^\varphi(X,Z)>0$, while
$\ir^\varphi(X,Y)+\ir^\varphi(Y,Z)=0$.
    Thus, the triangle inequality fails.
\end{proof}

\section{Proofs of \autoref{sec:computation}}

\subsection{Proofs of \autoref{sec:nedAH-hardness}}

\thmpspacehardness*
\begin{proof} 
        The proof is via a reduction from the universality problem of an NFA, which is a known $\pspace$-hard problem. Let $A$ be a fixed NFA. We construct an NFA $T$ that recognize $\big(\# \cdot \sema{A}\big)^*$ where $\# \notin \Sigma$. Let $\Gamma = \Sigma \cup \{\#\}$ we consider calculatation of $\nedar\big(\Gamma^*, \sema{T}\big)$.
        
        If $\sema{A} = \Sigma^*$ then $\sema{T}$ is the language of all words over $\Gamma$ that starts with $\#$. For every $w \in \Gamma^*$ there is at most $1$ edit operations to make it start with $\#$. Thus by Lemma~\ref{clm:bounded-edits}
        \[
            \nedar\big(\Gamma^*, \sema{T}\big) = 0.
        \]
        
        Otherwise $\sema{A} \neq \Sigma^*$. Then there exists $x \in \Sigma^* \setminus \sema{A}$. Consider the sequence  $(x_k)^\infty_{k=1}$ where $x_k = (\#x)^k$. This sequence of words with length tends to $\infty$ is used to prove that $\nedar\big(\Gamma^*, \sema{T}\big) >0$.
        
        Fixing $k$ use the partition of $x_k$ to $k$ copies of $\#x$, and denote the $i$'th copy as $\#x_i$. We claim that in consecutive pairs $\#x_i$ and $\#x_{i+1}$ there is at least 1 edit operation of cost $1$.
        
        Let $p$ be an edit path from $x_k$ to some word in $\sema{T}$. Looking at the edit operation $p$ is making in $\#x_i$ one of the following must occur:
        (i) If there is an edit path of cost 1 in $x$ inside $x_i$ or on $\#$ of $\#x_i$ then the claim holds.
        (ii) Otherwise as $x \notin \sema{A}$ there must be a delete/substitution operation on the $\#$ of $\#x_{i+1}$. Thus the claim holds. 
        
        A corollary from the claim is that every edit path from $x_k$ to $\sema{T}$ weighs at least $\frac{k}{2}$. Since the cost of a path is its weight divided by its length, to bound the cost of the path from below, we need to consider long optimal edit paths.

        Let $n$ be the size of the NFA $T$. Thus by this reasoning and \autoref{claim:bound-on-ned-edit-path} 
        for every $y \in \sema{T}$
        \begin{align*}
            \ned(x_k, y) &\ge \tfrac{k}{2 \cdot n \cdot (k \cdot (|x|+1) +1)}\\
                         & \ge  \tfrac{k}{2 \cdot n\cdot k \cdot (|x|+2)}\\
                         &=  \tfrac{1}{2 \cdot n \cdot (|x|+2)} 
        \end{align*}
        As every $y \in \sema{T}$ satisfies this inequality we get 
        \begin{align*}
            \nedar\big(\Delta^*,\sema{T}\big) &= \lim_{k \to \infty} \sup_{\substack{x \in X \\ |x| \ge n}} \inf_{y \in \sema{T}} \ned(x, y) \\
                        &\ge \lim_{k \to \infty} \inf_{y \in \sema{T}} \ned(x_k, y)\\
                        &\ge  \lim_{k \to \infty} \tfrac{1}{2 \cdot n \cdot (|x|+2)} \\
                        &= \tfrac{1}{2 \cdot n \cdot (|x|+2)} \\
                        &> 0.
        \end{align*}
\end{proof}

\claimboundonnededitpath*
\begin{proof}
It was shown in~\cite{FiliotMRST20,FismanGW23} that the quantity $\iined(X,Y)=\inf_{x\in X}\inf_{y\in Y}\ned(x,y)$ can be computed in the edit-distance graph
$\mathcal{G}_{\textsc{ed}}(A_1,A_2)$, where $A_1$ and $A_2$ are NFAs recognizing $X$ and $Y$, respectively.\footnote{This graph has $n_1{\cdot}n_2$ vertices, where $n_1$ and $n_2$ are the numbers of states of $A_1$ and $A_2$.
An edge $(\langle q_1,q_2\rangle,\langle q'_1,q'_2\rangle)$ labeled by $\swap{\sigma_1}{\sigma_2}\in\Gamma$ exists if 
$q'_1\in\delta(q_1,\sigma_1)$ and $q'_2\in\delta(q_2,\sigma_2)$. Its weight correspond to the weight of the edit operation  $\swap{\sigma_1}{\sigma_2}$. }
To compute $\inf_{y\in Y}\ned(x,y)$ we can thus take $X=\{x\}$ and let $A_1$ be the standard acyclic NFA with $|x|+1$ states recognizing $\{x\}$.
Hence $\mathcal{G}_{\textsc{ed}}(A_1,A_2)$ has $(|x|{+}1)\cdot n$ vertices, and every edge either advances the $A_1$-component or is an insertion edge of the form $\swap{\varepsilon}{\sigma}$ (a self-loop in the $A_1$-component).

We show that it suffices to consider target words of length at most $n(|x|+1)$.
Fix $y\in Y$ and an edit path $p$ from $x$ to $y$.
Let $I$ and $D$ denote the numbers of insertions and deletions in $p$, respectively.
Then $|y|=|x|+I-D$, and hence $I\ge |y|-|x|$.
In particular, if $|y|>n(|x|+1)$ then
\[
I > n(|x|+1)-|x| \ge (n-1)(|x|+1).
\]
Partition the insertions of $p$ into the $|x|+1$ insertion blocks occurring between successive letters of $x$ (including before the first and after the last letter).
By the pigeonhole principle, some insertion block has length at least $n$.
Along that block, the run of the $A_2$-component visits at least $n+1$ states, and therefore contains a nontrivial cycle $c$ whose projection on the $A_1$-component is $\varepsilon$ (that is, $c$ consists solely of insertion steps).

Removing $c$ from the $A_2$-run preserves acceptance and yields a new edit path $p'$ from $x$ to some $y'\in Y$ with $|y'|<|y|$.
Moreover, $\wgt(c)=|c|$ (since we assume uniform weights).
Thus,
\[
\ned(x,y)=\tfrac{\wgt(p)}{|p|}
=\tfrac{\wgt(p')+\wgt(c)}{|p'|+|c|}
=\tfrac{\wgt(p')+|c|}{|p'|+|c|}
\ge \tfrac{\wgt(p')}{|p'|}
\ge \ned(x,y').
\]
Iterating this trimming argument yields a word $y'\in Y$ with $|y'|\le n(|x|+1)$ attaining $\inf_{y\in Y}\ned(x,y)$.
Since there are only finitely many words of length at most $n(|x|+1)$, this infimum is a minimum, as claimed.
\end{proof}

\claimnedarsmalleracost*
   
\begin{proof}
    \begin{align*}
        \nedar(X,Y) &= \lim_{n \to \infty} \sup_{\substack{x \in X \\ |x| \ge n}} \inf_{y \in Y} \ned(x,y)\\
        &= \lim_{n \to \infty} \sup_{\substack{x \in X \\ |x| \ge n}} \inf_{y \in Y} \min_{p:x \leadsto y} \tfrac{\wgt(p)}{|p|} \\
        &\le \lim_{n \to \infty} \sup_{\substack{x \in X \\ |x| \ge n}} \inf_{y \in Y} \min_{p:x \leadsto y} \tfrac{\wgt(p)}{|x|} \\
        &\le \lim_{n \to \infty} \sup_{\substack{x \in X \\ |x| \ge n}} \inf_{y \in Y}  \tfrac{\min_{p:x \leadsto y} \wgt(p)}{|x|} \\        
        &= \lim_{n \to \infty} \sup_{\substack{x \in X \\ |x| \ge n}} \inf_{y \in Y} \tfrac{\ed(x,y)}{|x|} \\
        &= \acost(X,Y).
    \end{align*}
\end{proof}

\claimnedarlargeracostdividenY*
\begin{proof}
    Let $x$ be a fixed word in $X$. By \autoref{claim:bound-on-ned-edit-path} there exists $y_x{\in}Y$ that attains the $\inf_{y \in Y}\ned(x,y)$ and furthermore there is an edit path $p:x\leadsto y_x$ achieving $\ned(x,y_x)$ and satisfying $|p|\le d (|x|+1)$. Thus
    \begin{align*}
        \inf_{y \in Y} \tfrac{\ed(x,y)}{|x|} &\le \tfrac{\ed(x,y_x)}{|x|}\\
        &\le \tfrac{\wgt(p)}{|x|} \\
        &=  \tfrac{\ned(x,y_x) \cdot|p|}{|x|}\\
        &= \ned(x,y_x) \cdot \tfrac{|p|}{|x|}\\
        &\le  \inf_{y \in Y} \ned(x,y) \cdot d \cdot \left( 1 + \tfrac{1}{|x|}\right).
    \end{align*}
    Hence we get the following
    \begin{align*}
        \sup_{\substack{x \in X \\ |x| \ge n}} \inf_{y \in Y} \tfrac{\ed(x,y)}{|x|} & \le  \sup_{\substack{x \in X \\ |x| \ge n}} \inf_{y \in Y} \ned(x,y) \cdot d \cdot \left( 1 + \tfrac{1}{|x|}\right) \\
        & \le d \cdot \left( 1 + \tfrac{1}{n}\right) \sup_{\substack{x \in X \\ |x| \ge n}} \inf_{y \in Y} \ned(x,y).  
    \end{align*}
    Taking the limit on $n$ on both sides we get
    \begin{align*}
        \acost(X,Y)&=\lim_{n\to \infty} \sup_{\substack{x \in X \\ |x| \ge n}} \inf_{y \in Y} \tfrac{\ed(x,y)}{|x|} \\
        & \le d \cdot \lim_{n\to \infty} \sup_{\substack{x \in X \\ |x| \ge n}} \inf_{y \in Y} \ned(x,y) \\ &
        = d \cdot \nedar(X,Y).
    \end{align*}
\end{proof}

\lemmaahrsandwitch*
\begin{proof}
By \autoref{claim: nedar smaller acost} and \autoref{claim: nedar larger acost divide nY}, the value $\acost(X,Y)$ provides both an upper bound and, up to a factor of $d$, a lower bound on $\nedar(X,Y)$.
Consequently, it suffices to compute $\acost(X,Y)$ in order to obtain a constant-factor approximation of $\nedar(X,Y)$.

The lemma now follows directly from the fact that Benedikt et al.~\cite{BenediktPR14} provide a \coNExp-time algorithm for computing $\acost$ between regular languages.
\end{proof}

\subsection{Proofs of \autoref{sec:nedAH-bounded-CFL}}
\thmLP*

We prove each direction separately, first direction in \autoref{sec:LPA} and the second in \autoref{sec:LPB}

        \subsubsection{From edit-path to LP solution}\label{sec:LPA}
                \begin{claim}\label{claim: Approximating every path by LP}
                    There exists constants $C,M>0$ depending only on the finite family of graphs $\{\cyclicedgraph{i,j}\}, \Pi$ and $U,V$, such that for any $x = w^{\setX}_{\vec{\alpha}}$ with $\vec{a} = \dir(\vec{\alpha})$,  and $y = w^{\setY}_{\vec{\beta}} \in Y$ and a edit path $\ep:x \leadsto y$ there exists $\vec{a'} \in D_{\setX}$ that satisfy
                    \[
                        \tfrac{\wgt(\ep)}{|\ep|} + \tfrac{C}{|x|} \ge  \min_{\pi \in \Pi}\OPT_{\vec{a'}, \pi} = F(\vec{a'}) 
                    \]
                    and {$\|\vec{a'}- \vec{a}\|_{\infty} \le \tfrac{M}{|x|}$}.
                    \end{claim}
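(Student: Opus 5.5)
The plan is to turn the concrete edit path $\ep$ into a feasible point of some $\LP_{\vec{a'},\pi}$ by counting edge usages, and to charge the boundary effects to the constants $C$ and $M$.

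\textbf{Setting up the counts.} Let $\pi=\pi(\ep)\in\Pi$ be the interleaving lit by $\ep$. Inside each lit $(i,j)$-block, $\ep$ restricts to a contiguous segment. This segment is an edit path between a factor $\wSuff{x_i}x_i^{*}\wPref{x_i}$ and a factor of the same form for $y_j$, so by the bijection described earlier it is a walk $w_{i,j}$ in $\cyclicedgraph{i,j}$. Let $N^{i,j}_e$ be the number of times $w_{i,j}$ uses the edge $e$, and set $f^{i,j}_e=N^{i,j}_e/|\ep|$.

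With these values:
\begin{itemize}
\item Constraint~\ref{cons:mass} holds exactly.
\item The total weight is $\sum f^{i,j}_e c(e)=\wgt(\ep)/|\ep|$.
\item Constraint~\ref{cons:src-cons} holds exactly with $\lambda=|x|/|\ep|$, which lies in $(0,1]$. This is because the $x_i$-consuming edges across all lit $(i,j)$-blocks count exactly $\alpha_i|x_i|$ letters, and $a_i=\alpha_i/|x|$.
\item For Constraint~\ref{cons:tgt-cons}, we have $y=w^{\setY}_{\vec{\beta}}$ with $\vec\beta=\sum_s t_s\vec{v}_s$ (recall $\vec v_0=\vec 0$). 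Setting $\tau_s=t_s/|\ep|$ gives $b_j=\beta_j/|\ep|$, and the $y_j$-consuming edges count exactly $\beta_j|y_j|$ letters.
\end{itemize}

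\textbf{Repairing flow conservation.} Constraint~\ref{cons:flw-cons} is the only one that can fail. Each $w_{i,j}$ is a walk, not a closed walk, so conservation is violated only at its two endpoints, by one unit each. I would close each walk by appending a shortest return path in $\cyclicedgraph{i,j}$ from its end vertex back to its start vertex. Such a path has length at most $D=\max_{i,j}\diam(\cyclicedgraph{i,j})$, a constant that depends only on the graph family. At most $|\pi|\le n+m-1$ blocks are lit, so the total added edge count is $K\le (n+m-1)D$.

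After appending, renormalize everything by $|\ep|+K$ instead of $|\ep|$. The new point has the following properties.
\begin{itemize}
\item Mass is still $1$.
\item The objective is at most $(\wgt(\ep)+K)/(|\ep|+K)\le \wgt(\ep)/|\ep|+K/|\ep|$, which is at most $\wgt(\ep)/|\ep|+K/|x|$ since $|\ep|\ge|x|$.
\item Appending closing edges also changes the consumed letters of $x_i$ and $y_j$ by at most $K\max_j|y_j|$. Constraint~\ref{cons:tgt-cons} therefore fails, while Constraint~\ref{cons:src-cons} can be absorbed by a perturbed direction.
\end{itemize}

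\textbf{Restoring the target constraint and defining $\vec{a'}$.} This is the main obstacle. The consumed counts are $\alpha_i|x_i|+\delta_i$ with $|\delta_i|\le K'$. If every closing walk itself consumed whole copies of $x_i$ and $y_j$, then the new counts would correspond to $\vec\alpha'=\vec\alpha+\vec\delta'$ and $\vec\beta'$. To arrange this, I would instead close each walk by a path back to its start that wraps through an integral number of full cycles. Such a path exists with bounded length $\le D'$, because going from $(l,k)$ around to $(l,k)$ plus an integer offset is always possible in the torus. With this choice the $y$-side counts change by integer multiples of $|y_j|$.

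The $y$-side still needs a correction: the resulting $\vec\beta'$ must lie in the linear set of $Y$. I would handle this by adding a bounded number of extra copies of each generator $\vec v_s$ via insertion-only cycles, which can be realized inside a lit $(i,j)$ block. Note that every $j$ has some lit $(i,j)$, since $\pi$ is monotone. Alternatively, since the LP's $\tau_s$ are real, it suffices that $b'_j\in\Cone(V)$; this holds if the $y$-side change is just a nonnegative combination of the $\vec v_s$.

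On the $x$-side, Constraint~\ref{cons:src-cons} only requires that $\vec{a'}=\dir(\vec\alpha')$ with a new $\lambda'$. Then $\vec{a'}$ needs to lie in $D_{\setX}$. Since $D_{\setX}$ is a closed polytope and $\vec\alpha'$ differs from $\vec\alpha\in S_{\setX}$ by a bounded vector, I would take $\vec{a'}$ to be the nearest point of $D_{\setX}$ to $\dir(\vec\alpha')$. Alternatively, and more simply, choose the closing walks so that the $x$-side change is a multiple of some generator $\vec u_s$, keeping $\vec\alpha'\in S_{\setX}$.

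Either way, $\|\vec{a'}-\vec a\|_\infty\le M/|x|$, because perturbing the counts by $O(1)$ moves the normalized direction by $O(1/|x|)$. The objective picks up an extra additive $O(1/|x|)$, which is absorbed into $C$.
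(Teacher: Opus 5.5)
There is a genuine gap. Your counting set-up, and closing each block segment into a closed walk of $\cyclicedgraph{i,j}$, match the paper. The step you yourself flag as the main obstacle is not actually overcome.

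A minor point first: no special choice of return path is needed, and none is generally available. Every closed walk on the torus $\cyclicedgraph{i,j}$ consumes a multiple of $|x_i|$ and of $|y_j|$ letters. So after closing, any return paths give totals $|x_i|(\alpha_i+h^x_i)$ and $|y_j|(\beta_j+h^y_j)$, where $\vec{h}^x,\vec{h}^y$ are bounded nonnegative integer vectors. Conversely, a single return path cannot be forced to consume whole copies when the open segment does not.

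The real problem is that $\vec{h}^y$ is dictated by where the segments start and end, and nothing places $\vec{\beta}+\vec{h}^y$ in $\Cone(V)$. Adding copies of generators cannot fix this. For instance, if $V$ has the single generator $\vec{v}_1=(2,1)$ and $\vec{h}^y=(1,0)$, then $\vec{\beta}+\vec{h}^y+c\,\vec{v}_1$ lies off the ray $\Cone(V)$ for every $c\ge 0$, so Constraint~\ref{cons:tgt-cons} fails for every choice of $\tau$. Your fallback, that it suffices for the change to be a nonnegative combination of the $\vec{v}_s$, presupposes exactly what must be arranged. Likewise, the closing walks cannot be chosen so that the $x$-side change is a multiple of some generator, because that change is forced by the geometry.

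The missing idea is to pad coordinatewise rather than generator-wise. Since $0\le h^x_i,h^y_j\le C_0$ for a constant $C_0$, add $C_0\sum_q(\vec{u}_q)_i-h^x_i\ge 0$ deletion-only $x_i$-cycles and $C_0\sum_s(\vec{v}_s)_j-h^y_j\ge 0$ insertion-only $y_j$-cycles, each inside some lit block of the corresponding row or column. Nonnegativity holds once every coordinate lies in the support of some generator, which is harmless.

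The totals then become exactly $\vec{\alpha}+C_0\sum_q\vec{u}_q\in S_{\setX}$ and $\vec{\beta}+C_0\sum_s\vec{v}_s\in S_{\setY}$, at a cost of $O(1)$ in length and weight. This yields $\vec{a'}=\dir(\vec{\alpha}+C_0\sum_q\vec{u}_q)\in D_{\setX}$ directly, a genuinely feasible point of $\LP_{\vec{a'},\pi}$, and $\|\vec{a'}-\vec{a}\|_\infty=O(1/|x|)$. This is what the paper does.

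Your other $x$-side option, projecting $\dir(\vec{\alpha}+\vec{h}^x)$ onto $D_{\setX}$, does not close the argument as written. Your flow satisfies Constraint~\ref{cons:src-cons} for the unprojected direction, not for its projection. You would therefore need a quantitative (Lipschitz) bound on $\vec{a}\mapsto\OPT_{\vec{a},\pi}$ to justify that ``the objective picks up an extra $O(1/|x|)$'', and you do not supply one. In any case, this device does nothing for the target side, where the LP has no free parameter to perturb.
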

                    \begin{proof}
                        We start by fixing the constant $C_0 \defeq n \cdot \max_{1 \le i \le n} |x_i| + m \cdot \max_{1 \le j \le m} |y_j|$ that only depends on the family of graphs $\{\cyclicedgraph{i,j}\}$.
                        Let {$x = w^{\setX}_\vec{\alpha}$, $y = w^{\setY}_{\vec{\beta}}$,} $p$ be an edit path from $x$ to $y$ and $\pi = \pi(p)$ be the induced interleaving. Note that $p$ can be partitioned into $p_{i,j}$ continuous walks in $\cyclicedgraph{i,j}$ for all $(i,j) \in \pi$. We denote by $N^{i,j}_e$ the number of times edge $e$ is traversed in $p_{i,j}$.

                        In order to address the flow conservation constraint in each rectangle in $\pi$, we notice that for each $p_{i,j}$ the only vertices in $\cyclicedgraph{i,j}$ that might not satisfy flow conservation are the start and end vertices of $p_{i,j}$. Thus we can add to each $p_{i,j}$ a path in $\cyclicedgraph{i,j}$ that connects the end vertex to the start vertex of $p_{i,j}$, creating $\widehat p_{i,j}$ that is a cycle in $\cyclicedgraph{i,j}$  and also satisfies
  
                        \[
                            |\widehat p_{i,j}| \le |p_{i,j}| + \diam(\cyclicedgraph{i,j}) = |p_{i,j}| + |x_i|+|y_j|
                        \]
                        where $\diam(\cyclicedgraph{i,j})$ is the diameter of the graph $\cyclicedgraph{i,j}$.

                      \noindent
                        Let $\widehat p$ be the concatenation of $\widehat p _{i,j}$, then
                        \[
                            |\widehat p| \le |p| + n\cdot \max_{i}|x_i| + m \cdot \max_j|y_j| = |p| + C_0 
                        \]
                        and
                        \[
                            \wgt(\widehat p) \le \wgt(p) + n\cdot \max_{i}|x_i| + m \cdot \max_j|y_j| = \wgt(p) + C_0. 
                        \]
                        As before we define $\widehat N^{i,j}_e$ as the number of times edge $e$ is traversed in $\widehat p_{i,j}$. Because we started from {$x  = w^{\setX}_\vec{\alpha} \in X$ and $y = w^{\setY}_\vec{\beta} \in Y$}, we know that for every $i$ (and resp. every $j$)
                        \[
                            |x_i|\cdot  \vec{\alpha_i} = \sum_{\substack{(i,j) \in \pi \\ e \in E(\cyclicedgraph{i,j})}} N^{i,j}_e \cdot \indicator_{x_i}(e) 
                            \qquad 
                            |y_j|\cdot  \vec{\beta_j} = \sum_{\substack{(i,j) \in \pi \\ e \in E(\cyclicedgraph{i,j})}} N^{i,j}_e \cdot \indicator_{y_j}(e).
                        \]
                        And summing all $i$ (resp. all $j$) we get
                        \begin{align*}
                             \sum_{\substack{(i,j) \in \pi \\ e \in E(\cyclicedgraph{i,j})}} \widehat N^{i,j}_e \cdot \indicator_{x_i}(e) = |x_i|\cdot   \vec{\alpha_i} + C_{x_i}
                            \\
                            \sum_{\substack{(i,j) \in \pi \\ e \in E(\cyclicedgraph{i,j})}} \widehat N^{i,j}_e \cdot \indicator_{y_j}(e) = |y_j|\cdot \vec{\beta_j} + C_{y_j}
                        \end{align*}
                        where $C_{x_i}$ is the number of edges that consume symbol from origin that were added in the cycle-closing path in some $\cyclicedgraph{i,j}$ for all $(i,j) \in \pi$, and the same for $C_{y_j}$ about target consuming symbols. Hence $C_{x_i}, C_{y_j} \le C_0$.

                        Note that $\widehat p_{i,j}$ is a cycle in $\cyclicedgraph{i,j}$ for every $1 \le i \le n$ and $1\le j \le m$ such that $(i,j) \in \pi$. Thus its projection on symbols spelled from $x_i$ and  $y_j$ is divisible by $|x_i|$ and $|y_j|$ respectively. Hence for every $i$ and $j$ there exists some $T_{y_j}, T_{x_i}\in\mathbb{N}$ that satisfy
                        \[
                        \begin{split}
                        T_{x_i}\cdot |x_i| = \sum_{\substack{(i,j) \in \pi \\ e \in E(\cyclicedgraph{i,j})}} \widehat N^{i,j}_e \cdot \indicator_{x_i}(e) = |x_i|\cdot \vec{\alpha_i} + C_{x_i}
                        \\
                            T_{y_j}\cdot |y_j| = \sum_{\substack{(i,j) \in \pi \\ e \in E(\cyclicedgraph{i,j})}} \widehat N^{i,j}_e \cdot \indicator_{y_j}(e) = |y_j|\cdot \vec{\beta_j} + C_{y_j}.
                        \end{split}
                        \]
                        Thus, $C_{y_j}$ is divisible by $|y_j|$ and $C_{x_i}$ is divisible by $|x_i|$. Let $h^y_j \cdot |y_j| = C_{y_j}$ and $h^x_i \cdot |x_i| = C_{x_i}$ and consider the vectors

             \[
                                         \vec{h^x} = 
                            \begin{pmatrix}
                                \frac{C_{x_1}}{|x_1|}\\
                                \vdots \\
                                \frac{C_{y_n}}{|x_n|}
                            \end{pmatrix}  
                            \qquad
                            \vec{h^y} =
                            \begin{pmatrix}
                                \frac{C_{y_1}}{|y_1|}\\
                                \vdots \\
                                \frac{C_{y_m}}{|y_m|}
                            \end{pmatrix}
                        \]
                         
                        Then {$\| \vec{h^x}\|_{\infty},\| \vec{h^y}\|_{\infty} \le C_0$ and}
                        \[
                        \begin{split}
                        \sum_{\substack{(i,j) \in \pi \\ e \in E(\cyclicedgraph{i,j})}} \widehat N^{i,j}_e \cdot \indicator_{x_i}(e) &= |x_i|\cdot \vec{\alpha_i} + C_{x_i} = |x_i| \cdot \left(\vec{\alpha_i} + h^x_i\right)\\
                        \sum_{\substack{(i,j) \in \pi \\ e \in E(\cyclicedgraph{i,j})}} \widehat N^{i,j}_e \cdot \indicator_{y_j}(e) &= |y_j|\cdot \vec{\beta_j} + C_{y_j} = |y_j|\cdot \left(\vec{\beta_j} + h^y_j \right). 
                        \end{split}
                        \]
                        Let 
                        \[
                        \vec{\delta^x} \defeq - \vec{h^x} +  C_0 \cdot \sum^{k}_{q=1} \vecInd{u}{q} \in \bb{R}^n \qquad 
                        \vec{\delta^y} \defeq - \vec{h^y} +  C_0 \cdot \sum^{r}_{s=1} \vecInd{v}{s} \in \bb{R}^m. 
                        \]
                        Then for every $i,j$ we know $\vec{\delta^x_i}, \vec{\delta^y_j} \ge 0$, and we finally get new vectors in $S_\setX$ and $S_\setY$, respectively: 
                        \[
                        \begin{split}
                            \vec{\alpha} + \vec{h^x} + \vec{\delta^x} &= \vec{\alpha} + \vec{h^x} - \vec{h^x} +  C_0 \cdot \sum^{k}_{q=1} \vecInd{u}{q} = \vec{\alpha} +  C_0 \cdot \sum^{k}_{q=1} \vecInd{u}{q} \in S_X\\
                            \vec{\beta} + \vec{h^y} + \vec{\delta^y} &= \vec{\beta} + \vec{h^y} - \vec{h^y} +  C_0 \cdot \sum^{r}_{s=1} \vecInd{v}{s} = \vec{\beta} + C_0 \cdot \sum^{r}_{s=1} \vecInd{v}{s} \in S_Y.
                        \end{split}
                        \]
                        Hence by adding $\vec{\delta^x_i}$ independent cycles of $x_i$ in some $\cyclicedgraph{i,j}$ for each $1\le i \le n$, and adding $\vec{\delta^y_i}$ independent cycles of $y_j$ in some $\cyclicedgraph{i,j}$ for each $1\le j \le l$ we get new paths $\widetilde p_{i,j}$ and their concatenation $\widetilde p$ that satisfy
                        \[
                            |\widetilde p| \le |\widehat p| + |\pi|\cdot C_0^2\cdot  \left( \max_{s,j} \vecIndEnt{v}{s}{j} + \max_{q,i} \vecIndEnt{u}{q}{i} \right) \le |p| + C_1
                        \]
                        for some constant $C_1$ depending only on $\Pi$, the generators and $C_0$. The same can be said about the weight of $\widetilde p $ and thus
                        \[
                            \wgt(\widetilde p) \le \wgt(p) + C_1.
                        \]
                        Moreover by defining $\widetilde N^{i,j}_e$ as the number of times edge $e$ is traversed in $\widetilde p_{i,j}$ we get 
                        \begin{align}\label{eq:tilde-Nij-x}
                        \begin{pmatrix}
                            \sum_{\substack{(1,j) \in \pi \\ e \in E(\cyclicedgraph{1,j})}} \widetilde N^{1,j}_e \cdot \indicator_{x_1}(e) \\
                            \vdots \\
                            \sum_{\substack{(n,j) \in \pi \\ e \in E(\cyclicedgraph{n,j})}} \widetilde N^{n,j}_e \cdot \indicator_{x_n}(e)
                        \end{pmatrix}
                        =
                           \begin{pmatrix}
                            |x_1|\cdot \left( \vec{\alpha}_1 + C_0 \cdot \sum^{k}_{q=1} \vecIndEnt{u}{q}{1} \right) \\
                            \vdots \\
                            |x_n|\cdot \left( \vec{\alpha}_n + C_0 \cdot \sum^{k}_{q=1} \vecIndEnt{u}{q}{n} \right)
                        \end{pmatrix} 
                        \end{align}

                        As we "corrected" the cycle closing occurred in the transition to $\widehat{p}_{i,j}$ by adding exactly the amount of cycles of $x_i$ in one of the graphs that ensures that the number of symbols read in $\widetilde{p}$ from $x_i$ is $\left( \vec{\alpha}_1 + C_0 \cdot \sum^{k}_{q=1} \vecIndEnt{u}{q}{1} \right)$ (number of cycles of $x_i$) times the length of $x_i$. And the same for $y_j$ thus
                        
                          \begin{align}\label{eq:tilde-Nij-y}
                        \begin{pmatrix}
                            \sum_{\substack{(i,1) \in \pi \\ e \in E(\cyclicedgraph{i,1})}} \widetilde N^{i,1}_e \cdot \indicator_{y_1}(e) \\
                            \vdots \\
                            \sum_{\substack{(i,m) \in \pi \\ e \in E(\cyclicedgraph{i,m})}} \widetilde N^{i,m}_e \cdot \indicator_{y_m}(e)
                        \end{pmatrix}
                        =
                           \begin{pmatrix}
                            |y_1|\cdot \left( \vec{\beta}_1 + C_0 \cdot \sum^{r}_{s=1} \vecIndEnt{v}{s}{1} \right) \\
                            \vdots \\
                            |y_m|\cdot \left( \vec{\beta}_m + C_0 \cdot \sum^{r}_{s=1} \vecIndEnt{v}{s}{m} \right)
                        \end{pmatrix} 
                        \end{align}                      
                        \noindent
                        We turn to define $\vec{a'}$ as follows:
                        \begin{equation}\label{eq:apri}
                            \vec{a'_i} = \tfrac{\left(\vec{\alpha} + C_0 \cdot \sum^{k}_{z=1} \vecInd{u}{z} \right)_i}{\sum^n_{d=1} \left(|x_d|\cdot \left( \vec{\alpha} + C_0 \cdot \sum^{k}_{\mu=1} \vecInd{u}{\mu} \right)_d\right)}
                        \end{equation}

                        Checking that $\vec{a'} \in D_{\setX}$:
                        \begin{align*}
                            \sum^{n}_{i=1}|x_i| \cdot \vec{a'_i} &=  \sum^{n}_{i=1}|x_i| \cdot \tfrac{\left(\vec{\alpha} + C_0 \cdot \sum^{k}_{z=1} \vecInd{u}{z} \right)_i}{\sum^n_{d=1} \left(|x_d|\cdot \left( \vec{\alpha} + C_0 \cdot \sum^{k}_{\mu=1} \vecInd{u}{\mu} \right)_d\right)}  \\
                            &= \tfrac{ \sum^{n}_{i=1}|x_i| \cdot \left(\vec{\alpha} + C_0 \cdot \sum^{k}_{z=1} \vecInd{u}{z} \right)_i}{\sum^n_{d=1} \left(|x_d|\cdot \left( \vec{\alpha} + C_0 \cdot \sum^{k}_{\mu=1} \vecInd{u}{\mu} \right)_d\right)}  \\
                            &= 1.
                        \end{align*}
                        Thus $\vec{a'} \in \Delta_X$ and as $\vec{a'}$ satisfies
                        \[
                            \vec{a'} = \tfrac{1}{\sum^n_{d=1} \left(\vec{\alpha} + C_0 \cdot \sum^{k}_{\mu=1} \vecInd{u}{\mu}\right)_d} \cdot \left( \vec{\alpha} + C_0 \cdot \sum^k_{z=1}\vecInd{u}{z} \right)
                        \]
                        and $\vec{\alpha} + C_0 \cdot \sum^{k}_{z=1} \vecInd{u}{z} \in S_X$, we get that $\vec{a'} \in \Cone(U)$.\\
                        Thus $\vec{a'} \in \Delta_X \cap \Cone(U) = D_X$.

                        We finally define the assignment for the variables of $\LP_{\vec{a}, \pi}$:
                        \begin{align}\label{eq:fij}
                            f^{i,j}_e\gets \tfrac{\widetilde N^{i,j}_e}{|\widetilde p|} \quad\text{for each }(i,j)\in\pi\text{ and }e\in E(\cyclicedgraph{i,j})
                        \end{align}
                        And for $\tau$ we denote by $\boldsymbol{\gamma}$ 
                        the combination of $V$ that yields $\vec{\beta} + C_0 \cdot \sum^{r}_{s=1} \vecInd{v}{s} \in S_Y$, and define
                        \begin{equation}\label{eq:gamma}
                            \tau_{s}\gets \tfrac{\gamma_s}{|\widetilde p|} \ \ (1\le s \le r)
                        \end{equation}
                        And for $\lambda$ we define
                        \begin{equation}\label{eq:lambda}
                            \lambda \gets \tfrac{\sum^n_{d=1} \left(\vec{\alpha} + C_0 \cdot \sum^{k}_{\mu=1} \vecInd{u}{\mu}\right)_d}{|\widetilde p|}
                        \end{equation}                        \noindent
                         We check all the constraints hold:
                         \begin{enumerate}
                            \item \textbf{{Normalization}:}\quad $\displaystyle \sum_{\substack{(i,j)\in\pi\\e\in E(\cyclicedgraph{i,j})}} f^{i,j}_e = \tfrac{1}{|\widetilde p|}\sum_{\substack{(i,j)\in\pi\\e\in E(\cyclicedgraph{i,j})}} \widetilde N^{i,j}_e = 1$.
                            \item \textbf{Flow conservation in each block:} In the transition to $\widehat p$ we made every walk $\widehat p_{i,j}$ in each $\cyclicedgraph{i,j}$ a cycle, thus satisfying flow conservation. In the transition to $\widetilde p_{i,j}$ we only added cycles so flow conservation satisfaction remains.
                            \item \textbf{Source-consumption constraints:}\quad for each $i\in[n]$,
                            \begin{align*}
                                |x_i| \cdot \vec{a'_i} \cdot \lambda &=^{\eqref{eq:apri},\eqref{eq:lambda}} |x_i| \cdot \tfrac{\left( \vec{\alpha} + C_0 \cdot \sum^k_{z=1}\vecInd{u}{z} \right)_i}{\sum^n_{d=1} \left(\vec{\alpha} + C_0 \cdot \sum^{k}_{\mu=1} \vecInd{u}{\mu}\right)_d} \cdot  \\ &\phantom{=^{\eqref{eq:apri},\eqref{eq:lambda}}|x_i|}
                                \cdot
                                \tfrac{\sum^n_{d=1} \left(\vec{\alpha} + C_0 \cdot \sum^{k}_{\mu=1} \vecInd{u}{\mu}\right)_d}{|\widetilde p|} \\
                                &= \tfrac{|x_i|\cdot \left( \vec{\alpha} + C_0 \cdot \sum^k_{z=1}\vecInd{u}{z} \right)_i}{\len({\widetilde p})}\\
                                &=^{\eqref{eq:tilde-Nij-x}} \tfrac{1}{|{\widetilde p}|} \cdot \sum_{\substack{(i,j) \in \pi \\ e \in E(\cyclicedgraph{i,j})}} \widetilde N^{i,j}_e \cdot \indicator_{x_i}(e) \\
                                &=^{\eqref{eq:fij}} \sum_{\substack{(i,j) \in \pi \\ e \in E(\cyclicedgraph{i,j})}} f^{i,j}_e \cdot \indicator_{x_i}(e)
                            \end{align*}
        
                            \item \textbf{Target-consumption constraints:}\quad for each $j\in[m]$,
                            \begin{align*}
                                |y_j| \cdot b_j &=^{p.\pageref{eq:bj}} |y_j|\cdot  \sum^r_{s=1} \tau_s \cdot \vecIndEnt{v}{s}{j} \\
                                    &=^{\eqref{eq:gamma}} 
                                    \tfrac{|y_j|}{|\widetilde p|} \cdot \sum^r_{s=1} \gamma_{s} \cdot \vecIndEnt{v}{s}{j} \\
                                    &= \tfrac{|y_j|}{|\widetilde p|} \cdot \left(\vec{\beta} + C_0 \cdot \sum^{r}_{s=1} \vecInd{v}{s} \right)_j \\
                                    &=^{\eqref{eq:tilde-Nij-y}} \tfrac{1}{|\widetilde p|} \cdot \sum_{\substack{(i,j) \in \pi \\ e \in E(\cyclicedgraph{i,j})}} \widetilde N^{i,j}_e \cdot \indicator_{y_j}(e) \\
                                    &=^{\eqref{eq:fij}} \sum_{\substack{(i,j) \in \pi \\ e \in E(\cyclicedgraph{i,j})}} f^{i,j}_e \cdot \indicator_{y_j}(e).
                            \end{align*}
                        \end{enumerate}
                        Hence all constraints hold and the assignment is a feasible solution for $\LP_{a, \pi}$. Therefore   
                        \begin{align*}
                            \min_{\pi \in \Pi}\OPT_{\vec{a'},\pi} &\le \OPT_{\vec{a'},\pi}\\
                            &\le \sum_{\substack{(i,j) \in \pi \\ e \in E(\cyclicedgraph{i,j})}} f^{i,j}_e\,c(e)\\
                            &= \tfrac{1}{|\widetilde p|} \sum_{\substack{(i,j) \in \pi \\ e \in E(\cyclicedgraph{i,j})}} \widetilde N^{i,j}_e\,c(e)\\
                            &= \tfrac{\wgt|\widetilde p|}{|\widetilde p|}\\
                            &\le \tfrac{\wgt(p)}{|p|} + \tfrac{C_1}{|p|}\\
                            &\le \tfrac{\wgt(p)}{|p|} + \tfrac{C_1}{|x|}
                        \end{align*}

                    As for the distance between $\vec{a}$ and $\vec{a'}$ 
                    \begin{align*}
                        |a_i - a'_i| &\ =^{(i)}\ \left| \tfrac{\vec{\alpha_i}}{|x|} - \tfrac{\left(\vec{\alpha} + C_0 \cdot \sum^{k}_{z=1} \vecInd{u}{z} \right)_i}{\sum^n_{d=1} \left(|x_d|\cdot \left( \vec{\alpha} + C_0 \cdot \sum^{k}_{\mu=1} \vecInd{u}{\mu} \right)_d\right)}\right| 
                        \\
                            &\ \le^{(ii)}\ \vec{\alpha_i} \cdot \left| \tfrac{1}{|x|} - \tfrac{1}{\sum^n_{d=1} |x_d|\cdot\vec{\alpha}_d \ + \  \sum^n_{d=1} |x_d|\cdot \left(C_0 \cdot \sum^{k}_{\mu=1} \vecInd{u}{\mu} \right)_d} \right| + \tfrac{\left(C_0 \cdot \sum^{k}_{z=1} \vecInd{u}{z} \right)_i}{|x|} 
                            \\
                            &\ \le^{(iii)}\  \vec{\alpha_i} \cdot \left| \tfrac{1}{|x|} - \tfrac{1}{|x| + \sum^n_{d=1} |x_d|\cdot \left(C_0 \cdot \sum^{k}_{\mu=1} \vecInd{u}{\mu} \right)_d} \right| + \tfrac{\left(C_0 \cdot \sum^{k}_{z=1} \vecInd{u}{z} \right)_i}{|x|} 
                            \\
                            &\ \le^{(iv)}\ \vec{\alpha_i} \cdot \tfrac{\sum^n_{d=1} |x_d|\cdot \left(C_0 \cdot \sum^{k}_{\mu=1} \vecInd{u}{\mu} \right)_d}{|x|^2} + \tfrac{\left(C_0 \cdot \sum^{k}_{z=1} \vecInd{u}{z} \right)_i}{|x|} 
                            \\
                            &\ \le^{(v)}\ \tfrac{\sum^n_{d=1} |x_d|\cdot \left(C_0 \cdot \sum^{k}_{\mu=1} \vecInd{u}{\mu} \right)_d}{|x|} + \tfrac{\left(C_0 \cdot \sum^{k}_{z=1} \vecInd{u}{z} \right)_i}{|x|} 
                            \\
                            &\ \le^{(vi)}\ \tfrac{M}{|x|}
                    \end{align*}
                {where (i) holds by definition of $a$ and $a'$, 
                (ii) by the triangle inequality, (iii) by the definition of $x$, (iv) since
                \(
                \left(|x| \le |x| + \sum^n_{d=1} |x_d|\cdot \left(C_0 \cdot \sum^{k}_{\mu=1} \vecInd{u}{\mu} \right)_d\right)
                \),
                (v) since 
                \(
                \left(\vec{\alpha_i} \le |x| \right)
                \)
                and (vi) by taking $M$ to be} constant depending only on $C_0$, the family  of graphs $\{\cyclicedgraph{i,j}\}$ and $U$.
                    \end{proof}

            \begin{lemma}
            $\nedar(X,Y) \ge \sup_{\vec{a} \in D_{\setX}} \min_{\pi \in \Pi} \OPT_{\vec{a},\pi}$
            \end{lemma}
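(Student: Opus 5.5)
The plan is to fix an arbitrary direction $\vec{a}\in D_{\setX}$ and show that $F(\vec{a})\le\nedar(X,Y)$; taking the supremum over $\vec{a}$ then gives the lemma. The tools are \autoref{claim: Approximating every path by LP}, which turns any concrete edit path into a nearly feasible LP solution at a nearby direction, and \autoref{claim:LP-continuous}, which lets us pass to the limit in the direction. (If $D_{\setX}$ is empty the supremum is trivial, so we assume it is not.)

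\textbf{Step 1: a sequence realizing $\vec{a}$.} Since $D_{\setX}$ is the closure of $\{\dir(\vec{n})\mid\vec{n}\in S_{\setX}\}$, for each $k$ we can pick $\vec{n}_k\in S_{\setX}$ with $\|\dir(\vec{n}_k)-\vec{a}\|_\infty\le 1/k$. We have assumed $\vecInd{u}{0}=\vec{0}$, so $S_{\setX}$ is closed under multiplication by positive integers. Scaling does not change the direction, so we may replace $\vec{n}_k$ by $c_k\vec{n}_k$ with $c_k$ large enough that $x_k\defeq w^{\setX}_{\vec{n}_k}$ satisfies $|x_k|\ge k$. Then $x_k\in X$, $|x_k|\ge k$, and $\dir(\vec{n}_k)\to\vec{a}$.

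\textbf{Step 2: near-optimal paths and \autoref{claim: Approximating every path by LP}.} For each $k$ choose $y_k\in Y$ and an edit path $p_k:x_k\leadsto y_k$ with
\[
\tfrac{\wgt(p_k)}{|p_k|}\le\inf_{y\in Y}\ned(x_k,y)+\tfrac1k .
\]
Only $\varepsilon$-optimality is needed here, so we do not need the infimum to be attained. Since $y_k=w^{\setY}_{\vec{\beta}}$ for some $\vec{\beta}\in S_{\setY}$, the claim applies. It yields $\vec{a}'_k\in D_{\setX}$ with $\|\vec{a}'_k-\dir(\vec{n}_k)\|_\infty\le M/|x_k|$ and
\[
F(\vec{a}'_k)\le\inf_{y\in Y}\ned(x_k,y)+\tfrac1k+\tfrac{C}{|x_k|}.
\]
Both $1/k$ and $C/|x_k|$ tend to $0$. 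Moreover $M/|x_k|\to0$ and $\dir(\vec{n}_k)\to\vec{a}$, so $\vec{a}'_k\to\vec{a}$.

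\textbf{Step 3: passing to the limit.} By continuity of $F$ on $D_{\setX}$ (\autoref{claim:LP-continuous}),
\[
F(\vec{a})=\lim_k F(\vec{a}'_k)\le\liminf_k\inf_{y\in Y}\ned(x_k,y)\le\limsup_k\inf_{y\in Y}\ned(x_k,y).
\]
Since $|x_k|\ge k$, the sequence $(x_k)$ is admissible in \autoref{claim: Equivalent definition to asymptotic relative distance}, so the last quantity is at most $\nedar(X,Y)$. Taking the supremum over $\vec{a}\in D_{\setX}$ completes the argument.

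The main obstacle is Step 1. It needs two things: that every $\vec{n}\in S_{\setX}$ actually yields a word $w^{\setX}_{\vec{n}}\in X$, which amounts to identifying $X$ with its Parikh parametrization; and that arbitrarily long words of nearly the right direction exist. The latter is exactly where the normalization $\vecInd{u}{0}=\vec{0}$ is used, through closure of $S_{\setX}$ under scaling. I would make both points explicit. The remaining steps are direct applications of the stated claims.
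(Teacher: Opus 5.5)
Your proof is correct and follows the paper's argument. Like the paper, you pick words $x_k\in X$ of unbounded length whose directions converge to $\vec{a}$, take near-optimal edit paths, apply \autoref{claim: Approximating every path by LP}, and pass to the limit using continuity of $F$ together with the sequence characterization of $\nedar$. Your write-up is slightly tidier than the paper's in two places. First, you explicitly justify the existence of arbitrarily long words via closure of $S_{\setX}$ under scaling. Second, you observe that $\vec{a}'_k\to\vec{a}$ directly, so continuity of $F$ at $\vec{a}$ suffices, rather than an $\varepsilon$--$\delta$ comparison of $F(\vec{\alpha}_N)$ with $F(\vec{\alpha}'_N)$.
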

            \begin{proof}

            Fix $\vec{a} \in D_{\setX}$, we choose a sequence {$\vec{\psi_N}$ of elements in $S_{\setX}$} such that $|x_N|=l^{\setX}_\vec{\psi_N} \to \infty$ and $\vec{\alpha_N} = \dir(\vec{\psi_N})$ satisfy $\lim_{N \to \infty}\vec{\alpha_N} = \vec{a}$, where $x_N =w^{\setX}_{\vec{\psi_N}} \in X$. Note that this sequence exists by the definition of $D_{\setX}$.\\
            \noindent
            For each $N$ let $y_N \in Y$ such that
            \[
                \ned(x_N,y_N) \le \inf_{y \in Y} \ned(x_N, y) + \e
            \]
            and let $p_N$ be an edit path from $x_N$ to $y_N$ such that $\tfrac{\wgt(p_N)}{|p_N|} = \ned(x_N,y_N)$, thus
            \begin{equation}\label{eq: xn yn ned close to inf xn}
                \inf_{y\in Y} \ned(x_N,y) \ge \tfrac{\wgt(p_N)}{|p_N|} - \e
            \end{equation}
            
            Apply Claim~\ref{claim: Approximating every path by LP} to $p_N$ to get $\vec{a'_N}$ that satisfy
            \begin{equation}\label{eq: use of claim on ep  to LP}
                \tfrac{\wgt(p_N)}{|p_N|} +  \tfrac{C}{|x|} \ge \min_{\pi \in \Pi} \OPT_{\vec{\alpha'_N},\pi} = F(\vec{\alpha'_N}) \qquad \| \vec{\alpha'_N}- \vec{\alpha_N}\|_\infty \le \tfrac{M}{|x_N|} 
            \end{equation}
            By \autoref{claim:LP-continuous}, $\vec{a} \to \min_{\pi \in \Pi} \OPT_{\vec{a},\pi}$ is continuous.\\
            Let $\e >0$, then there exists $\delta>0$ that satisfy
            \[
                |\vec{a}_1 -\vec{a}_2| < \delta \implies |\min_{\pi \in \Pi} \OPT_{\vec{a}_2,\pi} - \min_{\pi \in \Pi} \OPT_{\vec{a}_1,\pi}| < \e.
            \]
            From some point onward $\tfrac{M}{|x_N|} < \delta$ and $\tfrac{C}{|x_N|} < \e$ thus
            \[
                \|\vec{\alpha'_{N}}- \vec{\alpha_{N}}\|_{\infty} \le \tfrac{M}{|x_N|} < \delta.
            \]
            Hence
            \begin{equation}\label{eq: OPT dN is epsilon close to OPT d'N}
                F(\vec{\alpha_N}) = \min_{\pi \in \Pi} \OPT_{\vec{\alpha_N},\pi} \le \min_{\pi \in \Pi} \OPT_{\vec{\alpha'_N},\pi} +\e = F(\vec{\alpha'_N}) + \e.
            \end{equation}
            
            Meaning that from some point onward
            \begin{equation}\label{eq: inf ned xN to OPT dN}
                \begin{split}
                 \inf_{y \in Y} \ned(x_N,y) &\ge^\eqref{eq: xn yn ned close to inf xn} \tfrac{\wgt(p_N)}{|p_N|} - \e\\ 
                        &\ge^\eqref{eq: use of claim on ep  to LP} F(\vec{\alpha'_N})  -2\e\\
                        &\ge^\eqref{eq: OPT dN is epsilon close to OPT d'N} F(\vec{\alpha_N}) -3\e. 
                \end{split}
            \end{equation}
            \noindent
            By taking $\limsup_{N\to \infty}$ on both sides and as $\vec{\alpha} \to F(\vec{\alpha})$ is continuous we get
            \begin{align*}
                \nedar(X,Y) &= \sup_{\substack{(x'_N)^\infty_{N=1} \subseteq X \\ |x'_N|\to \infty}} \limsup_{N \to \infty } \inf_{y \in Y} \ned(x'_N ,y)  \\
                &\ge \limsup_{N \to \infty } \inf_{y \in Y} \ned(x_N ,y) \\
                &\ge^\eqref{eq: inf ned xN to OPT dN} \limsup_{N \to \infty } \left\{ F(\vec{\alpha_N}) -3\e \right\}\\
                &\ge^{\ref{claim:LP-continuous}} F(\limsup_{N \to \infty}\vec{\alpha_N}) -3\e\\
                &=F(\vec{a}) -3\e.
            \end{align*}
            As this is true for every $\vec{a} \in D_{\setX}$ and $\e$ is arbitrarily small
            \[
                \nedar(X,Y) \ge F(\vec{a}) = \sup_{\vec{a} \in D_{\setX}} \min_{\pi \in \Pi} \OPT_{\vec{a},\pi}
            \]     
                
            \end{proof}

\subsubsection{From LP to edit paths}\label{sec:LPB}

{To transform a solution of the linear program into an edit path, we use the following claim.
Before stating it, we provide some intuition for the constants whose existence it guarantees.}

$D$ will be an integer that transforms the solution of the linear program to integers, so we will be able to interpret them as walks in the graph family $\{\cyclicedgraph{i,j}\}$. $L_0$ and $L$ will correspond to lengths of an edit path between {some $x\in X$ with direction} $\vec{a}$ and the direction chosen in the LP solution. $L_0$ will be  determined with respect to $\e$ to ensure that the edit path (and $x'$ as a consequence) are large enough to neglect the bounded edits needed to transform the word read in the walks to $x' \in X$. Lastly, $K_{\text{gap}}$ will be a fixed bound on the number of edit operations needed to transform the concatenations of the walks to $x'$.

\begin{claim}\label{claim: from LP to path}
   Fix $\vec{a}\in D_{\setX}$. Let $\pi^\bullet\in\Pi$ be the interleaving that attains $F(\vec{a}) = \min_{\pi\in \Pi} \OPT_{\vec{a},\pi}$. Let $(f^\bullet, \tau^\bullet, \lambda^\bullet)$ be an optimal solution to $\LP_{\vec{a},\pi^\bullet}$ with value $\OPT_{\vec{a},\pi^\bullet}$.
   There exist constants $D \in \mathbb{N}_{\ge 1}$ and $K_{\text{gap}} \ge 0$, depending only on the graph family $\{\cyclicedgraph{i,j}\}$, $\Pi$, and the generators $U,V$, such that for every $\varepsilon > 0$, there exists an integer $L_0$ where for all integers $L \ge L_0$ that are multiples of $D$, we can construct:
   \begin{itemize}
       \item A word $x'  = w^\setX_\vec{\psi'} \in X$ for some $\psi'\in S_\setX$,
       \item A word $y' \in Y$,
       \item An edit path $p': x' \leadsto y'$,
   \end{itemize}
   satisfying:
   \begin{equation}
       \tfrac{\wgt(p')}{|p'|} \le \OPT_{\vec{a},\pi^\bullet} + \e =  \min_{\pi\in \Pi} \OPT_{\vec{a},\pi} + \e = F(\vec{a}) + \e \label{eq:rigorous_ratio}
   \end{equation}
   and
   \begin{equation}
       \left| |x'| - L\lambda^\bullet \right| \le K_{\text{gap}}. \label{eq:rigorous_length}
   \end{equation}
   Furthermore, the Parikh vector of $x'$ satisfies the explicit bound:
   \begin{equation}
       \| \vec{\psi'} - L \lambda^\bullet \vec{a} \|_\infty \le K_{\text{gap}}. \label{eq:rigorous_parikh}
   \end{equation}
\end{claim}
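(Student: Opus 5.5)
The plan is to turn the fractional optimal flow $(f^\bullet,\tau^\bullet,\lambda^\bullet)$ into an integral flow by scaling by $L$ and rounding. We then decompose each block's flow into closed walks in $\cyclicedgraph{i,j}$, concatenate them along the interleaving $\pi^\bullet$, and finally repair the source and target words with a bounded number of edits so that they land in $X$ and $Y$. All error terms introduced are $O(1)$ in absolute size, hence $O(1/L)$ after normalization. This explains why only $L_0$, and not $D$ or $K_{\text{gap}}$, depends on $\e$.

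\textbf{Integralization.} The LP constraint matrix has integer entries depending only on $\{\cyclicedgraph{i,j}\}$, $\pi^\bullet$, $U$, $V$. The right-hand side, however, involves $\vec a$, which may be irrational, so the optimum cannot simply be taken to be a rational vertex with bounded denominators. I would proceed in two steps.
\begin{enumerate}
\item Replace $\vec a$ by a rational direction $\vec a''$ close to it. By continuity of $\OPT_{\vec a,\pi^\bullet}$ in $\vec a$ (the argument behind Claim~\ref{claim:LP-continuous}), this perturbs the value by at most $\e/2$.
\item Take $N^{i,j}_e=\lfloor L f^{i,j}_e\rfloor$. Flooring breaks flow conservation by at most $|E(\cyclicedgraph{i,j})|$ per vertex. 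We restore it by adding at most a constant number of edges per block, namely closing paths of length at most $\diam(\cyclicedgraph{i,j})$, exactly as in the proof of Claim~\ref{claim: Approximating every path by LP}.
\end{enumerate}
$D$ serves to make $L\tau^\bullet_s$ and the cycle counts integral up to a bounded correction. Concretely, I expect to take $D$ to be the lcm of the $|x_i|$, $|y_j|$ and of the determinants of the generator matrices. After this step, each block carries an Eulerian integral multigraph of total size $L+O(1)$ and weight at most $L\cdot\OPT_{\vec a,\pi^\bullet}+O(1)$.

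\textbf{Stitching into an edit path.} Within each lit block $(i,j)\in\pi^\bullet$, an Eulerian multigraph that is connected can be traversed as a single closed walk. If it is disconnected, its components are joined by connecting paths of bounded length; there are at most $|V(\cyclicedgraph{i,j})|$ components, so this costs $O(1)$ extra edges. By the bijection between walks in $\cyclicedgraph{i,j}$ and edit paths between words of the form $\wSuff{x_i}x_i^*\wPref{x_i}$ and $\wSuff{y_j}y_j^*\wPref{y_j}$, each closed walk reads a power of a rotation of $x_i$ against a power of a rotation of $y_j$. Consecutive blocks in $\pi^\bullet$ share either a row or a column. We glue them by inserting a bounded path that realigns to the vertex $(0,0)$, paying at most $|x_i|+|y_j|$ operations per junction. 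The result is an edit path $p''$ from $x''=x_1^{k_1}\cdots x_n^{k_n}$ to $y''=y_1^{m_1}\cdots y_m^{m_m}$. By source-consumption (Constraint~\ref{cons:src-cons}), $k_i=L\lambda^\bullet a_i+O(1)$. By target-consumption (Constraint~\ref{cons:tgt-cons}), $\vec m=L\sum_s\tau^\bullet_s\vec v_s+O(1)$.

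\textbf{Landing in $X$ and $Y$.} The target vector $L\sum_s\tau^\bullet_s\vec v_s$ is, after the choice of $D$, within $O(1)$ of an element of $S_{\setY}$ with $\vec v_0=\vec 0$. We adjust $y''$ to that element by inserting or deleting $O(1)$ copies of the $y_j$. On the source side, $\vec a''$ is a rational point of $D_{\setX}$, which lies in the cone of $U$. Hence $L\lambda^\bullet\vec a''$ is a nonnegative combination of the $\vec u_q$, and rounding its coefficients gives $\vec\psi'\in S_{\setX}$ with $\|\vec\psi'-L\lambda^\bullet\vec a\|_\infty\le K_{\text{gap}}$, which is \eqref{eq:rigorous_parikh}. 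Summing the lengths then yields \eqref{eq:rigorous_length}. Editing $x''$ into $x'=w^{\setX}_{\vec\psi'}$ and composing with $p''$ costs $O(1)$ operations, so
\[
\tfrac{\wgt(p')}{|p'|}\le\tfrac{L\,\OPT_{\vec a,\pi^\bullet}+K}{L-K}\le\OPT_{\vec a,\pi^\bullet}+\e
\]
for all $L\ge L_0(\e)$, which is \eqref{eq:rigorous_ratio}.

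\textbf{Main obstacle.} I expect the hardest step to be this last rounding into $S_{\setX}$ and $S_{\setY}$. The issue is that $\vec a\in D_{\setX}$ is only a limit of realizable directions. Moreover, points of the cone of $U$ need not lie in the lattice generated by $U$ until they are shifted far into the interior. I would handle this with the same trick as in Claim~\ref{claim: Approximating every path by LP}: add $C_0\sum_q\vec u_q$ extra cycles, which have bounded cost, to absorb negative rounding errors. If the Carath\'eodory-type bound on the coefficients turns out not to be uniform in $\vec a$, I would allow $K_{\text{gap}}$ to be replaced by a bound that is uniform over a finite triangulation of $D_{\setX}$.
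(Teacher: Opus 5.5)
\textbf{Gap: the rational approximation $\vec{a}''$ breaks the Parikh and length bounds.} Your overall construction is the paper's: scale the optimal solution by $L$, read off integral edge multiplicities in each $\cyclicedgraph{i,j}$, take Eulerian circuits through $(0,0)$, concatenate along $\pi^\bullet$, and repair source and target with $O(1)$ edits. The step that fails is where you use $\vec{a}''$ to land in $X$.

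You obtain $\vec{\psi}'$ by rounding the coefficients of $L\lambda^\bullet\vec{a}''$. So $\vec{\psi}'$ is within $\sum_q\|\vecInd{u}{q}\|_\infty$ of $L\lambda^\bullet\vec{a}''$, which is about $L\lambda^\bullet\|\vec{a}''-\vec{a}\|_\infty$ away from $L\lambda^\bullet\vec{a}$. For any fixed $\vec{a}''\neq\vec{a}$ this distance grows linearly in $L$. Hence \eqref{eq:rigorous_parikh} and \eqref{eq:rigorous_length} cannot hold with an $L$-independent $K_{\text{gap}}$. Moreover, your flow is $\lfloor Lf^\bullet\rfloor$, whose source consumption is $L\lambda^\bullet\vec{a}+O(1)$, as you note. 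So editing $x''$ into $x'$ costs $\Theta(L)$, not $O(1)$. If instead you meant to use an optimal solution of $\LP_{\vec{a}'',\pi^\bullet}$ throughout, the bounds come out relative to its $\lambda''$ and to $\vec{a}''$, again off by $\Theta(L)$.

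The detour is also unnecessary. $\Cone(U)$ is finitely generated, hence closed, so $D_{\setX}\subseteq\Cone(U)$ and $\vec{a}=\sum_q\eta_q\vecInd{u}{q}$ with real $\eta_q\ge 0$. Then $\vec{\psi}'=\sum_q\lfloor L\lambda^\bullet\eta_q\rfloor\vecInd{u}{q}\in S_{\setX}$ and $\|\vec{\psi}'-L\lambda^\bullet\vec{a}\|_\infty\le\sum_q\|\vecInd{u}{q}\|_\infty$, which is exactly the paper's argument. Because you round coefficients rather than the point, no rationality, lattice or interior-shift issue arises. The bound is uniform in $\vec{a}$ with no Carath\'eodory or triangulation argument. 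So the step you flagged as the main obstacle is one line once $\vec{a}''$ is dropped. The target side works the same way with $\sum_s\lfloor L\tau^\bullet_s\rfloor\vecInd{v}{s}$.

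\textbf{Where your route improves on the paper's.} The paper takes $D$ to be a common denominator of $(f^\bullet,\tau^\bullet,\lambda^\bullet)$. But $\vec{a}$ enters Constraint~\ref{cons:src-cons} as the coefficient of $\lambda$, so for irrational $\vec{a}$ the optimal solutions can be irrational. For instance, with $X=Y=a^*b^*$ and $\vec{a}=(t,1-t)$, the unique zero-cost optimum has $f$-values $t$ and $1-t$. For rational $\vec{a}$ the denominator depends on $\vec{a}$, not only on the graphs and generators as the claim asserts. Flooring $Lf^\bullet$ and re-balancing with boundedly many closing paths works for any real optimal solution. It lets you take $D=1$, so the lcm/determinant choice of $D$ you sketch is not needed. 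Your bound $\tfrac{L\,\OPT_{\vec{a},\pi^\bullet}+K}{L-K}$ also correctly allows for composing the repair alignments with $p''$ shortening the path. The paper glosses over this by writing $|p'|=|p_1|+|p_{\text{flow}}|+|p_2|$ as if the composition were a concatenation.
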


\begin{proof}
    Since $\vec{a} \in D_{\setX} = \Cone(U) \cap \Delta_{\setX}$, there exist non-negative coefficients $\eta_1, \dots, \eta_k$ such that:
    \[
        \vec{a} = \sum_{q=1}^k \eta_q \vecInd{u}{q}.
    \]
    The linear program is rational. Thus, there exists a common denominator $D$ for the optimal variables $(f^\bullet, \tau^\bullet, \lambda^\bullet)$. We assume $L$ is a multiple of $D$.\\
    \noindent
    We define $x'$ and $y'$ explicitly using integer linear combinations of the generators. For $x' \in X$ 
    define integer coefficients $\zeta_q = \lfloor L \lambda^\bullet \eta_q \rfloor$ for $1\leq q \leq k$. Since $L$ is a multiple of $D$, $L\lambda^\bullet$ is rational, but not necessarily integer, so we floor the coefficients. Define the Parikh vector $\vec{\psi'} = \sum_{q=1}^k \zeta_q \vecInd{u}{q}$ and define $x' = w^{\setX}_\vec{\psi'}$ thus by definition, $x' \in X$.\\
    \noindent
    We bound the difference between $\vec{\psi'}$ and the LP {source consumption expression} $L \lambda^\bullet \vec{a}$ 
    \[
        \vec{\delta}_{\setX} \defeq L \lambda^\bullet \vec{a} - \vec{\psi'} = \sum_{q=1}^k (L \lambda^\bullet \eta_q - \zeta_q) \vecInd{u}{q}.
    \]
    Since $0 \le L \lambda^\bullet \eta_q - \zeta_q < 1$, we have:
    \[
        \| \vec{\delta}_{\setX} \|_\infty \le \sum_{q=1}^k \| \vecInd{u}{q} \|_\infty \defeq B_U.
    \]
    This proves \eqref{eq:rigorous_parikh} with $K_{\text{gap}} \ge B_U$.
    Furthermore, summing the components gives the length bound:
    \[
    \begin{split}
        \big| |x'| - L \lambda^\bullet \big| &= \left| \sum^n_{i=1} |x'_i| \cdot \vec{\psi_i}  - L \lambda^\bullet \cdot \sum^n_{i=1} |x'_i|\cdot  \vec{a}_i \right|\\
                    &\le \sum^n_{i=1}|x_i|\cdot \left|\vec{\psi_i}  - L \lambda^\bullet \vec{a}_i\right|   \\
                    &\le n \cdot \max_{1\le i \le n}|x_i| \cdot B_U.
    \end{split}
    \]
    This proves \eqref{eq:rigorous_length}.

    As for $y' \in Y$, define $T_s \defeq L \tau^\bullet_s$. Since $L$ is a multiple of $D$, $T_s \in \bb{N}$.
    Define $\vec{\phi}' \defeq \sum_{s=1}^r T_s \vecInd{v}{s}$ and $y' \defeq w^{\setY}_\vec{\phi'}$.
    Thus $y' \in Y$. Note that $\vec{\phi'}$ equals the LP target consumption expression
    $L \vec{b}$ exactly.

    We interpret $N^{i,j}_e = L f^{\bullet i,j}_e \in \bb{N}$ as edge multiplicities in $\cyclicedgraph{i,j}$.
    To ensure Eulerian connectivity, we add a set of correction edges $E_{\text{corr}}$ connecting each weak connected component to $(0,0)$. The size of this set is bounded by a constant $C_{\text{cyc}}$ depending only on the graphs. Thus every walk in the graphs turns into Eulerian circuit starting at $(0,0)$ and thus an edit path from $x_i$ of some power to $y_j$ of some power. 
    Let $p_{\text{flow}}$ be the concatenated Eulerian cycles according to $\pi^\bullet$ formed by these edges.
    Let $x_{\text{flow}}$ and $y_{\text{flow}}$ be the source/target strings read by $p_{\text{flow}}$.
    
    The Parikh vector of $x_{\text{flow}}$ is the sum of flow demands plus correction edges:
    \[
        x_{\text{flow}} = w^\setX_{L \lambda^\bullet \vec{a} + \vec{\Delta}_{\text{cyc}}^X}, \quad \text{where } \|\vec{\Delta}_{\text{cyc}}^X\|_\infty \le C_{\text{cyc}}.
    \]
    Similarly for $y_{\text{flow}}$:
    \[
        y_{\text{flow}} = w^\setY_{L \vec{b} + \vec{\Delta}_{\text{cyc}}^Y  }, \quad \text{where } \|\vec{\Delta}_{\text{cyc}}^Y\|_\infty \le C_{\text{cyc}}.
    \]

    We now construct $p'$ as the concatenation $x' \xrightarrow{p_1} x_{\text{flow}} \xrightarrow{p_{\text{flow}}} y_{\text{flow}} \xrightarrow{p_2} y'$.
    For $p_1$ we take pure edit path between $x'$ and $x_{\text{flow}}$.
    \[
        \wgt(p_1) = \ed(x', x_{\text{flow}}).
    \]
    \begin{align*}
        \ed(x', x_{\text{flow}}) &\le \sum_{i=1}^n |x_i| \cdot | (\vec{\psi'})_i - ( L \lambda^\bullet \vec{a} + \vec{\Delta}_{\text{cyc}}^X)_i |\\
                    &= \sum_{i=1}^n |x_i| \cdot | (L \lambda^\bullet \vec{a} - \vec{\delta}_{\setX})_i - ( L \lambda^\bullet \vec{a} + \vec{\Delta}_{\text{cyc}}^X)_i |\\
                    &= \sum_{i=1}^n |x_i| \cdot (\delta_X + \vec{\Delta}_{\text{cyc}}^X)_i \le C_1
    \end{align*}
    for some constant $C_1$.
    \noindent
    For $p_2$ we also take a pure edit path between $y_{\text{flow}}$ and $y'$. Difference in the Parikh vectors is 
    \[
         L \vec{b} + \vec{\Delta}_{\text{cyc}}^Y - \vec{\phi}' =  L \vec{b} + \vec{\Delta}_{\text{cyc}}^Y - L \vec{b} =\vec{\Delta}_{\text{cyc}}^Y
    \]
    Thus using the same arguments $\wgt(p_2) \le C_2$ for some constant $C_2$.
    \noindent
    As for our total path $p'$ 
    \begin{align*}
        \wgt(p') &= \wgt(p_1) + \wgt(p_{\text{flow}}) + \wgt(p_2) \\
                  &\le C_1 + (L \cdot \OPT_{\vec{a},\pi^\bullet} + C_{\text{cyc}}) + C_2 \\
                  &= L \cdot \OPT_{\vec{a},\pi^\bullet} + K_{\text{wgt}}.
    \end{align*}
    where $K_{\text{wgt}} = C_1 + C_2 + C_{\text{cyc}}$
    and
    \begin{align*}
        |p'| &= |p_1| + |p_{\text{flow}}| + |p_2| \\
                  &\ge |p_{\text{flow}}| \\
                  &= L + C_{\text{cyc}} \ge L.
    \end{align*}
    Therefore:
    \[
        \tfrac{\wgt(p')}{|p'|} \le \tfrac{L \cdot \OPT_{\vec{a},\pi^\bullet} + K_{\text{wgt}}}{L} = \OPT_{\vec{a},\pi^\bullet} + \tfrac{K_{\text{wgt}}}{L}.
    \]
    Choosing $L \ge \tfrac{K_{\text{wgt}}}{\varepsilon}$ ensures the ratio is 
    \[
        \tfrac{\wgt(p')}{|p'|}\le \OPT_{\vec{a},\pi^\bullet} + \e = \min_{\pi \in \Pi} \OPT_{\vec{a},\pi} + \e = F(\vec{a}) + \e.
    \]
\end{proof}

\begin{lemma}
    $\nedar(X,Y) \le \sup_{\vec{a} \in D_{\setX}} \min_{\pi} \OPT_{\vec{a},\pi} = \sup_{\vec{a} \in D_{\setX}} F(\vec{a})$.
\end{lemma}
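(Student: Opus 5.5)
The plan is to prove this upper bound by matching every long word of $X$ with a nearby word produced by \autoref{claim: from LP to path}, whose cost is controlled by $F$ at a limiting direction. By \autoref{claim: Equivalent definition to asymptotic relative distance}, it suffices to fix an arbitrary sequence $(x_N)\subseteq X$ with $|x_N|\to\infty$ and show $\limsup_N \inf_{y\in Y}\ned(x_N,y)\le \sup_{\vec a\in D_{\setX}}F(\vec a)$. We first pass to a subsequence along which $\inf_{y}\ned(x_N,y)$ converges to this $\limsup$. Since every word of $X$ is of the form $x_N=w^{\setX}_{\vec\psi_N}$ with $\vec\psi_N\in S_{\setX}$, its direction $\vec\alpha_N=\dir(\vec\psi_N)$ lies in $D_{\setX}$. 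The set $D_{\setX}$ is a compact polytope (it lies in $\Delta_{\setX}$ and is closed), so we pass to a further subsequence with $\vec\alpha_N\to\vec a\in D_{\setX}$. It then remains to show $\limsup_N\inf_y\ned(x_N,y)\le F(\vec a)+O(\e)$ for every $\e>0$.

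Fix $\e>0$ and apply \autoref{claim: from LP to path} to $\vec a$. This yields an optimal interleaving $\pi^\bullet$, an optimal solution with $\lambda^\bullet\in(0,1]$, and constants $D$, $K_{\text{gap}}$ and $L_0$. For each large $N$ we choose $L_N$ to be the multiple of $D$ closest to $|x_N|/\lambda^\bullet$. Then $L_N\ge L_0$ eventually, and $|L_N\lambda^\bullet-|x_N||\le D$. The claim then produces $x'_N=w^{\setX}_{\vec\psi'_N}\in X$, a word $y'_N\in Y$ and an edit path $p'_N:x'_N\leadsto y'_N$ satisfying two bounds:
\[
\tfrac{\wgt(p'_N)}{|p'_N|}\le F(\vec a)+\e, \qquad \|\vec\psi'_N-L_N\lambda^\bullet\vec a\|_\infty\le K_{\text{gap}}.
\]

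The key estimate is that $x_N$ and $x'_N$ are close in edit distance relative to $|x_N|$. We have $\vec\psi_N=|x_N|\vec\alpha_N$, so
\[
\|\vec\psi_N-\vec\psi'_N\|_\infty\le |x_N|\,\|\vec\alpha_N-\vec a\|_\infty+D\|\vec a\|_\infty+K_{\text{gap}}=o(|x_N|).
\]
Since both words consist of blocks $x_1^{*}\cdots x_n^{*}$, adjusting the exponent of each block by insertions or deletions gives
\[
\ed(x_N,x'_N)\le\sum_i|x_i|\,|(\vec\psi_N)_i-(\vec\psi'_N)_i|=o(|x_N|).
\]
Let $q_N$ be an optimal path for this edit distance. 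The concatenation $q_N p'_N$ is an edit path from $x_N$ to $y'_N$. Since $|p'_N|\ge L_N\ge |x_N|-D$, we get
\[
\ned(x_N,y'_N)\le\frac{\wgt(q_N)+\wgt(p'_N)}{|q_N|+|p'_N|}\le\frac{\wgt(p'_N)}{|p'_N|}+\frac{\wgt(q_N)}{|p'_N|}\le F(\vec a)+\e+o(1).
\]
Taking $\limsup$ and then letting $\e\to0$ gives $\limsup_N\inf_y\ned(x_N,y)\le F(\vec a)\le\sup_{D_{\setX}}F$, as required.

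The step I expect to be the main obstacle is the uniformity in the claim. The claim's constants must not depend on $L$: $D$ and $K_{\text{gap}}$ depend only on $\vec a$'s fixed optimal solution and the generators, and $L_0$ depends only on $\e$. This matters because the same LP solution is reused for all $N$, while only $L_N$ varies. I also need $\lambda^\bullet>0$ so that $L_N$ is well defined, which holds since $\lambda$ ranges over $(0,1]$. Finally, the argument needs the convergence $\vec\alpha_N\to\vec a$ to make the rescaling error $|x_N|\,\|\vec\alpha_N-\vec a\|_\infty$ sublinear. Unlike the lower-bound direction, this one does not need the continuity of $F$ from \autoref{claim:LP-continuous}.
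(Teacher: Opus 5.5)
Your proof is correct in substance, but it takes a different route from the paper's. The paper applies \autoref{claim: from LP to path} anew at each realized direction $\vec{\alpha}_k=\dir(\vec{\psi}_k)$, using an optimal LP solution at $\vec{\alpha}_k$ and choosing $L_k\lambda_k\approx|x_k|$. This keeps $\ed(x_k,x'_k)$ bounded by a constant, and the paper then passes to the limit direction via continuity of $F$. You invoke the claim once, at the limit direction $\vec{a}$, and instead pay $|x_N|\,\|\vec{\alpha}_N-\vec{a}\|_\infty=o(|x_N|)$ in edit distance, which vanishes after normalization.

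Your route has a real advantage. The constants $D$, $K_{\text{gap}}$ and $L_0$ come from a single LP solution, so you need no uniformity over directions. The paper's argument tacitly needs them uniform in $k$. The claim's statement asserts this, but in its proof $D$ is a common denominator of the optimal solution at the given direction. That denominator may grow with $k$, in which case the correction $\ed(x_k,x'_k)\le n\max_i|x_i|(D+K_{\text{gap}})$ is no longer bounded.

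The paper's route has its own advantage: it only applies the claim at rational directions. There the LP data is rational and the common-denominator construction is valid. Your $\vec{a}$ may be irrational, since $\vec{a}$ enters the LP as coefficients of $\lambda$. The optimum of $\LP_{\vec{a},\pi^\bullet}$ then need not be rational, so the claim's proof does not literally cover the point where you use it.

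The fix stays within your idea. Fix $N_0$ such that $\|\vec{\alpha}_N-\vec{\alpha}_{N_0}\|_\infty\le\delta$ for all $N\ge N_0$, and apply the claim at the rational direction $\vec{\alpha}_{N_0}\in D_{\setX}$. Your estimate then gives $\ed(x_N,x'_N)\le C\delta|x_N|+O(1)$. Hence $\limsup_N\inf_{y}\ned(x_N,y)\le F(\vec{\alpha}_{N_0})+\e+O(\delta)\le\sup_{D_{\setX}}F+\e+O(\delta)$. Alternatively, extend the claim to real solutions by rounding the circulation $Lf^\bullet$ edgewise to an integral one.

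Either way you do not need continuity of $F$, as you observed. It is in fact superfluous in the paper's argument too, since $F(\vec{\alpha}_k)\le\sup_{D_{\setX}}F$ holds directly.

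One slip, shared with the paper: concatenating the operation sequences $q_N$ and $p'_N$ gives an edit path from $x_Nx'_N$ to $x'_Ny'_N$, not from $x_N$ to $y'_N$. You should compose the two alignments through $x'_N$ instead. The composed path $r$ satisfies $\wgt(r)\le\wgt(q_N)+\wgt(p'_N)$ and $|r|\ge|p'_N|-\wgt(q_N)$, which still yields $F(\vec{a})+\e+o(1)$. Alternatively, use the triangle inequality for $\ned$ together with $\ned(x_N,x'_N)\le\ed(x_N,x'_N)/|x_N|=o(1)$.
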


\begin{proof}
    Let $\e>0$ and let $\{x_k = w^\setX_\vec{\psi_k}\}_{k=1}^\infty \subseteq X$ be a sequence such that
    \begin{equation}\label{eq: x_k gets nedar}
        \nedar(X,Y) \le \limsup_{k \to \infty} \inf_{y \in Y} \ned(x_k, y) + \e 
    \end{equation}
    and $|x_k| \to \infty$. Since $D_{\setX}$ is compact, we pass to a subsequence such that $\vec{\alpha_k} \defeq \dir(\vec{\psi_k})$ satisfy
    \begin{equation}\label{eq: converging dir seq}
        \lim_{k \to \infty } \vec{\alpha_k} = \vec{a^\bullet} \in D_{\setX}
    \end{equation}
    \noindent
    As $|x_k|\to \infty$ there exists some $N$ such that for each $k \ge N$, given the optimal LP solution $(f_k, \tau_k, \lambda_k)$ for $\vec{\alpha_k}$, $\frac{|x_k|}{\lambda_k} - D$ is large enough to satisfy all assumption on $L$ assumed in \autoref{claim: from LP to path} besides being a multiply of $D$.
    
    We apply \autoref{claim: from LP to path} with direction $\vec{\alpha_k}$ and parameter $\e$.
    We choose $L_k$ to be the unique multiple of $D$ satisfying:
    \begin{equation}\label{eq: Choice of L_k}
        \tfrac{|x_k|}{\lambda_k} - D < L_k \le \tfrac{|x_k|}{\lambda_k}.
    \end{equation}
    This choice ensures:
    \begin{equation}\label{eq: x_k closr to L_k lambda_k}
        0 \le |x_k| - L_k \lambda_k < D \lambda_k \le D.
    \end{equation}
    
    \autoref{claim: from LP to path} yields $x'_k \in X$, $y'_k \in Y$, and a path $p'_k: x'_k \leadsto y'_k$ satisfying
    \begin{equation}\label{eq: cost of p'k near OPT d_k}
        \tfrac{\wgt(p'_k)}{|p'_k|} \le \min_{\pi \in \Pi} \OPT_{\vec{\alpha_k},\pi} + \e = F(\vec{\alpha_k}) + \e.
    \end{equation}
    
    We construct a path $p_{\text{total}}$ from $x_k$ to $y'_k$ by concatenating the edit path $p_k: x_k \leadsto x'_k$ with $p'_k$.
    First, we bound $\wgt(p_k) = \ed(x_k, x'_k)$.
    Using the Parikh bound from \autoref{claim: from LP to path} (Eq \ref{eq:rigorous_parikh}):
    \begin{equation}\label{eq: claim bound on r'_k}
        \| \vec{\psi}_k' - L_k \lambda_k \vec{\alpha_k} \|_\infty \le K_{\text{gap}}. 
    \end{equation}
    
    Since $x_k$ is exactly on direction $\vec{\alpha_k}$, $\vec{\psi_k} = |x_k| \vec{\alpha_k}$, we get
    \begin{align*}
        \| \vec{\psi_k} - \vec{\psi'_k}\|_\infty 
        &\le
        \| |x_k| \cdot \vec{\alpha_k} - L_k \cdot \lambda_k \cdot \vec{\alpha_k} \|_\infty + \| L_k \cdot \lambda_k \cdot \vec{\alpha_k} - \vec{\psi'_k} \|_\infty \\
        &\le^\eqref{eq: claim bound on r'_k} (|x_k| - L_k \lambda_k) \|\vec{\alpha_k}\|_\infty + K_{\text{gap}} \\
        &\le^\eqref{eq: x_k closr to L_k lambda_k} D  + K_{\text{gap}}.
    \end{align*}
    Let $B = n \cdot \max_{i} |x_i| \cdot (D + K_{\text{gap}})$. Then $\wgt(p) = \ed(x_k, x'_k) \le B$. The total path $p_{\text{total}}$ has:
    \begin{equation} \label{eq: cost of p total}
        \begin{split}
        \wgt(p_{\text{total}}) &= \wgt(p_k) + \wgt(p'_k) \le B + \wgt(p'_k).\\
        |p_{\text{total}}| &= |p_k| + |p'_k| \ge |p'_k|.
        \end{split}
    \end{equation}
    Thus we get
    \begin{align*}
        \inf_{y \in Y} \ned(x_k, y) &\le \ned(x_k, y'_k) \\ 
            &\le\tfrac{\wgt(p_{\text{total}})}{|p_{\text{total}}|}\\
            &\le^\eqref{eq: cost of p total} \tfrac{B + \wgt(p'_k)}{|p'_k|} \\
            &=\tfrac{B}{|p'_k|} + \tfrac{\wgt(p'_k)}{|p'_k|} \\
            &\le^\eqref{eq: cost of p'k near OPT d_k} \tfrac{B}{|p'_k|} + F(\vec{\alpha_k}) + \e.
    \end{align*}
    
    As $k \to \infty$, we have $|x_k| \to \infty$, implying $L_k \to \infty$ by \eqref{eq: Choice of L_k}. By \autoref{claim: from LP to path}, $|p'_k| \ge |x'_k| \ge L_k$, so $|p'_k| \to \infty$.
    Thus:
    \begin{equation}\label{eq: limsup of xk is less the limsup of OPT}
        \limsup_{k \to \infty} \inf_{y \in Y} \ned(x_k, y) \le \e + \limsup_{k \to \infty} F(\vec{\alpha_k}) .
    \end{equation}
    And finally
    \begin{align*}
        \nedar(X,Y) &\le^\eqref{eq: x_k gets nedar} \limsup_{k \to \infty} \inf_{y \in Y} \ned(x_k, y) + \e \\
                    &\le^\eqref{eq: limsup of xk is less the limsup of OPT} 2\e + \limsup_{k \to \infty} F(\vec{\alpha_k}) \\
                    &\le^{\ref{claim:LP-continuous}} 2\e + F(\limsup_{k \to \infty}\vec{\alpha_k})\\
                    &=^\eqref{eq: converging dir seq} 2\e + F(\vec{a^\bullet})\\
                    &\le 2\e + \sup_{\vec{a}\in D_X} F(\vec{a}).
    \end{align*}
    And as $\e$ is arbitrarily small
    \[
        \nedar(X,Y) \le \sup_{\vec{a} \in D_X} F(\vec{a}).
    \]
\end{proof}

\claimLPcontinuous*
\begin{proof}
For each fixed $\pi$, the mapping $\vec{a}\mapsto \OPT_{\vec{a},\pi}$ is the value function of a linear program whose right-hand side depends linearly on $\vec{a}$, and is therefore piecewise linear and continuous.
Since $\Pi$ is finite, taking the pointwise minimum preserves continuity.
\end{proof}

\lemLPexp*
\begin{proof}
    
    By \autoref{thm:LP},
    \[
        \nedar(X,Y) = \sup_{\vec{a} \in D_\setX} F(\vec{a}) = \sup_{\vec{a} \in D_\setX} \min_{\pi \in \Pi} \OPT_{\vec{a}, \pi}
    \]
    where $D_\setX \subseteq \nonnegR^k$ is the compact polytope of valid directions for $X$, and $\OPT_{\vec{a}, \pi}$ is the optimal value of the linear program $\LP_{\vec{a}, \pi}$.

   The primal linear program $\LP_{\vec{a}, \pi}$ minimizes a cost subject to constraints where the variable vector $\vec{x}$ must satisfy $M \vec{x} = h(\vec{a})$ and $\vec{x} \ge 0$.
    Here, $M$ is a constraint matrix independent of $\vec{a}$, and $h(\vec{a})$ is an affine function of the direction vector $\vec{a}$.

    By the Strong Duality Theorem for linear programming, for a fixed interleaving $\pi$ and a fixed $\vec{a}$, we have:
    \[
        \OPT_{a, \pi} = \max \{ {h}(\vec{a})^\top \vec{y} \mid M^\top \vec{y} \le \vec{c} \}
    \]
    where $\vec{y}$ is the vector of dual variables and $\vec{c}$ is the vector of primal target coefficients. The dual feasible region $P^*_\pi = \{ \vec{y} \mid M^\top \vec{y} \le \vec{c} \}$ is a rational polyhedron independent of $\vec{a}$. The objective function is linear in $\vec{a}$. 
    
    The maximum of a linear function over a polyhedron is attained at one of its vertices. Let $\mathcal{V}_\pi$ be the set of vertices of the dual polyhedron $P^*_\pi$. We can express $\OPT_{\vec{a}, \pi}$ as the pointwise maximum of a finite set of affine functions of $\vec{a}$:
    \[
        \OPT_{a, \pi} = \max_{\vec{v} \in \mathcal{V}_\pi} \left( h(a)^\top \vec{v} \right)
    \]
    Consequently, the function $\vec{a} \mapsto \OPT_{\vec{a}, \pi}$ is convex and piecewise-linear. The function $F(\vec{a})$ is defined as the pointwise minimum of these convex functions:
    \[
        F(\vec{a}) = \min_{\pi \in \Pi} \max_{\vec{v} \in \mathcal{V}_\pi} \left( h(\vec{a})^\top \vec{v} \right)
    \]
    Therefore $F(\vec{a})$ is a piecewise-linear function. The supremum of a piecewise-linear function over a compact polytope $D_X$ is attained at a vertex of the arrangement formed by the boundaries of the linear regions.
    
    The linear regions of $F(\vec{a})$ are defined by the hyperplanes where two constituent affine functions intersect. Let $\mathcal{L}$ be the collection of all affine functions defining the pieces:
    \[
        \mathcal{L} = \bigcup_{\pi \in \Pi} \{ \ell_{\vec{v}}(a) = h(a)^\top \vec{v} \mid \vec{v} \in \mathcal{V}_\pi \}
    \]
    A vertex of the arrangement is the unique intersection of $k$ hyperplanes (where $k$ is the dimension of the direction space $D_\setX$), where each hyperplane is either:
    \begin{enumerate}
        \item An equality $\ell_1(a) = \ell_2(a)$ for some $\ell_1, \ell_2 \in \mathcal{L}$, or
        \item A boundary constraint of the polytope $D_X$.
    \end{enumerate}
    
    We turn to bound the complexity. Let $N$ be the input size.
    \begin{itemize}
    
        \item The number of interleavings $|\Pi|$ is exponential in $N$.
        
        \item The dimension of the dual variables and constraints  corresponds to the number of constraints and variables in the primal $\LP_{a, \pi}$, which are both polynomial in $N$. Consequently, the number of vertices $|\mathcal{V}_\pi|$ of the dual polyhedron is at most exponential in $N$.
        
        \item The total size of the set of linear functions is $|\mathcal{L}| \le |\Pi| \cdot \max_\pi |\mathcal{V}_\pi|$, which is also exponential in $N$.
    \end{itemize}
    
    The maximum number of vertices in an arrangement of $M$ hyperplanes in $\mathbb{R}^k$ is $O(M^k)$. Where $M = |\mathcal{L}|$ is exponential in $N$, and $k$ (the dimension of the generators of $X$) is linear in $N$. Thus, the number of candidate points for $\sup_{\vec{a} \in D_\setX}F(\vec{a})$ is also exponential in $N$.
    
    The algorithm proceeds by enumerating all such candidate points, evaluating $F(\vec{a})$ at each point (which requires solving $|\Pi|$ linear programs, also an exponential operation), and returning the maximum. The total time complexity is therefore exponential in the input size.
\end{proof}

\end{document}